\documentclass[11pt]{article}

\usepackage[letterpaper, margin=1in]{geometry}

\usepackage{setspace}
\singlespacing

\usepackage{latexsym,amsthm,amsmath,amssymb}
\usepackage{color}
\usepackage{mathrsfs}
\usepackage{hyperref}
\usepackage{url}
\usepackage{pstricks}
\usepackage{soul}
\usepackage{caption}
\usepackage{subcaption}
\usepackage{tikz,ifthen,calc}
\usepackage{stmaryrd}
\usepackage{mdwlist}
\usepackage{paralist}
\usepackage{xspace}
\usepackage{cleveref}
\usepackage{float}
\usepackage{times}

\usetikzlibrary{positioning, fit}
\usetikzlibrary {arrows.meta}
\usetikzlibrary{positioning}
\usetikzlibrary{shapes}
\usetikzlibrary{shapes.symbols,patterns}
\usetikzlibrary{calc,through,backgrounds}
\usetikzlibrary{decorations.pathreplacing,calligraphy}
\usetikzlibrary{plotmarks}

\makeatletter
\newcommand{\xMapsto}[2][]{\ext@arrow 0599{\Mapstofill@}{#1}{#2}}
\def\Mapstofill@{\arrowfill@{\Mapstochar\Relbar}\Relbar\Rightarrow}
\makeatother

\newtheorem{lemma}{Lemma}[section]
\newtheorem{theorem}[lemma]{Theorem}
\newtheorem{corollary}[lemma]{Corollary}
\newtheorem{observation}[lemma]{Observation}

\newtheorem{claim}[lemma]{Claim}

%%%%%%%%%%%%%%%%%%%%%%%%%%%%%%%%

\newcommand{\lowhourglass}{\mathbin{\rotatebox[origin=c]{-90}{$\bowtie$}}}
\newcommand{\hourglass}{\mathrel{\raisebox{1pt}{$\lowhourglass$}}}

\newcommand{\set}[1]{\left\{ #1 \right\}}

\def\FLOW{\mbox{\sf flow}}

\def\Ghalf{G^{1/2}}
\def\Vhalf{V^{1/2}}
\def\Ehalf{E^{1/2}}

\def\calC{\mathscr{C}}

\newcommand{\calE}{\mathscr{E}}
\newcommand{\calF}{\mathscr{F}}
\newcommand{\calS}{\mathscr{S}}

\newcommand{\DRlong}{\textsc{Degree Realization}\xspace}
\newcommand{\DR}{\textsc{DR}\xspace}

\newcommand{\cliquelong}{\textsc{Maximum Clique}\xspace}
\newcommand{\clique}{\textsc{MC}\xspace}
\newcommand{\cliqueDRlong}{\textsc{Maximum Clique Degree Realization}\xspace}
\newcommand{\cliqueDR}{\textsc{MC-DR}\xspace}

\newcommand{\islong}{\textsc{Maximum Independent Set}\xspace}
\newcommand{\is}{\textsc{MIS}\xspace}
\newcommand{\isDRlong}{\textsc{MIS Degree Realization}\xspace}
\newcommand{\isDR}{\textsc{MIS-DR}\xspace}

\newcommand{\vclong}{\textsc{Minimum Vertex Cover}\xspace}
\newcommand{\vc}{\textsc{MVC}\xspace}
\newcommand{\vcDRlong}{\textsc{MVC Degree Realization}\xspace}
\newcommand{\vcDR}{\textsc{MVC-DR}\xspace}

\newcommand{\matchlong}{\textsc{Maximum Matching}\xspace}
\newcommand{\match}{\textsc{MM}\xspace}
\newcommand{\matchDRlong}{\textsc{MM Degree Realization}\xspace}
\newcommand{\matchDR}{\textsc{MM-DR}\xspace}

\newcommand{\domsetlong}{\textsc{Minimum Dominating Set}\xspace}
\newcommand{\domset}{\textsc{MDS}\xspace}
\newcommand{\domsetDRlong}{\textsc{MDS Degree Realization}\xspace}
\newcommand{\domsetDR}{\textsc{MDS-DR}\xspace}

\newcommand{\InvMatch}{\textbf{Invert}\xspace}

\def\inline#1:{\par\vskip 3pt\noindent{\bf #1:}\hskip 10pt}
\def\midinline#1:{\par\noindent{\bf #1:}\hskip 10pt}
\def\dnsinline#1:{\par\vskip -7pt\noindent{\bf #1:}\hskip 10pt}

\sloppy

%%%%%%%%%%%%%%%%%%%%%%%%%%%%%%%%%%%%%%%%%%%

\title{\textbf{Efficient 
Optimized Degree Realization: \\
Minimum Dominating Set \& Maximum Matching\footnote{The results have appeared earlier in ``Igor Kalinichev, Efficient Optimized Degree Realization:
Minimum Dominating Set \& Maximum Matching, M.Sc. Thesis, The Weizmann Institute of Science, August 2025."}
}
%\thanks{Supported by US-Israel BSF grant 2022205.}
}

\author{Amotz Bar-Noy%
\thanks{City University of New York (CUNY), USA.
Email: amotz@sci.brooklyn.cuny.edu}
\and
Igor Kalinichev%
\thanks{The Weizmann Institute of Science, Rehovot, Israel.
Email:  igor.kalinichev@weizmann.ac.il}
\and
David Peleg%
\thanks{The Weizmann Institute of Science, Rehovot, Israel.
Email: david.peleg@weizmann.ac.il}
\and
Dror Rawitz%
\thanks{Bar Ilan University, Ramt-Gan, Israel. Email: dror.rawitz@biu.ac.il}
}

\begin{document}

\begin{titlepage}
    
\maketitle

\begin{abstract}
The \DRlong problem requires, given a sequence $d$ of $n$ positive integers, 
to decide whether there exists a graph whose degrees correspond to $d$, 
and to construct such a graph if it exists.
A more challenging variant of the problem arises when $d$ has many different realizations, 
and some of them may be more desirable than others.
We study \emph{optimized realization} problems in which the goal is to compute a realization that optimizes some quality measure.
Efficient algorithms are known for the problems of finding a realization with the maximum clique, the maximum independent set, or the minimum vertex cover. 
In this paper, we focus on two problems for which such algorithms were not known. The first is the \DRlong with \domsetlong problem, where the goal is to find a realization whose minimum dominating set is minimized among all the realizations of the given sequence $d$.
The second is the \DRlong with \matchlong problem, where the goal is to find a realization 
with the largest matching among all the realizations of $d$.
We present polynomial time realization algorithms for these two open problems.

A related problem of interest and importance is \emph{characterizing} the sequences 
with a given value of the optimized function. 
This leads to an efficient computation of the optimized value
without providing the realization that achieves that value.
For the \matchlong problem, a succinct characterization of degree 
sequences with a maximum matching of a given size was known. 
This paper provides a succinct characterization of sequences with minimum dominating set of a given size.
\end{abstract}

\def\thepage{}

\end{titlepage}

\pagenumbering{arabic}

%%%%%%%%%%%%%%%
\newpage

%%%%%%%%%%%%%%%%%%%%%%%%%%%%%%%%%%%%%%%%%%%%%%%%%%%%%%%%%%%%%%%%%%%%%%%%%%%%%%%

\section{Introduction}

%%%%%%%%%%%%%%%%%%%

Given a non-increasing sequence $d=(d_1,\ldots,d_n)$ of positive integers, 
the \DRlong (\DR) problem requires to decide if $d$ is the degree sequence 
of some $n$-vertex (simple undirected) graph $G=(V,E)$, 
that is, $\deg_G(i)=d_i$ for every $i \in [1,n]$, and to construct such a graph if it exists. 
In such a case, we say that $d$ is \emph{graphic}.
For instance, the sequence $d = (4,3,2,1,1)$ cannot be realized since $\sum_i d_i$ is odd,
and $d' = (4,3,1,1,1)$ cannot be realized despite the fact that $\sum_i d'_i$ is even.
In contrast, $d'' = (4,3,2,2,1)$ is graphic.

The two key questions studied extensively in the past concern identifying characterizations 
(or, necessary and sufficient conditions) for a sequence to be graphic, and developing 
effective and efficient algorithms for finding a realizing graph for a given sequence if it exists.
A necessary and sufficient condition for a given sequence of integers to be
graphic (also implying an 
$O(n)$ decision algorithm) was presented by Erd\H{o}s and Gallai in~\cite{EG60}.
(For alternative proofs see~\cite{AT94,choudum86,DF05,TT08,TVW10,TV03}.)
Havel~\cite{Havel55} and Hakimi~\cite{Hakimi62} 
described an $O(\sum_i d_i)$-time 
algorithm that given a sequence $d$ of integers proves that the given sequence is not graphic or 
computes an $m$-edge graph realizing it, where 
$m ~=~ \frac{1}{2} \sum_{i=1}^n d_i$.

A more challenging variant of the problem arises when the given sequence has many possible realizing graphs, but some realizations are more desirable
than others in various ways. In such a case, it is of interest to look for a realizing graph that also optimizes some 
quality measure. Hereafter, we refer to such problems as \emph{optimized realization} problems.

For example, let us consider the classical \cliquelong (\clique) problem. For a graph $G = (V,E)$, a clique is a vertex set $Q \subseteq V$ such that $(v,w) \in E$ for every $v, w \in Q$. The size of the maximum clique in $G$ is denoted 
$\clique(G) = \max \{|Q| \mid Q \text{ is a clique in } G\}$.
It may be desirable to find, for a given graphic sequence $d$, a realizing graph $G$ with the maximum possible clique. 
The resulting optimized realization problem is formally defined as follows. Letting $\clique(d) = \max \{\clique(G) \mid G \text{ is a realization of } d\}$, 
the \cliqueDRlong (\cliqueDR) problem requires to find a realizing graph $G$ attaining $\clique(d)$.

Similarly, one may define optimized realization problems corresponding to 
other graph optimization problems, 
including \islong, \vclong, 
\matchlong, 
and \domsetlong.
Their optimal values on a given graph $G$ are defined as $\is(G)$, $\vc(G)$, 
$\match(G)$, 
and $\domset(G)$, respectively. 
The corresponding optimized realization problems on degree sequences are named 
\isDRlong (\isDR), \vcDRlong (\vcDR), 
\matchDRlong (\matchDR), 
and \domsetDRlong (\domsetDR),
and their optimal values on a given sequence $d$ are denoted by 
$\is(d)$, $\vc(d)$, 
$\match(d)$, 
and $\domset(d)$.

As is well known, the graph versions of 
most of these optimization problems are NP-hard. 
In contrast, the optimized realization versions of the Maximum Clique, Minimum Vertex Cover 
and Maximum Independent Set problems have been shown to be polynomial-time solvable.
Rao~\cite{Rao79} gave a characterization of sequences which can be realized by a graph 
containing $K_\ell$ (i.e., a clique of size $\ell$). This result was based on 
a phenomenon collectively known as the \emph{prefix lemma}, which for MC says that
if $d$ has a realization containing $K_\ell$, then it has a realization such that 
the $\ell$ vertices of maximum degree induce a clique $K_\ell$.
This result implies a polynomial-time algorithm for \cliqueDR.
Recalling that \is is equivalent to \clique in the complement graph, and since the set of all the vertices that are not in an independent set are a vertex cover, it follows that polynomial-time algorithms also exist for \isDR and \vcDR.
Note that a prefix lemma also applies to \vc.
This prefix property is satisfied also by \domset~\cite{GHR18} and \match~\cite{GJL99}. 
However, so far it was not known how to exploit this property in order to derive a polynomial time 
algorithm for MM-DR or for MDS-DR. These two problems were handled in some special cases
in~\cite{BockRautenbach19} and in~\cite{GHR18} respectively, but the general problems were left open.

\paragraph{Our results.} 
This paper provides polynomial time realization algorithms for both \domsetDR and \matchDR. 
The algorithm for \domsetDR makes use of an existing prefix lemma~{\cite{GHR18}}
while the algorithm for \matchDR makes use of a stronger version of an existing prefix lemma~{\cite{GJL99}}, 
established in the current paper.

In addition, we develop Erd\H{o}s-Gallai like characterizations for \domsetDR. 
These characterizations can be used to efficiently compute the size of the minimum dominating set of a sequence without providing a realization by searching for the size of the minimum dominating set.
Interestingly, our characterizations for the \domsetDR problem are based on
the construction and correctness of our realization algorithm, 
and in turn, the characterizations help us reducing the complexity of the realization 
algorithm, because the realization algorithm does not need to search for the size of 
the minimum dominating set.
A similar complexity reduction is possible also for the \matchDR problem  
for which characterizations are already known~\cite{EKMT24}.

As mentioned above, there are several known algorithms for finding a graph $G$ realizing $d$, including the well-known Havel--Hakimi algorithm~\cite{Havel55,Hakimi62}.
For our purposes, though, we need to use a somewhat lesser-known but highly versatile algorithm due to Fulkerson, Hoffman and McAndrew~{\cite{FHM1965}}, hereafter named the FHM algorithm, which is based on 
\begin{inparaenum}[(i)]
\item 
realizing $(d,d)$ by a \emph{bipartite} graph $\hat{G}$,
\item converting $\hat{G}$ to a half integral general (non-bipartite) realization $G^\omega$ for $d$, and
\item rounding $G^\omega$ to an integral general realization $G$.
\end{inparaenum}
Our approach is based on modifying step (i) so that the bipartite realization $\hat{G}$ has optimal MDS (or MM), and preserving this property during the transformations of steps (ii) and (iii). The challenging obstacle is that the rounding process of step (iii) is oblivious to the issue of MDS (or MM) size, so a bipartite $\hat{G}$ with small MDS might be transformed into a general $G$ with large MDS. Hence, it is necessary to
modify the FHM algorithm in non-trivial ways in order to solve the MDS-DR and MM-DR optimized realization problems\footnote{The same approach can 
provide an alternative algorithm for \cliqueDR and for \vcDR 
(or \isDR).
We omit the details.}.

%%%%%%%%%%%%%%%%%%%

\paragraph*{Related work.}
One related direction involves network realization with a given subgraph. 
Kundu~\cite{Kundu73} showed that the sequences $d$ and
the component-wise difference $d - d'$, such that $d'_i \in \set{k,k+1}$ and $d' \le d$ (component-wise), 
are graphic only if there exists a graph $G$ realizing $d$ that has a subgraph $G'$ realizing $d'$. 
This result can be used to decide whether a given sequences has a perfect matching by assigning $d'_i = 1$ for every $i$.
Kleitman and Wang~\cite{KleitmanW73} gave an algorithm for computing a realization of $d$ that contains a subgraph which realizes $d'$. 
Their algorithm can be used to compute a realization of $d$ that contains a perfect matching, if 
one exists.
Extensions of the above result were presented in~\cite{KleitmanW73,Kundu74}.

Rao and Rao{~\cite{RaoRao72}} and Kundu{~\cite{Kundu73}} gave a characterization of 
sequences that can be realized by a Hamiltonian graph.
Chungphaisan~\cite{Chungphaisan78} gave an algorithm, that given a sequence $d$,
constructs a realization with a Hamiltonian cycle (or path), if one
exists.
Rao~\cite{Rao79} characterized sequences that can be realized by a graph containing $K_\ell$ (a clique of size $\ell$). This result was based on the Prefix Lemma for MC.
An alternative proof was given in~\cite{KezdyLehel96}, 
and a constructive proof was presented in~{\cite{Yin12}}.
It follows that \cliqueDR can be solved in polynomial time by 
checking if $d$ has a realization containing $K_\ell$, for 
$\ell \in [2,n]$, where $[i,j] = \{i,\ldots,j\}$, for $i \leq j$.
In contrast, it is NP-hard to approximate \clique within a ratio of $O(n^{1-\varepsilon})$, 
for any $\varepsilon > 0$~\cite{Zuckerman07}.
Gould, Jackson and Lehel~\cite{GJL99} extended Rao's result by showing that if $d$ 
has a realization containing $H$ as a subgraph (but not necessarily an induced sub-graph),
then there exists a realization of $d$ containing $H$ such that the vertices of $H$ 
have the $|V(H)|$ largest degrees.
Yin~\cite{Yin11} used this result to further extend the result of Rao by giving a characterization of graphic sequences that 
contains a split graph $S_{r,s}$ composed of a clique is of size $r$ and an independent set of size $s$. 
(Observe that $S_{r,1}$ is a clique of size $r+1$.)

Gentner, Henning and Rautenbach~\cite{GHR18} 
proved the prefix lemma for \domsetDR and gave a realization algorithm for 
\domsetDR on sequences with $d_1 = O(1)$, leaving the general case open.
They also provided characterizations for \isDR and \domsetDR in forests.
Gentner, Henning and Rautenbach~{\cite{GHR16}} gave characterizations to 
realizations that minimize the maximum independent set and that maximize the minimum 
dominating set in forests.
Note that
\domset is not approximable within $\alpha \log n$, for some $\alpha>0$,
unless $\text{P}=\text{NP}$~{\cite{RazSafra97}}, and within $(1-\varepsilon) \log n$, for any
$\varepsilon > 0$, unless $\text{NP} \subseteq \text{DTIME}(n^{\log \log
  n})$~{\cite{Feige98}}.

Bock and Rautenbach~\cite{BockRautenbach19} studied \matchDR in trees and bipartite graphs, where the partition is given.
The result on bipartite graphs is based on a stronger version of the prefix lemma 
that focuses on a specific matching (see~{\Cref{lem: MM prefix}}).
Recently, Erd\"{o}s et al.~\cite{EKMT24} studied \matchDR. 
Given a sequence $d$ and an integer $\nu$, they presented an Erd\H{o}s-Gallai type characterization, based on a system of $O(n)$ inequalities,
which is satisfied if and only if $d$ has a realization with a matching of size $\nu$.
Applying these characterizations to a given degree sequence, the maximum size of a 
matching can be computed efficiently, although the specific realization is not provided.

Fulkerson, Hoffman and McAndrew~\cite{FHM1965} obtained conditions for the existence of 
an $f$-factor that are applicable only to the family of multi-graphs that satisfy the 
so called \emph{odd cycle condition}.
Kundu~\cite{Kundu74a} used this result to provide simplified conditions for 
the factorization of graphs that satisfy the odd cycle condition.
Rao~\cite{Rao81towards} and Yin~{\cite{Yin11}} also used the result of~\cite{FHM1965}.
Anstee used the technique of~\cite{FHM1965} to provide an algorithmic proof of the $f$-factor theorem~\cite{Anstee85}
and a simplified characterization for the existence of a $(g,f)$-factor for special cases~\cite{Anstee90}.

%%%%%%%%%%%%%%%%%%%%%%%%%%%%%%%%%%%%%
\section{Realization with Minimum Dominating Set}
\label{s:MDS-DR alg}

In this section, we describe an algorithm for constructing a realization 
of a given graphic sequence $d$, which in addition has a dominating set $D$ of the minimum size $\gamma$ among all the possible realizations.
For this, we employ the Prefix Lemma \ref{lem: MDS prefix}  \cite{GHR18} and a suitable modification of the FHM realization algorithm
\cite{FHM1965}.

Our algorithm proceeds in several steps, presented in the coming subsections.
First, the MDS-DR problem over general graph is reduced to the same problem over bipartite graphs. 
It is done by a modification of the FHM realization algorithm 
that ensures preservation of a dominating set.
Then, the MDS-DR problem over bipartite graphs is reduced to a maximum flow problem. 
To be more specific, given a candidate size $\gamma$ for the minimum dominating set and a degree sequence pair $(d,d)$, 
we construct a bipartite flow graph $G_{d,\gamma}$, 
such that if the maximum flow attained in $G_{d,\gamma}$ equals $\sum_{i=1}^n d_i$, 
then it corresponds to a realization $\hat G$ of $(d,d)$ with a dominating set of size $2\gamma$. 
The Prefix \Cref{lem: MDS prefix} narrows down the search of the dominating set to a polynomial number of candidates, 
which allows solving MDS-DR in polynomial time.

%%%%%%%%%%%%%%%%%%%%%%%
\subsection{Prefix Lemma}

Define a \emph{$\gamma$-prefix-dominated realization} of the sequence $d$ to be a realization, 
where the vertices with the $\gamma$ highest degrees (i.e., $d_1, d_2 \ldots, d_\gamma$) form a dominating set.

\begin{lemma}
{\bf (Prefix Lemma for \domset) \cite{GHR18}} 
\label{lem: MDS prefix}
If a sequence $d$ has a realization with a minimum dominating set of size $\gamma$, then $d$ has $\gamma$-prefix-dominated realization.
\end{lemma}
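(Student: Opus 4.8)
The plan is to start from any realization achieving domination number $\gamma$ and to push the dominating set onto the degree prefix by a single exchange, using degree-preserving $2$-switches to make that swap legal. Write $v_1,\dots,v_n$ for the vertices ordered by non-increasing degree, let $P=\set{v_1,\dots,v_\gamma}$ be the prefix, and for a vertex $v$ let $N[v]$ denote its closed neighborhood. Among all pairs $(G,D)$ in which $G$ realizes $d$ and $D$ is a dominating set of $G$ with $\abs{D}=\gamma$, I would fix one maximizing $\abs{D\cap P}$; such a pair exists because there are finitely many labeled realizations. If $\abs{D\cap P}=\gamma$ then $D=P$ and $G$ is the desired realization, so assume otherwise and choose $w\in P\setminus D$ and $u\in D\setminus P$. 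Since $w$ lies among the $\gamma$ largest degrees while $u$ does not, $\deg_G(w)\ge\deg_G(u)$.

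It then suffices to exhibit \emph{some} realization $G'$ of $d$ in which the set $D'=\paren{D\setminus\set{u}}\cup\set{w}$ is dominating: because $P$ is the same vertex set in every realization of $d$ and $\abs{D'\cap P}=\abs{D\cap P}+1$, the pair $(G',D')$ would contradict the maximality of $(G,D)$, forcing $D=P$. To build $G'$ I would modify $G$ only by moving edges incident to $u$ and $w$. Partition $V$ according to $G$ into $A=N_G[D\setminus\set{u}]$, the vertices already dominated without $u$, and $U=V\setminus A$, the vertices dominated \emph{only} by $u$; note $U\subseteq N_G[u]$, and $U\neq\emptyset$ since a minimum dominating set is minimal. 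For $D'$ to dominate in $G'$ it is enough that $w$ dominates $U$ while $D\setminus\set{u}$ still dominates $A$. The second requirement comes for free from the chosen switches: a switch that deletes only edges at $u$ and at $w$ (neither of which lies in $D\setminus\set{u}$) and adds two edges can never remove the domination of a vertex of $A$ by $D\setminus\set{u}$.

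The crux, and the step I expect to fight hardest with, is driving $U$ into $N[w]$. For an uncovered $x\in U\setminus\set{u}$, which is a neighbor of $u$ but not of $w$, I would perform the transfer $2$-switch replacing $\set{wy,ux}$ by $\set{wx,uy}$, so that $x$ enters $N(w)$. The counting identity $\abs{N(w)\setminus N[u]}-\abs{N(u)\setminus N[w]}=\deg_G(w)-\deg_G(u)\ge 0$, together with $x\in N(u)\setminus N[w]$, guarantees that a partner $y\in N(w)\setminus N[u]$ always exists; the delicate point is to select $y$ \emph{inside} $A$, since otherwise the switch would strip a vertex of $U$ of its only dominator and undo the progress. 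Showing that such a partner can always be found reduces, in the tight case $\deg_G(w)=\deg_G(u)$, to ruling out the configuration in which $u$ and $w$ are non-adjacent twins with $N_G(u)=N_G(w)$; but in that configuration $w$, being dominated by $D$, must have a neighbor in $D\setminus\set{u}$, and that neighbor would then dominate $u$, contradicting $u\in U$. Handling the residual case $x=u$ (when $u$ itself is $u$-private and $uw\notin E$) by one further switch that creates the edge $uw$, and checking termination via the potential $\abs{U\setminus N[w]}$, completes the argument. The bulk of the real work lies in this case analysis and in verifying that each switch is admissible and strictly decreases the potential.
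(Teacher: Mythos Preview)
The paper does not prove this lemma; it is imported wholesale from \cite{GHR18} and stated without proof. So there is no ``paper's proof'' to compare against, only the cited source. Your extremal-plus-$2$-switch strategy is exactly the standard route for prefix lemmas of this type and is almost certainly what \cite{GHR18} does; the skeleton is sound. Two points in your write-up deserve tightening.

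First, your ``delicate point'' about selecting $y\in A$ is not actually delicate. Since $U\subseteq N_G[u]$, any $y\in N_G(w)\setminus N_G[u]$ lies outside $U$ and hence in $A$ automatically; no twin analysis is needed here. What you do need is the bookkeeping that the supply of such $y$'s lasts through all iterations: each switch moves one $y$ out of $N(w)\setminus N[u]$ and one processed $x$ in, but the $x$'s land in $U$, so the $A$-part of $N(w)\setminus N[u]$ shrinks by exactly one per step. Since the number of steps is $m=\abs{U\setminus(N_G[w]\cup\set{u})}\le\abs{N_G(u)\setminus N_G[w]}\le\abs{N_G(w)\setminus N_G[u]}$, the supply suffices. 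Your potential $\abs{U\setminus N[w]}$ then strictly decreases, as claimed.

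Second, your twin argument is misplaced: it pertains not to finding $y$ but to the residual case of creating the edge $uw$ when $u\in U$ and $uw\notin E$. There the real observation is that $w$ must lie in $A$: since $D$ dominates $w\notin D$ and $uw\notin E$, some $d\in D\setminus\set{u}$ is a neighbour of $w$, and $d\in A$. This gives you a vertex in $N_G(w)\cap A$ that can be reserved (i.e., not spent as one of the $y_i$) so that after the $m$ transfer switches you still have $d\in N_{G'}(w)\cap A$ available for the final $2$-switch producing $uw$. Your twin sub-case is the degenerate instance where $N(u)=N(w)$, and then $m=0$ and $d$ also neighbours $u$, forcing $u\in A$ outright. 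You should make this reservation explicit and verify that a partner $b\in N_{G'}(u)$ with $db\notin E_{G'}$ exists; this is the only place where genuine case analysis remains.
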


Although the prefix lemma states that if $\gamma = \domset(d)$ then there is a realization with dominating set on $\gamma$ vertices of highest degrees,
it is not immediately clear how to select edges so as to obtain this realization, hence additional ideas are needed.
%%%%%%%%%%%%%%%
\subsection{Reduction to a Bipartite Sequence Pair} 
A bipartite graph $\hat{G} = (V, W, \hat{E})$, where $V = \{v_1, v_2, \ldots, v_n\}$ and $W = \{w_1, w_2, \ldots, w_n\}$, is a \emph{$\gamma$-prefix-dominated} realization for the sequence pair $(d, d)$ if it satisfies the following properties.
\begin{compactenum}[(D1)]
\item $\hat{G}$ realizes the sequence pair $(d,d)$.
\item $(v_i, w_i) \notin \hat{E}$ for every $i \in [1,n]$.
\item $\hat{D} = \hat{D}_V \cup \hat{D}_W$ is a dominating set in $\hat{G}$, 
      where $\hat{D}_V = \{v_1, \ldots, v_\gamma\}$ and $\hat{D}_W= \{w_1, \ldots, w_\gamma\}$ 
      are prefixes of $V$ and $W$, respectively.
\end{compactenum}

We describe a polynomial time algorithm based on the FHM algorithm~\cite{FHM1965}, 
that given a $\gamma$-prefix dominated realization $\hat G$ for the sequence pair $(d,d)$ produces a $\gamma$-prefix dominated realization $G$ for $d$. 

\inline Step 1: 
Compute a half-integral solution.
\begin{enumerate}[a.]
\item
For all $i, j \in [1,n]$, let ~~~~
$\displaystyle
y_{ij} = \begin{cases}
    1, ~ \{v_i, w_j\} \in \hat{E}, \\
    0, ~ \text{otherwise,}
\end{cases}$
~~~~
and ~~
$\omega(i,j) = \frac{1}{2}(y_{ij} + y_{ji})$.
\item 
Define a weighted graph $G^\omega = (V^\omega, E^\omega,\omega)$ with vertex set $V^\omega = [1,n]$ and 
an edge $e=(i, j)$ of weight $\omega(e)=$ $\omega(i,j)$ for every $i,j \in V^\omega$. Clearly, $w$ is half-intergral.
\item
Define the \emph{weighted degree} of a vertex $i \in V^\omega$ to be $d^\omega(i) = \sum_{j\in V^\omega} \omega(i, j)$. Note that $G^\omega$ realizes $d$ in the \emph{weighted} sense, namely,
\\
\hbox{\hskip 20pt}
$
d^\omega(i) = \sum_{j\in V^\omega} \omega(i, j) = \frac{1}{2} \left(\sum_{j=1}^n y_{ij} + \sum_{j=1}^n y_{ji}\right) = d_i, ~ \mbox{for any}~ i \in V^\omega.~~~~~~~\mbox{}$
\item
Partition the vertex set $V^\omega$ of $G^\omega$ into $D = [1,\gamma]$ and $S = [\gamma+1,n]$. Note that by construction, 
$D$ is a dominating set for $G^\omega$, and moreover, the total weight of the edges connecting any nondominating vertex in $S$ with its dominating neighbours in $D$ is at least one. Indeed, for any $s \in S$, by property (D3),
\begin{align} 
\label{eq: dominating weight}
\textstyle
\sum_{x \in D} \omega(s, x) 
~=~ \sum_{j \in [1, \gamma]} \omega(s,j) 
~=~ \frac{1}{2} \sum_{j \in [1, \gamma]} y_{sj} +  \frac{1}{2} \sum_{j \in [1, \gamma]} y_{js} 
~\ge~ \frac{1}{2} + \frac{1}{2} ~=~ 1.
\end{align}
\end{enumerate}

\inline Step 2: 
Preparing for discarding non-integral weights while keeping the degrees.
\\
Construct a graph $\Ghalf = (\Vhalf,\Ehalf)$ 
by removing from $G^\omega$ the edges of integral weight and keeping only those of weight $1/2$.
Formally, $\Vhalf = V^\omega$ and $\Ehalf = \{e \in E^\omega \mid \omega(e) = 1/2 \}$.

\begin{observation} 
\label{obs: G1/2 is even graph}
The graph $G^{1/2}$ is even
(namely, all its vertex degrees are even).
\end{observation}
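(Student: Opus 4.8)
The plan is to prove evenness vertex by vertex. Fix a vertex $i \in \Vhalf = V^\omega$ and compute $\deg_{\Ghalf}(i)$, which by construction equals the number of edges of weight exactly $1/2$ incident to $i$ in $G^\omega$. The main idea is that the already-established integrality of the weighted degree $d^\omega(i) = d_i$ forces this count to be even, through a simple parity observation.

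Concretely, I would first partition the edges incident to $i$ in $G^\omega$ according to their weight. Since by construction $\omega(i,j) \in \{0, 1/2, 1\}$ for all $j$, let $H_i$ be the set of neighbors $j$ with $\omega(i,j) = 1/2$ and let $I_i$ be the set with $\omega(i,j) = 1$; the weight-$0$ pairs contribute nothing. Note also that property (D2) gives $y_{ii} = 0$, hence $\omega(i,i) = 0$, so there is no self-loop contribution and the sum ranges over genuine neighbors. By the identity already derived in Step~1c,
\[
d_i ~=~ d^\omega(i) ~=~ \sum_{j \in V^\omega} \omega(i,j) ~=~ \tfrac{1}{2}\,|H_i| ~+~ |I_i|.
\]

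From this I would simply rearrange to isolate the half-integral contribution: $\tfrac{1}{2}|H_i| = d_i - |I_i|$, so that $|H_i| = 2\,(d_i - |I_i|)$. Since $d_i$ and $|I_i|$ are both integers, the right-hand side is an even integer, and therefore $\deg_{\Ghalf}(i) = |H_i|$ is even. As $i$ was arbitrary, every vertex of $\Ghalf$ has even degree, so $\Ghalf$ is an even graph.

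I do not expect a serious obstacle here: the statement reduces to the observation that the weight-$1$ edges contribute an integer to the (integral) weighted degree, so the weight-$1/2$ edges must contribute an integer as well, forcing their number to be even. The only point requiring a moment's care is confirming that the degree of $i$ in $\Ghalf$ is precisely $|H_i|$ (i.e.\ that discarding the integral-weight edges leaves exactly the weight-$1/2$ edges) and that no self-loop skews the count, both of which follow directly from the definition of $\Ehalf$ and from property (D2).
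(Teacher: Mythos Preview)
Your proof is correct and follows essentially the same approach as the paper: both argue that since the weighted degree $d^\omega(i)=d_i$ is an integer and the weight-$1$ edges contribute an integer, the weight-$1/2$ edges must also contribute an integer, forcing their count to be even. You simply spell out the partition $d_i=\tfrac12|H_i|+|I_i|$ and the self-loop check more explicitly than the paper's one-line version.
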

\begin{proof}
Each vertex $i \in V^\omega$ has an integral weighted degree $d^\omega(i)$, so the number of edges of weight $1/2$ incident to $i$ must be even.
\end{proof}

The first two steps are similar to the FHM algorithm \cite{FHM1965}, and the main changes w.r.t. that algorithm occur in the subsequent steps, and pertain to the process of modifying $G^\omega$ and getting rid of non-integral weights while keeping the degrees unchanged without violating Inequality \eqref{eq: dominating weight}. The modifications happen concurrently on the graphs $G^\omega$ and $\Ghalf$
(i.e., whenever changing the weight of some edge $e$ in $G^\omega$ from $0$ or $1$ to $1/2$ or vice versa, $\Ghalf$ is modified accordingly, adding or removing the edge $e$).

Specifying the modifications require the following definition. A 4-vertex path $P[a, s, b, c]$ in $\Ghalf$ is a \emph{2-dom path} if $s \in S$ and $a, b \in D$, i.e., the nondominating $s$ has two neighboring dominators. 

\inline Step 3: 
Eliminate 2-dom paths.
\\
The next step in the algorithm transforms the weights in $G^\omega$ until it is free of 2-dom paths. This is done as follows.
While there is a 2-dom path 
in $\Ghalf$, apply one of the following three \emph{modification rules} to the edge weights in $G^\omega$, according to a weight of the edge $(a, c)$ in $G^\omega$. The different possible situations are visualized in Figure \ref{fig: modifications}. All figures in this section maintain the convention that black nodes are dominating, white nodes are nondominating and gray nodes can be either.
Furthermore, solid lines represent edges of positive weight and dashed lines represent edges of weight 0.

\begin{enumerate}
\item[(MR1)] 
If $\omega(a,c) = 0$, 
then set $\omega(a, s) \gets 0$, $\omega(s,b) \gets 1$, $\omega(b,c) \gets 0$ and $\omega(a,c) \gets 1/2$.
    
\item[(MR2)] 
If $\omega(a,c) = 1/2$, then set $\omega(a,s) \gets 1$, $\omega(s,b) \gets 0$, $\omega(b,c) \gets 1$ and $\omega(a,c) \gets 0$.

\item[(MR3)] 
If $\omega(a,c) = 1$, then set $\omega(a,s) \gets 1$, $\omega(s,b) \gets 0$, $\omega(b,c) \gets 1$ and $\omega(a,c) \gets 1/2$.

\end{enumerate}

%%%%%%%%%%%%%%%%%%%%%%%%%%%%%
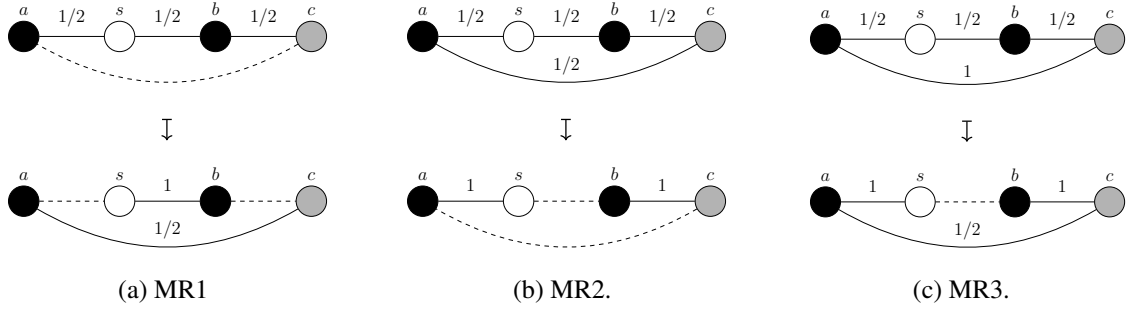
\begin{figure}[t]
    \centering
\begin{tabular}{c@{\hspace{10pt}}c@{\hspace{10pt}}c}
%%%%%%%%%%%%%%%%%%%%%%%%%%%%%%%%%%%%%%%%
    \begin{subfigure}{.3\textwidth}
    \centering
    \resizebox{0.9\textwidth}{!}{
    \tikzset{
    myptr/.style={-{Stealth[scale=1.5]}},
    }
    
    \definecolor{gr}{rgb}{0.7, 0.7, 0.7}
    
    \begin{tikzpicture}[
        graynode/.style={circle, draw=black, fill=gr, minimum size=20pt, inner sep=0pt},
        blacknode/.style={circle, draw=black, fill=black, minimum size=20pt, inner sep=0pt},
        whitenode/.style={circle, draw=black, minimum size=20pt, inner sep=0pt},
        font=\large
    ]

    \node[whitenode, label={$s$}] (s) {};
    \node[blacknode, label={$a$}, left=1.5cm of s] (a) {};
    \node[blacknode, label={$b$}, right=1.5cm of s] (b) {};
    \node[graynode, label={$c$}, right=1.5cm of b] (c) {};
    \node[right=0.6cm of s] (sup) {};
    
    \draw (a) -- (s) node[pos=0.5, above=0.1cm] {$1/2$};
    \draw (s) -- (b) node[pos=0.5, above=0.1cm] {$1/2$};
    \draw (b) -- (c) node[pos=0.5, above=0.1cm] {$1/2$};
    \draw[dashed] (c) to[out=-150,in=-30] (a);

    \node[below=1.5cm of sup] (m1) {};
    \node[below=0.5cm of m1] (m2) {};
    \draw[|->, line width=0.3mm]  (m1)   -- (m2);

    \node[below=1cm of m2] (sup1) {};
    \node[whitenode, label={$s$}, left=0.6cm of sup1] (s1) {};
    \node[blacknode, label={$a$}, left=1.5cm of s1] (a1) {};
    \node[blacknode, label={$b$}, right=1.5cm of s1] (b1) {};
    \node[graynode, label={$c$}, right=1.5cm of b1] (c1) {};

    \draw[dashed] (a1) -- (s1);
    \draw (s1) -- (b1) node[pos=0.5, above=0.1cm] {$1$};
    \draw[dashed] (b1) -- (c1);
    \draw (c1) to[out=-150,in=-30] node[midway, above=0.05cm] {$1/2$}  (a1);
    
    \end{tikzpicture}
    }
    \caption{MR1}
    \label{fig: modification one}
    \end{subfigure}
%%%%%%%%%%%%%%%%%%%%%%%%%%%%%%%%%%%%%%%%
&
%%%%%%%%%%%%%%%%%%%%%%%%%%%%%%%%%%%%%%%%
    \begin{subfigure}{.3\textwidth}
    \centering
    \resizebox{0.9\textwidth}{!}{
    \tikzset{
    myptr/.style={-{Stealth[scale=1.5]}},
    }
    
    \definecolor{gr}{rgb}{0.7, 0.7, 0.7}
    
    \begin{tikzpicture}[
        graynode/.style={circle, draw=black, fill=gr, minimum size=20pt, inner sep=0pt},
        blacknode/.style={circle, draw=black, fill=black, minimum size=20pt, inner sep=0pt},
        whitenode/.style={circle, draw=black, minimum size=20pt, inner sep=0pt},
        font=\large
    ]

    \node[whitenode, label={$s$}] (s) {};
    \node[blacknode, label={$a$}, left=1.5cm of s] (a) {};
    \node[blacknode, label={$b$}, right=1.5cm of s] (b) {};
    \node[graynode, label={$c$}, right=1.5cm of b] (c) {};
    \node[right=0.6cm of s] (sup) {};
    
    \draw (a) -- (s) node[pos=0.5, above=0.1cm] {$1/2$};
    \draw (s) -- (b) node[pos=0.5, above=0.1cm] {$1/2$};
    \draw (b) -- (c) node[pos=0.5, above=0.1cm] {$1/2$};
    \draw (c) to[out=-150,in=-30] node[pos=0.5, above=0.05cm] {$1/2$} (a);

    \node[below=1.5cm of sup] (m1) {};
    \node[below=0.5cm of m1] (m2) {};
    \draw[|->, line width=0.3mm] (m1)   -- (m2);
    
    \node[below=1cm of m2] (sup1) {};
    \node[whitenode, label={$s$}, left=0.6cm of sup1] (s1) {};
    \node[blacknode, label={$a$}, left=1.5cm of s1] (a1) {};
    \node[blacknode, label={$b$}, right=1.5cm of s1] (b1) {};
    \node[graynode, label={$c$}, right=1.5cm of b1] (c1) {};

    % Arrows aftre modification
    \draw (a1) -- (s1) node[pos=0.5, above=0.1cm] {$1$};
    \draw[dashed] (s1) -- (b1);
    \draw (b1) -- (c1) node[pos=0.5, above=0.1cm] {$1$};
    \draw[dashed] (c1) to[out=-150,in=-30] (a1);
    
    \end{tikzpicture}
    }
    \caption{MR2.}
    \label{fig: modification two}
    \end{subfigure}
%%%%%%%%%%%%%%%%%%%%%%%%%%%%%%%%%%%%%%%%
&
%%%%%%%%%%%%%%%%%%%%%%%%%%%%%%%%%%%%%%%%
    \begin{subfigure}{.3\textwidth}
    \centering
    \resizebox{0.9\textwidth}{!}{
    \tikzset{
    myptr/.style={-{Stealth[scale=1.5]}},
    }
    
    \definecolor{b}{rgb}{0.0, 0, 1}
    \definecolor{gr}{rgb}{0.7, 0.7, 0.7}
    
    \begin{tikzpicture}[
        graynode/.style={circle, draw=black, fill=gr, minimum size=20pt, inner sep=0pt},
        blacknode/.style={circle, draw=black, fill=black, minimum size=20pt, inner sep=0pt},
        whitenode/.style={circle, draw=black, minimum size=20pt, inner sep=0pt},
        font=\large
    ]

    % Nodes before modification
    \node[whitenode, label={$s$}] (s) {};
    \node[blacknode, label={$a$}, left=1.5cm of s] (a) {};
    \node[blacknode, label={$b$}, right=1.5cm of s] (b) {};
    \node[graynode, label={$c$}, right=1.5cm of b] (c) {};
    \node[right=0.6cm of s] (sup) {};
    
    % Arrows before modification
    \draw (a) -- (s) node[pos=0.5, above=0.1cm] {$1/2$};
    \draw (s) -- (b) node[pos=0.5, above=0.1cm] {$1/2$};
    \draw (b) -- (c) node[pos=0.5, above=0.1cm] {$1/2$};
    \draw (c) to[out=-150,in=-30] node[pos=0.5, above=0.05cm] {$1$} (a);

    %mapsto
    \node[below=1.5cm of sup] (m1) {};
    \node[below=0.5cm of m1] (m2) {};
    \draw[|->, line width=0.3mm] (m1)   -- (m2);
    
    % Nodes aftre modification
    \node[below=1cm of m2] (sup1) {};
    \node[whitenode, label={$s$}, left=0.6cm of sup1] (s1) {};
    \node[blacknode, label={$a$}, left=1.5cm of s1] (a1) {};
    \node[blacknode, label={$b$}, right=1.5cm of s1] (b1) {};
    \node[graynode, label={$c$}, right=1.5cm of b1] (c1) {};

    % Arrows aftre modification
    \draw (a1) -- (s1) node[pos=0.5, above=0.1cm] {$1$};
    \draw[dashed] (s1) -- (b1);
    \draw (b1) -- (c1) node[pos=0.5, above=0.1cm] {$1$};
    \draw (c1) to[out=-150,in=-30] node[pos=0.5, above=0.05cm] {$1/2$} (a1);
    
    \end{tikzpicture}
    }
    \caption{MR3.}
    \label{fig: modification three}
    \end{subfigure}
%%%%%%%%%%%%%%%%%%%%%%%%%%%%%%%%%%%%%%
\end{tabular}
\vspace{-5pt}
\caption{Illustration to the modification rules.}
\label{fig: modifications}
\end{figure}
%%%%%%%%%%%%%%%%%%%%%%%%%%%%%%%

Note that the modifications preserve the weighted degree of every vertex in $G^\omega$ and the total weight of the edges between any nondominating vertex and its dominating neighbors. 
Moreover, each time a modification is applied, the number of edges in $\Ghalf$ decreases. Therefore, the 2-dom paths are eliminated from $\Ghalf$ (with the corresponding paths eliminated from $G^\omega$) within a polynomial number of steps. 

\inline Step 4: Separate special cycles.
\\
At the start of this step, $G^{1/2}$ is free of 2-dom paths.
Let $S'$ be the set of all the nondominating vertices in $G^\omega$ that are connected to a dominator vertex with at least one edge $\hat{e}$ with $\omega(\hat{e})=1$. Choose such an edge $\hat{e}_s$ arbitrarily for each vertex $s \in S'$ and denote the collection of these edges by $E' = \{\hat{e}_s \mid s \in S'\}$.

Next consider the set $S^\Delta = S \setminus S'$ of the remaining nondominating vertices. Since every $s \in S^\Delta$ is not connected to any vertex in $D$ with an edge of weight one in $G^\omega$, it has at least two dominating neighbours in $\Ghalf$ by 
Inequality~\eqref{eq: dominating weight}. Choose arbitrarily two such neighbours $a_s, b_s \in D$ for every $s \in S^\Delta$.

\begin{observation}
\label{obs: poor connection}
For every $s \in S^\Delta$, $a_s$ and $b_s$ are not connected to any vertices in $\Ghalf$ besides possibly each other and $s$.
\end{observation}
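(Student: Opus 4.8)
The plan is to argue by contradiction, exploiting the fact that at the start of Step~4 the graph $\Ghalf$ contains no 2-dom path. Fix $s \in S^\Delta$ together with its two chosen dominating neighbours $a_s, b_s \in D$, so that the edges $(a_s,s)$ and $(s,b_s)$ both have weight $1/2$ and hence appear in $\Ghalf$. I first record that $a_s$, $s$, $b_s$ are three distinct vertices: $a_s \neq b_s$ since they were selected as two distinct neighbours, while $s \notin D$ and $a_s, b_s \in D$ force $s$ to differ from both.

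Next I would bound the neighbourhood of $a_s$. Suppose toward a contradiction that $a_s$ has a neighbour $c$ in $\Ghalf$ with $c \notin \{b_s, s\}$; since $\Ghalf$ is simple we also have $c \neq a_s$, so $b_s, s, a_s, c$ are four distinct vertices. Then $b_s - s - a_s - c$ is a 4-vertex path $P[b_s, s, a_s, c]$ whose second vertex $s$ lies in $S$ and whose flanking dominators $b_s, a_s$ lie in $D$; by definition this is a 2-dom path, contradicting the 2-dom-path-freeness of $\Ghalf$. Hence every neighbour of $a_s$ lies in $\{b_s, s\}$.

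The symmetric argument applied to $b_s$ --- now using the path $a_s - s - b_s - c$ in the role of $P[a_s, s, b_s, c]$ --- shows that any neighbour $c$ of $b_s$ outside $\{a_s, s\}$ would likewise yield a 2-dom path, so every neighbour of $b_s$ lies in $\{a_s, s\}$. Combining the two bounds gives exactly the claim: neither $a_s$ nor $b_s$ is adjacent in $\Ghalf$ to any vertex other than (possibly) each other and $s$.

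The argument is essentially immediate once 2-dom-path-freeness is invoked; the only point requiring care is checking that the four vertices of the alleged path are genuinely distinct, so that it is a true path (hence a legitimate 2-dom path) and not a shorter closed walk such as the triangle $a_s - s - b_s - a_s$. I do not expect any obstacle beyond this bookkeeping.
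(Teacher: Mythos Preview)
Your proof is correct and follows essentially the same approach as the paper: assume a neighbour $c$ of $a_s$ (or $b_s$) outside $\{s,a_s,b_s\}$ exists and observe that this yields a 2-dom path, contradicting the state of $\Ghalf$ after Step~3. The paper's proof is a one-line version of your argument; your additional care in verifying that the four vertices are distinct is a welcome clarification that the paper leaves implicit.
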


\begin{proof}
Having another vertex $c$ neighboring, say, $b_s$, in $G^{1/2}$ would imply the existence of a 2-dom path $P[a_s, s, b_s, c]$ in $G^{1/2}$, leading to a contradiction.
\end{proof}

In all modifications performed in subsequent steps of the algorithm, edges are only removed from (but never added to) $\Ghalf$, 
so \Cref{obs: poor connection} continues to hold until the end of the algorithm's execution.

Since $a_s$ and $b_s$ have integral weighted degrees in $G^\omega$, {\Cref{obs: poor connection}} implies that the edge $(a_s, b_s)$ must exist in $\Ghalf$. It follows that there is a cycle $C[a_s, s, b_s]$ in $\Ghalf$ for each $s \in S^\Delta$ (see {\Cref{fig:s-cycle}}). Let $\calC^\Delta = \{C[a_s, s, b_s] \mid s \in S^\Delta\}$ be a set of all such cycles.

%%%%%%%%%%%%%%%%%%%%%%%%%%%%%%%%%%%%%%%%
    \begin{figure}[H]
    \centering
    \resizebox{0.17\textwidth}{!}{
    \tikzset{
    myptr/.style={-{Stealth[scale=1.5]}},
    }
    
    \definecolor{b}{rgb}{0.0, 0, 1}
    \definecolor{gr}{rgb}{0.7, 0.7, 0.7}
    
    \begin{tikzpicture}[
        graynode/.style={circle, draw=black, fill=gr, minimum size=20pt, inner sep=0pt},
        blacknode/.style={circle, draw=black, fill=black, minimum size=20pt, inner sep=0pt},
        whitenode/.style={circle, draw=black, minimum size=20pt, inner sep=0pt},
        font=\Large
    ]

    % Nodes
    \node[whitenode, label={$s$}] (s) {};
    \node[below = 2cm of s] (sup) {};
    \node[blacknode, label={$a$}, left=1cm of sup] (a) {};
    \node[blacknode, label={$b$}, right=1cm of sup] (b) {};
    
    % Arrows before modification
    \draw (a) -- (s) node[pos=0.5, above left=0.1cm] {$\frac{1}{2}$};
    \draw (s) -- (b) node[pos=0.5, above right=0.1cm] {$\frac{1}{2}$};
    \draw (a) -- (b) node[pos=0.5, above =0.1cm] {$\frac{1}{2}$};
    \end{tikzpicture}
    }
    \vspace{-5pt}
    \caption{A cycle in $\calC^\Delta$.}
    \label{fig:s-cycle}
    \end{figure}
%%%%%%%%%%%%%%%%%%%%%%%%%%%%%%%%%%%%%%

\begin{observation} 
\label{obs: disjoint cycles}
Any two different cycles $C, C'$ $\in \calC^\Delta$ have disjoint vertices.
\end{observation}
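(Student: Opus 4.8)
The plan is to exploit \Cref{obs: poor connection}, which tightly constrains the neighbourhoods of the dominating endpoints of each cycle. Fix two distinct cycles $C = C[a_s, s, b_s]$ and $C' = C[a_{s'}, s', b_{s'}]$ in $\calC^\Delta$, arising from distinct vertices $s, s' \in S^\Delta$. Each such cycle is a triangle whose three vertices consist of one nondominating vertex ($s$, resp.\ $s'$) lying in $S$, together with two dominating vertices ($a_s, b_s$, resp.\ $a_{s'}, b_{s'}$) lying in $D$. Since $D$ and $S$ are disjoint, the nondominating vertex of one cycle can never coincide with a dominating vertex of the other, and $s \neq s'$ by assumption. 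Hence the only conceivable overlap between $V(C)$ and $V(C')$ is between their dominating endpoints, so it suffices to rule out the four equalities $a_s = a_{s'}$, $a_s = b_{s'}$, $b_s = a_{s'}$, and $b_s = b_{s'}$.

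All four cases are symmetric, so I would treat $a_s = a_{s'}$ and obtain the others by renaming. Call the putative common vertex $x$. As an endpoint of $C$, the vertex $x = a_s$ is adjacent in $\Ghalf$ to $s$; as an endpoint of $C'$, the same vertex $x = a_{s'}$ is adjacent in $\Ghalf$ to $s'$. Applying \Cref{obs: poor connection} to $s$, the only vertices to which $a_s$ may be adjacent in $\Ghalf$ are $b_s$ and $s$. But $s'$ is nondominating whereas $b_s \in D$, so $s' \neq b_s$, and $s' \neq s$ by hypothesis; thus $x = a_s$ would be adjacent to a vertex $s' \notin \{b_s, s\}$, contradicting \Cref{obs: poor connection}. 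Therefore $a_s \neq a_{s'}$, and the identical argument disposes of the remaining three cases, giving $V(C) \cap V(C') = \emptyset$.

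This is essentially a direct case check, so no single step is a genuine obstacle. The only points requiring care are confirming that \Cref{obs: poor connection} is still valid at this stage of the algorithm (guaranteed by the preceding remark, since later modifications only remove edges from $\Ghalf$) and verifying that the nondominating/dominating dichotomy really eliminates every ``mixed'' coincidence, so that the four dominating-endpoint equalities are indeed the only ones to check.
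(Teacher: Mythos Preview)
Your argument is correct and follows essentially the same route as the paper: note that $s \neq s'$ by construction, rule out mixed coincidences via the $D$/$S$ partition, and then invoke \Cref{obs: poor connection} to exclude any overlap among the dominating endpoints. The paper's proof is just a terser version of yours, collapsing the four symmetric cases into a single sentence.
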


\begin{proof}
Consider cycles $C=C[s, a_s, b_s]$ and $C'=C[s', a'_s, b'_s]$. By construction, $s \neq s'$, so 
$C$ and $C'$ 
can only intersect in $a_s$ or $b_s$, but this cannot happen
by \Cref{obs: poor connection}. 
\end{proof}

\inline Step 5:
Partition into cycles.
\\
Denote the set of all the edges of the cycles in $\calC^\Delta$ by 
\begin{align*}
E(\calC^\Delta) = \{(a_s, b_s), (a_s, s), (s, b_s) \mid C[a_s, s, b_s] \in \calC^\Delta\}
\end{align*}
Consider a subgraph $H$ of $\Ghalf$ with the same vertices, $V(H) = \Vhalf$, and edges $E(H) = \Ehalf \setminus E(\calC^\Delta)$. Since $\Ghalf$ is an even graph by Observation \ref{obs: G1/2 is even graph} and the cycles in $\calC^\Delta$ are disjoint by Observation \ref{obs: disjoint cycles}, it follows that $H$ is an even graph. So there is a polynomial time algorithm for partitioning edge set of $H$ into disjoint cycles, such that each cycle contains an entire connected component. Denote the set of these cycles by $\calC'$ and let $\calC = \calC' \cup \calC^\Delta$. We call a cycle \emph{even} (resp., \emph{odd}) if it has an even (resp., odd) number of edges.

\begin{observation} 
\label{obs: even number of odd cycles-2}
The number of odd cycles in $\calC$ is even.
\end{observation}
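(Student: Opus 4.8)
The plan is to reduce the whole statement to a single parity count. First I would observe that $\calC$ is an edge-disjoint cycle decomposition of $\Ghalf$: the cycles of $\calC^\Delta$ are vertex-disjoint (\Cref{obs: disjoint cycles}) and hence edge-disjoint, while $\calC'$ partitions exactly the remaining edges $E(H) = \Ehalf \setminus E(\calC^\Delta)$. Consequently every edge of $\Ghalf$ lies in exactly one member of $\calC$, so $|\Ehalf| = \sum_{C \in \calC} |E(C)|$. Reducing mod $2$, each even cycle contributes an even number of edges and each odd cycle an odd number, whence
\[
\#\{\text{odd cycles in } \calC\} ~\equiv~ |\Ehalf| \pmod 2 .
\]
Thus it suffices to show that $\Ghalf$ has an \emph{even} number of edges.

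At this point I would emphasize what is \emph{not} enough: \Cref{obs: G1/2 is even graph} tells us $\Ghalf$ is an even graph, but an even graph may have an odd number of edges (a single triangle is the smallest example), so evenness of all degrees does not by itself pin down the parity of $|\Ehalf|$. The extra ingredient is that $d$ is graphic, so $\sum_i d_i$ is even. To bring this in, I would relate $|\Ehalf|$ to $\sum_i d_i$ through the weighted degrees. At this stage every edge of $G^\omega$ has weight $\frac{1}{2}$ or $1$ (a weight-$0$ pair being a non-edge), and the modification rules of Step 3 preserve all weighted degrees, so $d^\omega(i) = d_i$ still holds. Writing $a_i$ and $b_i$ for the numbers of weight-$1$ and weight-$\frac{1}{2}$ edges incident to $i$, we have $d_i = a_i + \frac{1}{2} b_i$. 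Summing over $i$ and using $\sum_i a_i = 2|E_1|$ (where $E_1$ is the set of weight-$1$ edges) and $\sum_i b_i = 2|\Ehalf|$ gives $\sum_i d_i = 2|E_1| + |\Ehalf|$, so that $|\Ehalf| \equiv \sum_i d_i \pmod 2$.

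Since $d$ is graphic, $\sum_i d_i$ is even, hence $|\Ehalf|$ is even, and therefore the number of odd cycles in $\calC$ is even, as claimed. The step I expect to be the crux is precisely recognizing that ``$\Ghalf$ is even'' is a red herring for this parity and that one must instead route the argument through the global identity $|\Ehalf| \equiv \sum_i d_i \pmod 2$; once that identity is in place the conclusion is immediate from graphicness of $d$.
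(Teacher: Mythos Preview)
Your proof is correct and follows essentially the same route as the paper: both arguments show that $|\Ehalf|$ is even because $\sum_i d_i$ is even (the paper phrases this as $\sum_{e\in E^\omega}\omega(e)=\frac12\sum_i d_i=m\in\integers$, while you use the equivalent handshake identity $\sum_i d_i = 2|E_1|+|\Ehalf|$), and then conclude via the edge-disjoint cycle decomposition that the number of odd cycles has the same parity as $|\Ehalf|$. Your remark that \Cref{obs: G1/2 is even graph} alone is insufficient is a nice clarification but does not alter the underlying argument.
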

\begin{proof}
Observe that 
$\sum_{e \in E^\omega} \omega(e) ~=~ \frac{1}{2} \sum_{i=1}^n d_i = m,$
where $m$ is the number of edges, which is an integer.
Therefore, the number of edges with weight $1/2$ must be even. 
Since the cycles in $\calC$ cover all of $\Ehalf$ and are disjoint, the observation follows.
\end{proof}

\begin{observation} \label{obs: intersection of C' and C Delta}
If cycles $C \in \calC^\Delta$ and $C' \in \calC'$ have a non-empty intersection, then $C \cap C' = \{s\}$ for $s \in S^\Delta$ corresponding to the cycle $C$.
\end{observation}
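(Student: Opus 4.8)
The plan is to show that the only vertex $C$ and $C'$ can possibly share is the nondominating vertex $s$ that defines $C$, and to exclude everything else on structural grounds. Write $C = C[a_s, s, b_s]\in\calC^\Delta$, so its vertex set is exactly $\{a_s, s, b_s\}$ and its edge set is $\{(a_s,s),(s,b_s),(a_s,b_s)\}\subseteq E(\calC^\Delta)$. Since $C'\in\calC'$ is assembled entirely from edges of $H$, that is, from $\Ehalf\setminus E(\calC^\Delta)$, none of the three edges of $C$ can occur in $C'$. Hence any vertex in $C\cap C'$ must lie in $\{a_s,s,b_s\}$ and, at the same time, be an endpoint of some $H$-edge traversed by $C'$. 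Thus it suffices to rule out $a_s$ and $b_s$, leaving $s$ as the sole candidate.

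First I would establish that $a_s$ and $b_s$ are isolated in $H$. By \Cref{obs: poor connection}, in $\Ghalf$ the vertex $a_s$ is adjacent only to (possibly) $b_s$ and $s$, and likewise $b_s$ only to (possibly) $a_s$ and $s$. Therefore every $\Ghalf$-edge incident to $a_s$ or $b_s$ is one of $(a_s,b_s),(a_s,s),(s,b_s)$ --- precisely the three edges of $C$, all of which belong to $E(\calC^\Delta)$ and hence are deleted in passing from $\Ghalf$ to $H$. Consequently $\deg_H(a_s)=\deg_H(b_s)=0$.

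Finally, since every cycle of $\calC'$ consists solely of edges of $H$, it cannot contain a vertex of $H$-degree zero; in particular $a_s,b_s\notin V(C')$. Hence the only vertex of $C$ that could belong to $C'$ is $s$, so $C\cap C'\subseteq\{s\}$, and a nonempty intersection must equal $\{s\}$, as claimed. The one point requiring care --- and essentially the only content beyond bookkeeping --- is the appeal to \Cref{obs: poor connection}: it is exactly the fact that the two dominators $a_s,b_s$ have no further neighbors that collapses all their incident edges into $E(\calC^\Delta)$ and strands them as isolated vertices of $H$. To make this legitimate I would invoke the remark stated just after \Cref{obs: poor connection}, namely that all subsequent steps only remove edges from $\Ghalf$, so the observation --- and therefore the isolation of $a_s$ and $b_s$ in $H$ --- continues to hold at this stage of the construction.
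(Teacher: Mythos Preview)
Your proof is correct and follows essentially the same approach as the paper: use \Cref{obs: poor connection} to conclude that $a_s$ and $b_s$ have degree zero in $H$, hence cannot lie on any cycle of $\calC'$, leaving $s$ as the only possible common vertex. The paper's version is simply terser, stating this in two sentences without the surrounding bookkeeping.
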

\begin{proof}
Let $C = C[a_s, s, b_s]$ with $a_s, b_s \in D$ and $s \in S^\Delta$.
By \Cref{obs: poor connection}, $a_s$ and $b_s$ have degree zero in $H$, so they do not belong to any cycle in $\calC'$. 
\end{proof}

\inline Step 6:
Eliminate even cycles.
\\
For every even cycle $C \in \calC$ do the following.
\begin{compactenum}
\item Traverse $C$ starting from an arbitrary vertex $x \in C$ and continuing along the cycle until returning to $x$. Denote  the resulting sequence of edges by $E(C)=(e_1, e_2, \ldots, e_\ell)$.
\item Increase (respectively, decrease) the weights of the edges on even (resp., odd) positions in the sequence $E(C)$ by $1/2$.
That is, for every $i \in [1,\ell]$ set
\[
\omega(e_i) \gets 
\begin{cases}
1, ~ \text{$i$ is even}, \\
0, ~ \text{$i$ is odd}.
\end{cases} 
\]
\end{compactenum}
Note that this procedure does not change the weighted degrees in $G^\omega$ or the weights of the edges in $E'$ and does not affect other cycles in $\calC$. Hereafter, we refer to such modifications as \emph{neutral}. 

\inline Step 7:
Eliminate odd cycles.
\\
Finally, arrange the odd cycles in $\calC$ (whose number is even by \Cref{obs: even number of odd cycles-2}) in pairs. For every pair $(C, C')$, proceed according to Case~1 below if the cycles intersect and according to Case~2 otherwise.

\inline Case 1: There is a vertex $x \in C \cap C'$:
\begin{compactenum}
\item 
As before, traverse both cycles starting at $x$. Denote the resulting sequences of edges in $C$ and $C'$ by $E(C) = (e_1, e_2, \ldots, e_\ell)$ and $E(C')=(e'_1, e'_2, \ldots, e'_k)$, respectively.
\item 
Cycles in $\calC'$ are disjoint by definition and cycles in $\calC^\Delta$ are disjoint by \Cref{obs: disjoint cycles}. 
Thus one of the cycles $(C,C')$ belongs to $\calC'$ and the other to $\calC^\Delta$. Redefine them so $C \in \calC^\Delta$ and $C' \in \calC'$. 
\item 
For every $i \in [1,\ell]$ and $j \in [1,k]$, modify the edge weights in the cycles as follows:
\begin{align*}
\omega(e_i) & \gets \begin{cases}
            0, ~ \text{$i$ is even}, \\
            1, ~ \text{$i$ is odd}.
        \end{cases} 
&
\omega(e'_j) & \gets \begin{cases}
            1, ~ \text{$j$ is even}, \\
            0, ~ \text{$j$ is odd}.
        \end{cases} 
    \end{align*}
\end{compactenum}

Note that this modification is neutral. 

By \Cref{obs: intersection of C' and C Delta}, $x \in S^\Delta$ and $C = C[x, a_x, b_x]$ for some $a_x, b_x \in D$. Note that $x$ is connected to $a_x$ and $b_x$ in $G^\omega$ with edges of weight one after the modification. Neither $a_x$ nor $b_x$ belongs to any other cycles in $\calC$ by \Cref{obs: poor connection}, so the weights $\omega(x, a_x)$ and $\omega(x, b_x)$ are not modified further by the algorithm. Let $E^\Delta_1$ contain an edge $(x, a_x)$ for every cycle $C \in \calC^\Delta$ that was processed in this case.

\inline Case 2: $C \cap C' = \emptyset$:

First, we have the following lemma.

\begin{lemma} 
\label{lem: right x and y}
For any 
two
disjoint odd cycles $C, C' \in \calC$, there exist $x \in C$ and $y \in C'$, such that $(x,y) \notin E'$, $\omega(x, y) \neq 1/2$, and if $C$ or $C'$ belongs to $\calC^\Delta$, then the corresponding vertex does not belong to $S^\Delta$.
\end{lemma}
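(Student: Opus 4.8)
The plan is to reduce the statement to two independent claims about a pair of \emph{allowed} endpoints, one taken from each cycle. For a cycle $C\in\calC$ define its set of allowed endpoints $X\subseteq V(C)$ by $X=V(C)$ when $C\in\calC'$, and $X=\{a_s,b_s\}$ (the two dominating vertices of the triangle) when $C=C[a_s,s,b_s]\in\calC^\Delta$; define $Y\subseteq V(C')$ analogously. By construction every vertex of $X$ and $Y$ already meets the last requirement of the lemma, since it is never the $S^\Delta$-vertex of a $\calC^\Delta$-triangle ($a_s,b_s\in D$ while $S^\Delta\subseteq S$). Moreover, as an odd cycle has at least three vertices and a triangle contributes two allowed endpoints, $|X|\ge 2$ and $|Y|\ge 2$. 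It then suffices to exhibit $x\in X$ and $y\in Y$ with $\omega(x,y)\neq 1/2$ and $(x,y)\notin E'$.

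First I would prove that \emph{every} allowed pair avoids weight $1/2$, i.e. $\omega(x,y)\neq 1/2$ for all $x\in X,\ y\in Y$. If $C\in\calC^\Delta$ then $x\in\{a_s,b_s\}$, and by \Cref{obs: poor connection} the only weight-$1/2$ neighbours of such a vertex are $s$ and the other dominator of its own triangle; since $C'$ is disjoint from $C$, $y$ is neither, so $\omega(x,y)\neq 1/2$. The same argument applies, by symmetry, if $C'\in\calC^\Delta$. The remaining case is $C,C'\in\calC'$: here I would use that the decomposition of $H$ assigns each cycle an entire connected component, so $C$ and $C'$ lie in distinct components of $H$ and $H$ carries no edge between $V(C)$ and $V(C')$; the only other weight-$1/2$ edges are those of $E(\calC^\Delta)$, but these are incident to the vertices $a_s,b_s$, which have degree $0$ in $H$ by \Cref{obs: poor connection} and therefore lie in no $\calC'$-cycle, so again no weight-$1/2$ edge joins $V(C)$ to $V(C')$. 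This step is the main obstacle, because it is exactly where the global structure produced by Steps 4--5 (\Cref{obs: poor connection}, \Cref{obs: disjoint cycles}, and the ``one cycle per component'' property of the decomposition) must be combined to guarantee that a connecting edge of weight $0$ or $1$ is available at all.

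It then remains to show that $E'$ cannot contain all of $X\times Y$, which is a short uniqueness argument: each vertex of $S'$ is incident to exactly one edge of $E'$, and every edge of $E'$ has precisely one endpoint in $S'$ (the other being a dominator in $D$). Assuming for contradiction that all of $X\times Y$ lies in $E'$, pick distinct $x_1,x_2\in X$ and distinct $y_1,y_2\in Y$. Two distinct $E'$-edges at $x_1$ force $x_1\notin S'$, hence $x_1\in D$ and $y_1,y_2\in S'$; the same for $x_2$ gives $x_2\in D$. But then $(x_1,y_1)$ and $(x_2,y_1)$ are both the unique $E'$-edge $\hat e_{y_1}$ of the vertex $y_1\in S'$, forcing $x_1=x_2$, a contradiction. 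Hence some allowed pair $(x,y)$ lies outside $E'$; by the previous step it also satisfies $\omega(x,y)\neq1/2$, and by the definition of $X$ and $Y$ it meets the $S^\Delta$-constraint, which completes the proof.
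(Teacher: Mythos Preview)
Your proof is correct and rests on the same structural facts as the paper's: \Cref{obs: poor connection} to rule out weight-$1/2$ edges when a $\calC^\Delta$-triangle is involved, the one-cycle-per-component property of $\calC'$ for the remaining case, and the fact that each $s\in S'$ has a unique edge $\hat e_s\in E'$ for the final exclusion. The organization differs slightly: the paper picks a candidate pair $(x,y')$ and adjusts $y'$ once if $(x,y')\in E'$ (using that the $S'$-endpoint of an $E'$-edge is unique), whereas you define allowed sets $X,Y$ with $|X|,|Y|\ge 2$ and argue globally that $X\times Y\subseteq E'$ is impossible. Your presentation cleanly separates the two constraints (weight $\neq 1/2$ holds for \emph{all} allowed pairs; $E'$-avoidance holds for \emph{some}), which is arguably tidier, but the underlying argument is the same.
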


\begin{proof}
Recall that each cycle in $\calC^\Delta$ contains exactly one vertex from $S^\Delta$.
Choose vertices $x \in C$ and $y' \in C'$, so if $C \in \calC^\Delta$ (respectively, $C' \in \calC^\Delta$), then $x \notin S^\Delta$ (respectively, $y' \notin S^\Delta$). If $(x, y') \notin E'$, then the chosen $x$ and $y = y'$ satisfy the Lemma (the condition $w(x,y) \neq 1/2$ is proved later). Otherwise, one of the vertices $x$, $y'$ must be dominating and the other nondominating. Without loss of generality assume $y' \in D$ and $x \in S$. Replace $y'$ with $y \in C'$, such that $y \neq y'$ and if $C' \in \calC^\Delta$, then $y \notin S^\Delta$. Since each cycle has at least three vertices, this is always possible. Each edge in $E'$ corresponds to a unique nondominating vertex and $(x, y')$ corresponds to $x$, so $E'$ does not contain $(x, y)$.
    
Next we prove that $\omega(x, y) \neq 1/2$. 
First, 
assume that  $C, C' \in \calC'$. Then by construction, they are not connected by any edge in $\Ehalf \setminus E(\calC^\Delta)$. 
Assume, towards contradiction, that they are connected by an edge $e\in E(\calC^\Delta)$, so $e$
belongs to a cycle $C[s, a_s, b_s]$ with $a_s, b_s \in D$ and $s \in S^\Delta$. 
It follows that either $C$ or $C'$ contains either $a_s$ or $b_s$. 
However it is not possible by \Cref{obs: intersection of C' and C Delta}, contradiction. 
Now assume 
that $C = C[s, a_s, b_s] \in \calC^\Delta$ with $a_s, b_s \in D$ and $s \in S^\Delta$. 
Since $x\in C \setminus S^\Delta$, $x$ is either $a_s$ or $b_s$. However $a_s$ and $b_s$ do not have any edges outside $C$ in $\Ghalf$ 
by \Cref{obs: poor connection}, contradiction. The lemma follows.
\end{proof}

Next we describe how to modify a pair $(C, C')$.
\begin{compactenum}
\item 
Choose vertices $x \in C$ and $y \in C'$ according to Lemma \ref{lem: right x and y}.
\item 
Traverse both cycles starting in $x$ and $y$ accordingly and denote the resulting sequences of edges in $C$ and $C'$ by $E(C) = (e_1, e_2, \ldots, e_\ell)$ and $E(C')=(e'_1, e'_2, \ldots, e'_k)$ respectively.
\item 
It follows that 
$\omega(x, y) \in \set{0,1}$.
Let  $\xi = \omega(x, y)$ and perform the following.
Set $\omega(x,y) \gets 1 - \xi$ and for every $i \in [1,\ell]$ and $j \in [1,k]$ modify the edge weights in the cycles as follows 
\begin{align*}
\omega(e_i) & \gets 
    \begin{cases}
        1-\xi, ~ \text{$i$ is even}, \\
        \xi, ~ \text{$i$ is odd},
    \end{cases} 
&
\omega(e'_j) & \gets 
    \begin{cases}
        1-\xi, ~ \text{$j$ is even}, \\
        \xi, ~ \text{$j$ is odd}.
    \end{cases} 
\end{align*}
\end{compactenum}

Note that this modification is neutral. 

If $\calC^\Delta$ contains $C$ or $C'$, then the corresponding $s \in S^\Delta$ is connected to a dominator vertex with an edge $\hat{e}_s$, such that $\omega(\hat{e}_s) = 1$ after applying the modification. Indeed, $s$ was not chosen as a starting vertex $x$ or $y$, so it is connected to one of its two neighbors $a_s$ or $b_s$ in the corresponding cycle with an edge of weight one. But both its neighbors are dominating vertices, so $\hat{e}_s$ exists. Note that neither $a_s$ nor $b_s$ belongs to any other cycles in $\calC$ by Observation \ref{obs: poor connection}, so the weight $\omega(\hat{e}_s)$ is not modified further by the algorithm. Let $E^\Delta_2$ contain all edges $\hat{e}_s$ for $s \in S^\Delta$ that were processed in this case.

%%%%%%%%%%%%%%%%%%%%%%%%%%%%%%%%%%%%%%%%%%%%%%%%
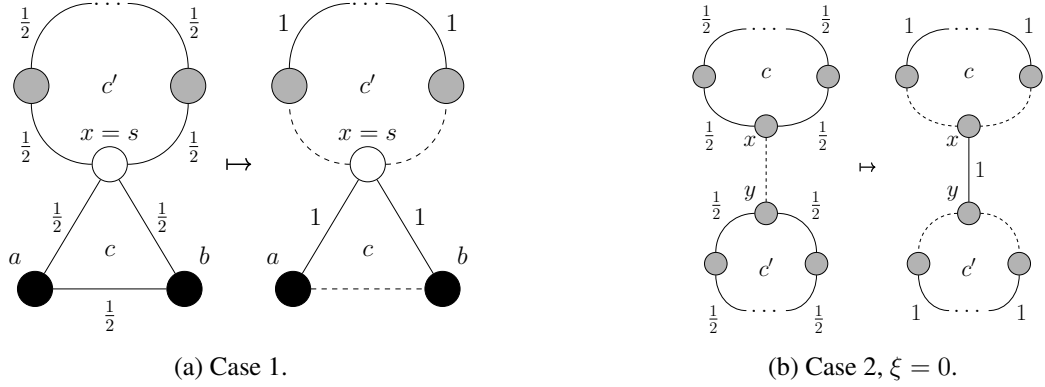
\begin{figure}[h!]
    \centering
\begin{tabular}{c@{\hspace{30pt}}c}
    %%%%%%%%%%%%%%%%%%%%%%%%%%%%%%%%%%%%%%%%
    \begin{subfigure}{.47\textwidth}
    \centering
    \resizebox{0.85\textwidth}{!}{
    \tikzset{
    myptr/.style={-{Stealth[scale=1.5]}},
    }
    
    \definecolor{b}{rgb}{0.0, 0, 1}
    \definecolor{gr}{rgb}{0.7, 0.7, 0.7}
    
    \begin{tikzpicture}[
        graynode/.style={circle, draw=black, fill=gr, minimum size=20pt, inner sep=0pt},
        blacknode/.style={circle, draw=black, fill=black, minimum size=20pt, inner sep=0pt},
        whitenode/.style={circle, draw=black, minimum size=20pt, inner sep=0pt},
        mynode/.style={minimum size=20pt, inner sep=0pt},
        font=\Large
    ]

    % Nodes in c
    \node[whitenode, label={$x=s$}] (s) {};
    \node[below = 2cm of s] (sup) {};
    \node[blacknode, label={[xshift=-0.4cm]$a$}, left=1cm of sup] (a) {};
    \node[blacknode, label={[xshift=0.4cm]$b$}, right=1cm of sup] (b) {};

    % Nodes in c'
    \node[graynode, above left=1.5cm of s] (ca) {};
    \node[graynode, above right=1.5cm of s] (cb) {};
    \node[mynode, above=2.5cm of s] (dots) {$\dots$};

    % Notations of cycles
    \node[mynode, below=1cm of s] (c) {$c$};
    \node[mynode, above=0.9cm of s] (c') {$c'$};
    
    % Edges in c
    \draw (a) -- node[pos=0.6, left=0.15cm] {$\frac{1}{2}$} (s) ;
    \draw (b) -- node[pos=0.6, right=0.15cm] {$\frac{1}{2}$} (s);
    \draw (a) -- node[pos=0.5, below=0.1cm] {$\frac{1}{2}$} (b);

    % Edges in c'
    \draw (s) to[out=180,in=-90] node[pos=0.5, left=0.2cm] {$\frac{1}{2}$} (ca);
    \draw (s) to[out=0,in=-90] node[pos=0.5, right=0.2cm] {$\frac{1}{2}$} (cb);
    \draw (ca) to[out=90,in=180] node[pos=0.5, left=0.2cm] {$\frac{1}{2}$} (dots);
    \draw (cb) to[out=90,in=0] node[pos=0.5, right=0.2cm] {$\frac{1}{2}$} (dots);

    % mapsto
    \node[right=1.7cm of s] (m1) {};
    \node[right=0.5cm of m1] (m2) {};
    \draw[|->, line width=0.3mm] (m1) -- (m2);
    
    % Nodes in c
    \node[whitenode, label={$x=s$}, right = 1.7cm of m2] (s1) {};
    \node[below = 2cm of s1] (sup1) {};
    \node[blacknode, label={[xshift=-0.4cm]$a$}, left=1cm of sup1] (a1) {};
    \node[blacknode, label={[xshift=0.4cm]$b$}, right=1cm of sup1] (b1) {};

    % Nodes in c'
    \node[graynode, above left=1.5cm of s1] (ca1) {};
    \node[graynode, above right=1.5cm of s1] (cb1) {};
    \node[mynode, above=2.5cm of s1] (dots1) {$\dots$};

    % Notations of cycles
    \node[mynode, below=1cm of s1] (c1) {$c$};
    \node[mynode, above=0.9cm of s1] (c'1) {$c'$};
    
    % Edges in c
    \draw (a1) -- node[pos=0.6, left=0.15cm] {$1$} (s1) ;
    \draw (b1) -- node[pos=0.6, right=0.15cm] {$1$} (s1);
    \draw[dashed] (a1) -- (b1);

    % Edges in c'
    \draw[dashed] (s1) to[out=180,in=-90] (ca1);
    \draw[dashed] (s1) to[out=0,in=-90] (cb1);
    \draw (ca1) to[out=90,in=180] node[pos=0.5, left=0.2cm] {$1$} (dots1);
    \draw (cb1) to[out=90,in=0] node[pos=0.5, right=0.2cm] {$1$} (dots1);
    
    \end{tikzpicture}
    }
    \caption{Case 1.}
    \label{fig: first case}
    \end{subfigure}
%%%%%%%%%%%%%%%%%%%%%%%%%%%%%%%%%%%%%%
&
%%%%%%%%%%%%%%%%%%%%%%%%%%%%%%%%%%%%%%%%
    \begin{subfigure}{.42\textwidth}
    \centering
    \resizebox{0.7\textwidth}{!}{
    \tikzset{
    myptr/.style={-{Stealth[scale=1.5]}},
    }
    
    \definecolor{b}{rgb}{0.0, 0, 1}
    \definecolor{gr}{rgb}{0.7, 0.7, 0.7}
    
    \begin{tikzpicture}[
        graynode/.style={circle, draw=black, fill=gr, minimum size=20pt, inner sep=0pt},
        blacknode/.style={circle, draw=black, fill=black, minimum size=20pt, inner sep=0pt},
        whitenode/.style={circle, draw=black, minimum size=20pt, inner sep=0pt},
        font=\huge
    ]
    
    % Nodes in c
    \node[] (dots0) {$\dots$};
    \node[graynode, below left=1.5cm of dots0] (l) {};
    \node[graynode, below right=1.5cm of dots0] (r) {};
    \node[graynode, below =2.5cm of dots0, label={[xshift=-0.5cm, yshift=-1.1cm]$x$}] (x) {};
    \node[below=4cm of dots0] (sup) {};

    % Nodes in c'
    \node[graynode, label={[xshift=-0.5cm, yshift=-0.1cm]$y$}, below=2cm of x] (cm) {};
    \node[graynode, below right=1.5cm of cm] (ca) {};
    \node[graynode, below left=1.5cm of cm] (cb) {};
    \node[below=2.5cm of cm] (dots) {$\dots$};

    % Notations of cycles
    \node[below=0.9cm of dots0] (c) {$c$};
    \node[below=0.9cm of cm] (c') {$c'$};
    
    % Edges in c
    \draw (dots0) to[out=0,in=90] node[pos=0.5, above right=0.05cm] {$\frac{1}{2}$} (r);
    \draw (dots0) to[out=180, in=90] node[pos=0.5, above left=0.05cm] {$\frac{1}{2}$} (l);
    \draw (r) to[out=-90,in=0] node[pos=0.5, below right=0.05cm] {$\frac{1}{2}$} (x);
    \draw (l) to[out=-90, in=180] node[pos=0.5, below left=0.05cm] {$\frac{1}{2}$} (x);

    % Edges {x,y}
    \draw[dashed] (x) -- (cm);
    
    % Edges in c'
    \draw (cm) to[out=0,in=90] node[pos=0.5, above right=0.05cm] {$\frac{1}{2}$} (ca);
    \draw (cm) to[out=180, in=90] node[pos=0.5, above left=0.05cm] {$\frac{1}{2}$} (cb);
    \draw (ca) to[out=-90,in=0] node[pos=0.5, below right=0.05cm] {$\frac{1}{2}$} (dots);
    \draw (cb) to[out=-90, in=180] node[pos=0.5, below left=0.05cm] {$\frac{1}{2}$} (dots);

    % mapsto
    \node[right=2.5cm of sup] (m1) {};
    \node[right=0.5cm of m1] (m2) {};
    \draw[|->, line width=0.4mm] (m1) -- (m2);
    
    % Nodes in c
    \node[right=2.5cm of m2] (sup1) {};
    \node[above=4cm of sup1] (dots01) {$\dots$};
    \node[graynode, below left=1.5cm of dots01] (l1) {};
    \node[graynode, below right=1.5cm of dots01] (r1) {};
    \node[graynode, below =2.5cm of dots01, label={[xshift=-0.5cm, yshift=-1.1cm]$x$}] (x1) {};

    % Nodes in c'
    \node[graynode, label={[xshift=-0.5cm, yshift=-0.1cm]$y$}, below=2cm of x1] (cm1) {};
    \node[graynode, below right=1.5cm of cm1] (ca1) {};
    \node[graynode, below left=1.5cm of cm1] (cb1) {};
    \node[below=2.5cm of cm1] (dots1) {$\dots$};

    % Notations of cycles
    \node[below=0.9cm of dots01] (c1) {$c$};
    \node[below=0.9cm of cm1] (c'1) {$c'$};
    
    % Edges in c
    \draw (dots01) to[out=0,in=90] node[pos=0.5, above right=0.05cm] {$1$} (r1);
    \draw (dots01) to[out=180, in=90] node[pos=0.5, above left=0.05cm] {$1$} (l1);
    \draw[dashed] (r1) to[out=-90,in=0] node[pos=0.5, below right=0.05cm]{} (x1);
    \draw[dashed] (l1) to[out=-90, in=180] node[pos=0.5, below left=0.05cm] {} (x1);

    % Edges {x,y}
    \draw (x1) -- node[pos=0.5, right=0.05cm] {$1$} (cm1);
    
    % Edges in c'
    \draw[dashed] (cm1) to[out=0,in=90] (ca1);
    \draw[dashed] (cm1) to[out=180, in=90] (cb1);
    \draw (ca1) to[out=-90,in=0] node[pos=0.5, below right=0.05cm] {$1$} (dots1);
    \draw (cb1) to[out=-90, in=180] node[pos=0.5, below left=0.05cm] {$1$} (dots1);
    
    \end{tikzpicture}
    }
    \caption{Case 2, $\xi=0$.}
    \label{fig: second case}
    \end{subfigure}
%%%%%%%%%%%%%%%%%%%%%%%%%%%%%%%%%%%%%%
\end{tabular}
\vspace{-5pt}
    \caption{Illustrations of Step 7.}
    \label{fig: step 7}
\end{figure}
\inline Step 8:
Generate the output $G$.
\\
Denote $E^\Delta = E^\Delta_1 \cup E^\Delta_2$. For each cycle $C_s \in \calC^\Delta$, the corresponding $s \in S^{\Delta}$ is connected to a dominator vertex with an edge $\hat{e}_s \in E^\Delta$ of weight one. Since there is one-to-one correspondence between $\calC^\Delta$ and $S^\Delta$, $E^\Delta$ covers all the vertices of $S^\Delta$, in the sense that for every $s \in S^\Delta$ exists the corresponding edge $\hat{e}_s \in E^\Delta$, such that $s \in \hat{e}_s$. 

After applying Steps 3--8, each edge in $G^\omega$ has weight either one or zero. On the other hand, each step preserves the weighted degrees in $G^\omega$, so $G^\omega$ is still a weighted realization of $d$. It follows that $G^\omega$ transforms into a simple graph $G$ with an edge $(i, j)$ whenever $\omega(i,j)=1$ in $G^\omega$. Clearly, $G$ realizes $d$.

Finally, since $S = S' \cup S^\Delta$, every nondominating vertex $s \in S$ is connecting to the dominating set by some edge in $E' \cup E^\Delta$. These edges have weight one in $G^\omega$, so $G$ contains them all. Therefore, $D$ is a dominating set in $G$.

\begin{lemma}
\label{lem: MDS equivalence}
There is a $\gamma$-prefix-dominated realization $\hat{G}$ of $(d,d)$ if and only if there is a $\gamma$-prefix-dominated realization $G$ of $d$.
\end{lemma}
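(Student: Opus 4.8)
The plan is to prove the two directions separately, with the forward direction ($\Rightarrow$) supplied almost entirely by the machinery of Steps 1--8 developed above, and the reverse direction ($\Leftarrow$) by an elementary symmetric-doubling construction.

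For the reverse direction, I would suppose $G$ is a $\gamma$-prefix-dominated realization of $d$, so $G$ realizes $d$ and the prefix $D=[1,\gamma]$ of highest-degree vertices is a dominating set of $G$. I would build $\hat{G}=(V,W,\hat{E})$ by the symmetric doubling $\hat{E}=\{(v_i,w_j),(v_j,w_i)\mid (i,j)\in E(G)\}$. Then each $v_i$ (and each $w_i$) has exactly one bipartite neighbor for every $G$-neighbor of $i$, so $\deg_{\hat G}(v_i)=\deg_{\hat G}(w_i)=d_i$, giving (D1). An edge $(v_i,w_i)$ would force a self-loop $(i,i)$ in the simple graph $G$, which cannot occur, so (D2) holds. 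For (D3), any $v_s$ with $s>\gamma$ lies outside $\hat{D}$; since $s\notin D$ is dominated in $G$, it has a neighbor $j\in[1,\gamma]$, whence $(v_s,w_j)\in\hat{E}$ with $w_j\in\hat{D}_W$, and symmetrically for $w_s$. Thus every vertex outside $\hat{D}$ has a neighbor in $\hat{D}$, establishing (D3).

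For the forward direction, I would suppose $\hat{G}$ is a $\gamma$-prefix-dominated realization of $(d,d)$ satisfying (D1)--(D3), and feed it into the algorithm of Steps 1--8. Step~1 produces the half-integral realization $G^\omega$ of $d$ for which $D=[1,\gamma]$ already dominates with total edge weight at least one into each nondominating vertex (Inequality~\eqref{eq: dominating weight}), and Steps 3--8 round $G^\omega$ to an integral $G$ while (as argued in the construction above) preserving every weighted degree and retaining, for each $s\in S=[\gamma+1,n]$, at least one weight-one edge into $D$ via the edge set $E'\cup E^\Delta$. Hence $G$ realizes $d$ and $D$ dominates $G$. Since $d$ is non-increasing, $D=[1,\gamma]$ are precisely the $\gamma$ highest-degree vertices, so $G$ is a $\gamma$-prefix-dominated realization of $d$.

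The reverse direction is routine; the genuine difficulty lies entirely on the forward side and has already been confronted in Steps 3--8. The crux there is that the rounding is oblivious to domination, so naively discarding the weight-$1/2$ edges could sever a nondominating vertex's only link to $D$. The delicate part I expect to be the main obstacle is showing that the neutral cycle modifications can always be oriented so that each $s\in S^\Delta$ keeps a full-weight edge to one of its two dominators, which is exactly what the pairing of odd cycles together with the choice of starting vertices (via \Cref{lem: right x and y}) guarantees, while \Cref{obs: poor connection} ensures these protected edges are never disturbed again.
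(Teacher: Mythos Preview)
Your proposal is correct and follows essentially the same approach as the paper: the forward direction invokes the algorithm of Steps~1--8 and its established invariants, and the reverse direction uses the symmetric doubling $\hat{E}=\{(v_i,w_j)\mid (i,j)\in E(G)\}$ and verifies (D1)--(D3) exactly as the paper does.
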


\begin{proof}
$(\Rightarrow)$ 
If there $\gamma$-prefix-dominated realization of $(d,d)$, then the aforementioned algorithm produces a $\gamma$-prefix-dominated realization of $d$.

$(\Leftarrow)$ 
Let $G = (V, E)$ with $V = [1,n]$ be a realization of $d$, where $\deg(i) = d_i$ for $i \in [1,n]$, 
with a dominating set $D = [1,\gamma]$. 
Consider the 
graph $\hat{G} = (\hat{V}, \hat{W}, \hat{E})$ with $\hat{V} = \{v_1, \ldots, v_n\}$, $\hat{W} = \{w_1, \ldots, w_n\}$ and $(v_i, w_j) \in \hat{E}$ if and only if $(i, j) \in E$. 
Clearly, $\hat{G}$ realizes $(d,d)$ (satisfying property (D1)) and does not have edges $(v_i, w_i)$ for every $i \in [1,n]$ 
(satisfying 
(D2)). 
Let $\hat{D} = \hat{D}_V \cup \hat{D}_W$, where $\hat{D}_V = \{v_1, \ldots, v_{\gamma}\}$ and $\hat{D}_W= \{w_1, \ldots, w_{\gamma}\}$ are prefixes of $\hat{V}$ and $\hat{W}$, respectively. Since $D$ is a dominating set in $G$, every vertex in $\hat{W} \setminus \hat{D}_W$ is connected to some dominator vertex in $\hat{D}_V$ and every vertex in $\hat{V} \setminus \hat{D}_V$ is connected to some dominator vertex in $\hat{D}_W$. Thus, $\hat{D}$ is a dominating set in $\hat{G}$ (satisfying 
(D3)).
\end{proof}

%%%%%
\subsection{Reduction to Flow}
We have reduced the initial problem of finding a general realization with a dominating set of a certain type to the same problem in a bipartite setting.
Hence, we are left with the task of constructing, for a given sequence $d$ with $\domset(d)=\gamma$, a $\gamma$-prefix-dominated realization $\hat{G} = (V, W, \hat{E})$ for the sequence pair $(d,d)$.
%Consider a degree sequence $d = (d_1, \ldots d_n)$. We seek a bipartite graph $G = (V, W, E)$, where $V = \{v_1, \ldots, v_n\}$ and $W = \{w_1, \ldots, w_n\}$, that realizes the degree sequences $(d, d)$, meaning that $\deg_G(v_i)= \deg_G(w_i) = d_i$ for every $i \in [1,n]$. Moreover, $G$ must have a dominating set $D = D_V \cup D_W$ of minimum size, where $D_V$ and $D_W$ consist of prefixes of the same size of $V$ and $W$, namely, $D_V = \{v_1, \ldots, v_t\}$ and $D_W= \{w_1, \ldots, w_t\}$.

%david{For consistency with the previous section, change the notation of the bipartite graph and its edge set to $\hat{G}$ and $\hat{E}$ in this section.}

%dror{$t$ is used for the sink and for the size of the MDS. Let's use $\gamma$ for the size of the MDS.}

%\paragraph{Reduction description.}
To find the desired realization minimizing the dominating set, we go through every possible value of $\gamma$ 
and check if a suitable realizing graph $\hat{G}$ exists by reducing the problem to a flow problem. 
Specifically, given $d$ and a value $\gamma$, we construct a directed bipartite flow graph $G_{d,\gamma}$ with edge set $\calE$
%david{choose better name} 
as follows. 
Corresponding to $V$ and $W$, we have the sets $X=\{x_1,\ldots,x_n\}$ and $Y=\{y_1,\ldots,y_n\}$.
The node set of the flow graph is
\(
V ~=~ X ~\cup~ Y ~\cup~ X'_S ~\cup~ Y'_S ~\cup~ \{s, t\}
\),
where the nodes are organized into the following categories.
\begin{compactitem}
\item 
Candidate dominators: $X_D =\{x_i \mid i \in [1,\gamma]$ and $Y_D = \{y_j \mid j \in [1,\gamma]\}$,
\item 
Nondominating nodes: 
$X_S = \{x_i \mid i \in [\gamma+1,n]$ and $Y_S = \{y_j \mid j \in [\gamma+1,n]\}$,
%$X_S = X \setminus X_D$ and $Y_S = Y\setminus Y_D$,
\item 
Supplementary vertices: 
%(enforcing domination conditions): 
$X'_S = \{x'_i \mid i \in [\gamma+1,n]$ and $Y'_S = \{y'_j \mid j \in [\gamma+1,n]\}$.
\item 
Source and sink: \{s,t\}.
\end{compactitem}
%david{For thought later: Consider changing $X'_S$ / $Y'_S$ to $\tilde{X}_S$ / $\tilde{Y}_S$, since we use "prime" with other meanings.}

The source $s$ is connected to the nodes of $X$. Similarly, the nodes of $Y$ are connected to the sink $t$. In addition, the supplementary nodes of $X'_S$ and $Y'_S$ are used to create the connections and flow capacities and enforce the degree and domination constraints. See Figure 
%{\ref{fig:flow graph scheme}} and 
{\ref{fig:flow graph}}, in which the flow from $s$ to $t$ goes left-to-right.
%
% Following is problematic notation ($V$ is a set of sets)
%
%\[V = \left\{\begin{aligned}
%    &\{x_1, \ldots, x_n\}, \\
%    &\{y_1, \ldots, y_n\}, \\
%    &\{x'_i \mid i \in \{t+1, \ldots, n\}, \\
%    &\{y'_j \mid j \in \{t+1, \ldots, n\}, \\
%    &\{s, t\}.
%\end{aligned}\right\}\]

%Accordingly, graph is divided into six parts: MOVED TO EARLIER

The edges of $\calE$ are capacitated and directed (from left to right). We use the notation $(\alpha, \beta, \delta)$ for an edge that leads from the node $\alpha$ to the node $\beta$ and can carry up to $\delta$ units of flow. The source $s$ has edges leading to every node $x_i \in X$, with capacity $d_i$, to enforce the degree constraint for the corresponding vertex $v_i$ in the realizing graph $G$. Similarly, there are edges leading from every node $y_j \in Y$ to the sink $t$, with capacity $d_j$, to enforce the degree constraint for the vertices $w_j$ in $G$. 

Additional edges are employed to ensure the correctness of the reduction. 
We write $\tilde{X} \hourglass \tilde{Y}$ to indicate that the vertices $\tilde{X}$ and $\tilde{Y}$ induce an almost complete bipartite graph in $G_{d, \gamma}$, missing only the edges mentioned in property (D2) 
(i.e., $\{(x_i, y_i) \mid i \in [1,n]$ and $\{(x'_i, y'_i) \mid i \in [\gamma+1,n]$, 
which are not present in $G_{d,\gamma}$). The various groups of $V$ are connected by $\calE$ in the following way: 
$X_D \hourglass Y_D$, $X_D \hourglass Y_S$, $X_S \hourglass Y_D$ and $X'_S \hourglass Y'_S$ with each edge having capacity one. 
Finally, every $x_i \in X_S$ is connected to the corresponding $x'_i \in X'_S$ with capacity $d_i - 1$, and similarly, 
every $y'_j \in Y'_S$ is connected to the corresponding $y_j \in Y_S$ with capacity $d_j - 1$. Denote these connections by $X_S \equiv X'_S$ and $Y'_S \equiv Y_S$. The connections are depicted schematically in Figure 
%\ref{fig:flow graph scheme} and 
\ref{fig:flow graph} and all the edges are listed below.
%david{For thought later: Do we want both figures? If only one - which? My feeling: keep only Figure 10.}
\begin{align*}
\calE ~=~ 
& \{(s, x_i, d_i) \mid i \in [1,n] ~\cup~ \{(y_j, t, d_j) \mid j \in [1,n]\} 
   ~\cup~ \{(x_i, y_j, 1) \mid i,j \in [1,\gamma], ~i \neq j\}  \\
& \cup~ \{(x_i, y_j, 1) \mid i \in [1,\gamma],~ j \in [\gamma+1,n]\} 
 ~\cup~ \{(x_i, y_j, 1) \mid i \in [\gamma+1,n],~ j \in [1,\gamma]\} \\
& \cup~ \{(x_i, x_i', d_i-1) \mid i \in [\gamma+1,n]\}  ~\cup~ \{(y'_j, y_j, d_j-1) \mid j \in [\gamma+1,n]\} \\
& \cup~ \{(x'_i, y'_j, 1) \mid i,j \in [\gamma+1,n],~ i \neq j\}
~.
\end{align*}

%%%%%%%%%%%%%%%%%%%%%%%%%%%%%%%%%%%%%%%%
\begin{figure}[h!]
\centering
\resizebox{0.6\textwidth}{!}{
\tikzset{
myptr/.style={-{Stealth[scale=1.5]}},
}

\definecolor{r}{rgb}{1, 0, 0.0}
\definecolor{g}{rgb}{0.0, 0.5, 0.0}
\definecolor{b}{rgb}{0.0, 0, 1}
\definecolor{rg}{rgb}{1, 0.5, 0.0}
\definecolor{gb}{rgb}{0, 0.5, 1}
\def\thicc{1.5}

\begin{tikzpicture}[
    roundnode/.style={circle, draw=black, minimum size=35pt, inner sep=0pt, line width=1pt, font=\huge},
    squarebox/.style={draw=r, inner sep=5pt, font=\huge},
    roundedbox/.style={draw=g, rounded corners=15pt, inner sep=5pt, font=\huge}, 
    roundedboxprime/.style={draw=b, rounded corners=15pt, inner sep=5pt, font=\huge}, 
    % Increase the corner radius
%    font=\Large
    dots/.style={
    draw=none, fill=none, font=\huge},
    edgenode/.style={font=\huge}
]

% X-Nodes
\node[dots] (x1) {$\vdots$};
\node[roundnode, below=1cm of x1] (xi) {$x_{i'}$};
\node[dots, below=1cm of xi] (xl) {$\vdots$};

\node[dots, below=1cm of xl] (xl1) {$\vdots$};
\node[roundnode, below=1cm of xl1] (xi') {$x_i$};
\node[dots, below=1cm of xi'] (xn) {$\vdots$};

\node[font = \huge, color=r, left=0.8cm of x1] (XD) {$X_D$};
\node[font = \huge, color=g, left=0.8cm of xn] (XS) {$X_S$};

% Create a virtual node to center 's'
\node[fit=(x1)(xn)] (xcenter) {};

% Position 's' relative to the virtual node
\node[roundnode, left=4cm of xcenter] (s) {s};

% X'-Nodes
\node[dots, below right =6cm of xl] (x'1) {$\vdots$};
\node[roundnode, below=1cm of x'1] (x'i) {$x'_i$};
\node[dots, below=1cm of x'i] (x'n) {$\vdots$};

\node[font = \huge, color=b, left=0.8cm of x'n] (X'S) {$X'_S$};

% Y-Nodes
\node[dots, right=16cm of x1] (y1) {$\vdots$};
\node[roundnode, below=1cm of y1] (yj) {$y_{j'}$};
\node[dots, below=1cm of yj] (yk) {$\vdots$};

\node[dots, below=1cm of yk] (yk1) {$\vdots$};
\node[roundnode, below=1cm of yk1] (yj') {$y_j$};
\node[dots, below=1cm of yj'] (ym) {$\vdots$};

\node[font = \huge, color=r, right=0.8cm of y1] (YD) {$Y_D$};
\node[font = \huge, color=g, right=0.8cm of ym] (YS) {$Y_S$};

% Create a virtual node to center 't'
\node[fit=(y1)(ym)] (ycenter) {};

% Position 's' relative to the virtual node
\node[roundnode, right=4cm of ycenter] (t) {t};

% Y'-Nodes
\node[dots, below left=6cm of yk] (y'1) {$\vdots$};
\node[roundnode, below=1cm of y'1] (y'j) {$y'_j$};
\node[dots, below=1cm of y'j] (y'm) {$\vdots$};

\node[font = \huge, color=b, right=0.8cm of y'm] (Y'S) {$Y'_S$};

% Arrows
\draw[line width=\thicc] (s) -- (xi) node[edgenode, pos=0.5, above=0.2cm] {$d_{i'}$};
\draw[line width=\thicc] (s) -- (xi') node[edgenode, pos=0.5, below=0.2cm] {$d_i$};

\draw[line width=\thicc] (yj) -- (t) node[edgenode, pos=0.75, above=0.2cm] {$d_{j'}$};
\draw[line width=\thicc] (yj') -- (t) node[edgenode, pos=0.75, below=0.2cm] {$d_j$};

\draw[line width=\thicc] (xi) -- (yj) node[edgenode, pos=0.5, above=0.2cm] {$1$};
\draw[line width=\thicc] (xi) -- (yj') node[edgenode, pos=0.5, above=0.2cm] {$1$};
\draw[line width=\thicc] (xi') -- (yj) node[edgenode, pos=0.5, above=0.2cm] {$1$};

\draw[line width=\thicc] (x'i) -- (y'j) node[edgenode, pos=0.5, above=0.2cm] {$1$};
\draw[line width=\thicc] (xi') -- (x'i) node[edgenode, pos=0.75, below left=0.1cm] {$d_i-1$};
\draw[line width=\thicc] (y'j) -- (yj') node[edgenode, pos=0.25, below right=0.1cm] {$d_j-1$};

% Fit boxes
\begin{scope}[on background layer]
    \node[squarebox, line width=\thicc, fit=(x1) (xi) (xl)] {};
    \node[roundedbox, line width=\thicc, fit=(xl1) (xi') (xn)] {};
    \node[roundedboxprime, line width=\thicc, fit=(x'1) (x'i) (x'n)] {};
    \node[squarebox, line width=\thicc, fit=(y1) (yj) (yk)] {};
    \node[roundedbox, line width=\thicc, fit=(yk1) (yj') (ym)] {};
    \node[roundedboxprime, line width=\thicc, fit=(y'1) (y'j) (y'm)] {};
\end{scope}

\end{tikzpicture}
}
\caption{The flow graph. 
}
\label{fig:flow graph}
\end{figure}
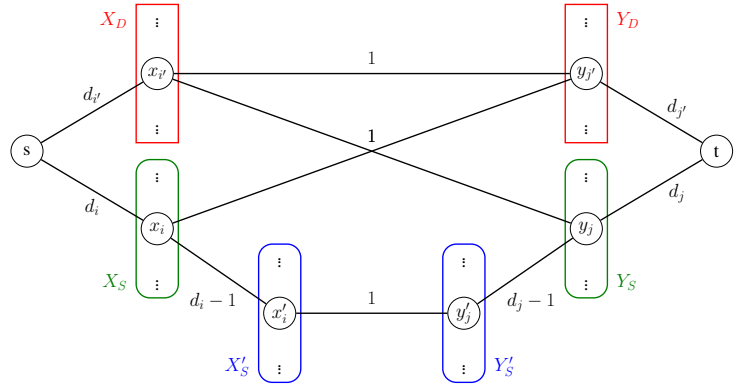
%%%%%%%%%%%%%%%%%%%%%%%%

Next, we compute a maximum flow from the source $s$ to the sink $t$ in 
Since all the capacities are integral, there is an integral maximum flow, 
which can be computed with corresponding algorithms. 
Hence, we may assume that the computed maximum flow is integral.
Let $D = D_A \cup D_B$ with $D_A = \{v_{1}, \ldots, v_{\gamma}\}$ and 
$D_B =  \{w_{1}, \ldots, w_{\gamma}\}$. 
Note that by definition, $D_A = \{v_i \mid x_i \in X_D\}$ and $D_B = \{w_j \mid y_j \in Y_D\}$

\begin{lemma} 
\label{lem:proof of flow reduction}
There exists a bipartite graph $\hat{G} = (V, W, \hat{E})$ that realizes the degree sequence pair 
$(d, d)$ such that $D$ is a dominating set in $\hat{G}$ if and only if 
the maximum $s-t$ flow in $G_{d, \gamma}$ is $\sum_{i=1}^n d_i$.
\end{lemma}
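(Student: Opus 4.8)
The plan is to establish the equivalence by exhibiting a concrete bijective-style correspondence between integral $s$--$t$ flows of value $\sum_{i=1}^n d_i$ in $G_{d,\gamma}$ and $\gamma$-prefix-dominated realizations of $(d,d)$. The starting observation is that the cut separating $s$ from the rest of $V$ has capacity $\sum_{i=1}^n d_i$ (and likewise the cut isolating $t$), so the maximum flow never exceeds $\sum_{i=1}^n d_i$; attaining this value is therefore equivalent to saturating every source edge $(s,x_i,d_i)$ and every sink edge $(y_j,t,d_j)$. Since all capacities are integral, I would work with an integral maximum flow throughout, as noted before the lemma.

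For the direction $(\Rightarrow)$, given a $\gamma$-prefix-dominated realization $\hat G$, I would route one unit of flow per edge of $\hat E$. An edge $(v_i,w_j)$ with at least one endpoint among the first $\gamma$ indices is sent directly along $s\to x_i\to y_j\to t$; this is legal because the required direct edge lies in one of $X_D\hourglass Y_D$, $X_D\hourglass Y_S$, or $X_S\hourglass Y_D$, and property (D2) guarantees $i\neq j$ in the $X_D\hourglass Y_D$ case. An edge $(v_i,w_j)$ with both $i,j>\gamma$ is instead routed through the supplementary layer, $s\to x_i\to x_i'\to y_j'\to y_j\to t$, which is available since (D2) again gives $i\neq j$, so $(x_i',y_j')\in X_S'\hourglass Y_S'$. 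Flow conservation at each node then reduces to the degree identities, and every unit-capacity edge carries at most one unit because $\hat G$ is simple. The only nontrivial capacity check is on $(x_i,x_i',d_i-1)$ and $(y_j',y_j,d_j-1)$: the flow they carry equals the number of neighbours of $v_i$ (resp.\ $w_j$) that are themselves nondominating, and property (D3) forces at least one neighbour into the prefix dominating set, so this count is at most $d_i-1$ (resp.\ $d_j-1$). Hence the routing respects all capacities and saturates every source edge, giving a flow of value $\sum_{i=1}^n d_i$.

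For the direction $(\Leftarrow)$, given an integral maximum flow of value $\sum_{i=1}^n d_i$, I would read off $\hat G$ by declaring $(v_i,w_j)\in\hat E$ exactly when the corresponding unit-capacity edge carries flow: the direct edge $(x_i,y_j)$ when $\min\{i,j\}\le\gamma$, and the supplementary edge $(x_i',y_j')$ when $i,j>\gamma$. Property (D1) follows from flow conservation together with saturation: the degree of $v_i$ equals the outflow of $x_i$, which is $d_i$ for $i\le\gamma$ directly, and for $i>\gamma$ after noting that the flow on $(x_i,x_i')$ equals the number of chosen supplementary edges leaving $x_i'$; the symmetric computation handles each $w_j$. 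Property (D2) is immediate because the edges $(x_i,y_i)$ for $i\le\gamma$ and $(x_i',y_i')$ for $i>\gamma$ are absent from $G_{d,\gamma}$ by the $\hourglass$ convention. Finally, property (D3) is exactly where the capacities $d_i-1$ do their work: for $i>\gamma$ the flow on $(x_i,x_i')$ is at most $d_i-1$, so at least one of the $d_i$ units leaving $x_i$ must traverse a direct edge into $Y_D$, i.e.\ $v_i$ has a neighbour in the prefix dominating set; the analogous argument via $(y_j',y_j,d_j-1)$ dominates each nondominating $w_j$.

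I expect the main obstacle to be bookkeeping the correspondence precisely rather than any deep difficulty. One must be careful that edges between two nondominating vertices have no direct representation in $G_{d,\gamma}$ (there is deliberately no $X_S\hourglass Y_S$) and are therefore forced through the supplementary layer, and that the ``one unit withheld'' by the $d_i-1$ capacities plays a dual role: it is simultaneously \emph{necessary} for feasibility in the $(\Rightarrow)$ direction and \emph{sufficient} to enforce domination in the $(\Leftarrow)$ direction. Verifying that these two roles of the same capacity bound coincide, and that flow conservation cleanly reproduces the prescribed degrees without any path-decomposition ambiguity, is the crux; the remaining verifications are routine.
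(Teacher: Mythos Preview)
Your proposal is correct and follows essentially the same approach as the paper's proof: both directions construct the explicit correspondence between edges of $\hat G$ and unit-capacity edges of $G_{d,\gamma}$ (direct edges when at least one endpoint is in the prefix, supplementary edges otherwise), verify degrees via flow conservation and saturation of the source/sink edges, obtain (D2) from the absent diagonal edges, and derive (D3) from the $d_i-1$ bottleneck capacities. Your path-routing language for the $(\Rightarrow)$ direction is a cosmetic variant of the paper's explicit edge-by-edge flow assignment, but the construction and the verifications are the same.
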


\begin{proof}
$(\Leftarrow)$ Suppose the maximum $s$-$t$ flow in $G_{d,\gamma}$ is $\sum_{i=1}^n d_i$,
and let $\FLOW$ be an integral maximum flow.
Define the bipartite graph $\hat{G} = (V,W,\hat{E})$, where $V = \{v_1, \ldots, v_n\}$ and $W = \{w_1, \ldots, w_n\}$, 
by adding an edge $(v_i, w_j)$ to $\hat{E}$, for every $i, j \in [1,n]$ such that there is 
an edge $e$ in $G_{d,\gamma}$ of the form $(x_i, y_j)$ or $(x'_i, y'_j)$ with $\FLOW(e)=1$. 
Since the flow in integral, its values on the edges $(x_i, y_j)$ and $(x'_i, y'_j)$ are either $0$ or $1$.

To verify that $\hat{G}$ correctly realizes $(d,d)$, we first observe that the selected edges form a graph (rather than a multigraph). Indeed, for any pair $i, j \in [1,n]$, the flow graph $G_{d,\gamma}$ contains either only the edge $(x'_i, y'_j)$ (in case $x_i \in X_S$ and $y_j \in Y_S$) or only the edge $(x_i, y_j)$ (otherwise), but not both.

Next we show that $\deg_G(v_i)=d_i$ for every $i \in [1,n]$ and $\deg_G(w_j)=d_j$ for every $j \in [1,n]$. Note that the total flow from $s$ to $t$ is $\sum_{i=1}^n d_i$, which means that all edges from vertices $y_j$ to $t$ and from $s$ to $x_i$ are saturated.
Four cases should be considered. 
\begin{compactdesc}
\item[Case A:] 
If $v_i \in D_A$, then the degree $\deg_{\hat{G}}(v_i)$ of $v_i$ in $\hat{G}$ is set
      by the number of edges $(x_i, y_j)$ in $\calE$ with $\FLOW(x_i, y_j)=1$ for $j \in [1,n]$. 
      Since $x_i$ receives a flow of $d_i$ from the source $s$, by flow conservation it outputs 
      the same amount (in one unit flows) through edges $(x_i, y_j)$, hence its degree is $d_i$.
\item[Case B:]
If $w_j \in D_B$, then $\deg_{\hat{G}}(w_j)$ is determined by the number of edges $(x_i, y_j)$ in $\calE$ with $\FLOW(x_i, y_j)=1$ 
for $i \in [1,n]$, and its analysis is similar to Case A.
\item[Case C:]
 If $v_i \in V \setminus D_A$, then $\deg_{\hat{G}}(v_i)$ is determined by the number of edges $(x_i, y_j)$ and $(x'_i, y'_j)$ in $\calE$ carrying a flow unit for $j \in [1,n]$. We divide the output flow of $v_i$ into two terms: $\FLOW(s, x_i) = \sum_{j \in D_B} \FLOW(x_i, y_j) + \FLOW(x_i, x'_i)$. The first term increases the degree by the corresponding amount of flow, because each edge $(x_i, y_j)$ has capacity one. The rest of the flow goes to $x'_i$. Since $x'_i$ outputs flow through the edges $(x'_i, y'_j)$ with unit capacity, it increases the degree of $v_i$ by $\FLOW(x_i, x'_i)$. Hence, the degree of $v_i$ is  $\deg_{\hat{G}}(v_i) = \FLOW(s, x_i) = d_i$.
\item[Case D:]
If $w_j \in W \setminus D_B$, then $\deg_{\hat{G}}(w_j)$ is determined by the number of edges $(x_i, y_j)$ and $(x'_i, y'_j)$ in $\calE$ carrying a flow unit for $i \in [1,n]$, and its analysis is similar to Case C. 
\end{compactdesc}
Therefore, $\hat{G}$ correctly realizes the
degree sequence pair $(d, d)$, so property (D1) holds.
Also, since $G_{d,\gamma}$ contains neither $\{(x_i, y_i) \mid i \in [1,n] \}$ nor $\{(x'_i, y'_i) \mid i \in [\gamma+1,n]\}$, 
it follows that $\hat{G}$ does not have edges $\{(v_i, w_i) \mid i \in [1,n]\}$, hence it satisfies property (D2).

It remains to show property (D3), namely, that $D$ is a dominating set for $\hat{G}$. Since $y_j \in Y_S$ conserves flow and $\FLOW(y'_j, y_j) \le d_j - 1$, there is at least one $x_i \in X_D$ with $\FLOW(x_i, y_j) = 1$. Similarly, for every $x_i \in X_S$ there exists at least one $y_j \in Y_D$ such that $\FLOW(x_i, y_j) = 1$. It follows that at least one vertex from $D$ is connected to each $v_i \in V \setminus D$ and $w_j \in W \setminus D$.

%\bigskip
\noindent
$(\Rightarrow)$
Conversely, suppose that $\hat{G} = (V, W, \hat{E})$, where $V = \{v_1, \ldots, v_n\}$ and $W = \{w_1, \ldots, w_n\}$, is a bipartite graph realizing $(d, d)$, such that $D$ is a dominating set. We need to define a flow function on $G_{d, \gamma}$ and prove that the maximum $s - t$ flow equals to $\sum_{i=1}^n d_i$. Obviously, the flow cannot exceed $\sum_{i=1}^n d_i$, so it suffices to prove that $\sum_{i=1}^n d_i$ is achievable. The flow is defined as follows.

\begin{enumerate}
\item 
\textbf{Saturating all edges from the source and to the sink}: \\
Set $\FLOW(s, x_i) \leftarrow d_i$
and $\FLOW(y_j, t) \leftarrow d_j$ ~~
for every $i,j \in [1,n]$.
\item 
\textbf{Flow to supplementary $X$ vertices}: \\
For every $i \in [\gamma+1,n]$, set $N_i \gets |\{(v_i, w_j) \in \hat{E} \mid j \in [\gamma+1,n]\}|$. \\
Set $\FLOW(x_i, x'_i) \leftarrow N_i \quad \text{for every } x_i \in X_S$
\\
(Note that since $v_i$ is connected with at least one dominating vertex, $N_i \le d_i - 1$, so the flows satisfy the capacity constraints.)
\item 
\textbf{Flow from supplementary $Y$ vertices}:
\\
For every $j \in [\gamma+1,n]$,
set $M_j \gets |\{(v_i, w_j) \in \hat{E} \mid i \in [\gamma+1,n]\}|$.
\\
Set $\FLOW(y'_j, y_j) \leftarrow M_j \quad \text{for every } y_j \in Y_S$.
\\
(Note that again the flows satisfy the capacity constraints.)
\item 
\textbf{Flow through edges}:
\\
For every pair $i, j \in [1,n]$ such that $(v_i, w_j) \in \hat{E}$:
\begin{compactenum}
\item If $x_i \in X_S$ and $y_j \in Y_S$, then  
%\\
set $\FLOW(x'_i, y'_j) \leftarrow 1$,
\item If either $x_i \in X_D$ or $y_j \in Y_D$, then 
%\\
set $\FLOW(x_i, y_j) \leftarrow 1$.
\end{compactenum}
\item 
\textbf{Completing the flow function}: 
\\
Set the flows of all other edges to $0$.
\end{enumerate}

%%%%%%%%%%%%%%%%%%%%%%%%%%%%%%%%%%%%%

\paragraph{Proof of Flow Legality.}
Note that by construction, the total flow out of the source $s$ and into the sink $t$ is $\sum_{i=1}^n d_i$.
It remains to verify that all other vertices conserve flow.
\begin{description}
\item[Node $x_i$:] 
By construction, the incoming flow at $x_i$ is $\FLOW(s, x_i) = d_i$. 
As for the outgoing flow, recall that 
there are exactly $d_i$ indices $j$ for which $(v_i,w_j)\in$ $\hat{E}$. 
\begin{compactitem}
\item If $i \in [1,\gamma]$, then the flows emanating from $x_i$ are $\FLOW(x_i, y_j)=1$ for precisely those indices. 
      Thus, the outgoing flow is $\sum_{j=1}^n \FLOW(x_i, y_j)=d_i$.
\item If $i \in [\gamma+1,n]$, then the outgoing flows from $x_i$ are $\FLOW(x_i, y_j)=1$ for $j \in [1,\gamma]$ 
      and $\FLOW(x_i, x'_i)= N_i = |\{ (v_i, w_j) \in \hat{E} \mid j \in [\gamma+1,n]|$. 
      Thus, the outgoing flow is $\sum_{j \in [1,\gamma]} \FLOW(x_i, y_j) + \FLOW(x_i, x'_i) = d_i$. 
\end{compactitem}
\item[Node $y_j$:]
By construction, the outgoing flow at $y_j$ is $\FLOW(y_j, t) = d_j$. 
As for the incoming flow, recall that there are exactly $d_j$ indices $i$ for which $(v_i,w_j)\in$ $\hat{E}$. 
\begin{compactitem}
    \item  If $j \in D_B$ then the flows incoming in $y_j$ are $\FLOW(x_i, y_j)=1$ for precisely those indices. Thus, the outgoing flow is $\sum_{i=1}^n \FLOW(x_i, y_j)=d_j$.
    \item If $j \notin D_B$ then the incoming flows to $y_j$ are $\FLOW(x_i, y_j)=1$ for $i \in D_A$ and $\FLOW(y'_j, y_j) = M_j = |\{ (v_i, w_j) \in \hat{E} \mid i \in V \setminus D_A \}|$. Thus, the outgoing flow is $\sum_{i \in D_A} \FLOW(x_i, y_j) + \FLOW(y'_j, y_j) = d_j$. 
\end{compactitem}
\item[Node $x'_i$:]
The incoming flow at $x'_i$ is $N_i$ by construction. The outgoing flow is $\sum_{ j \in [\gamma+1,n]} \FLOW(x'_i, y'_j) = N_i$ by definition. 
\item[Node $y'_j$:]
The outgoing flow from $y'_j$ is $M_j$ by construction. The incoming flow is $\sum_{i \in [\gamma+1,n]} \FLOW(x'_i, y'_j) = M_j$ by definition. 
\end{description}

Hence, the flow function defined above is valid, and the maximum $s-t$ flow in $G_{d, \gamma}$ equals to $\sum_{i=1}^n d_i$, implying the existence of a bipartite graph $G$ with $D$ as a dominating set.
\end{proof}

The next theorem states the main result of this section.

\begin{theorem}
\label{thm:real-MDS}
There exists a polynomial-time algorithm for constructing a realization of a given graphic sequence $d$ that also has a dominating set $D$ of minimum size (among all possible realizations of $d$).
\end{theorem}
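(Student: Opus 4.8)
The plan is to assemble the three ingredients established above---the Prefix Lemma (\Cref{lem: MDS prefix}), the bipartite--general equivalence (\Cref{lem: MDS equivalence}), and the flow reduction (\Cref{lem:proof of flow reduction})---together with a linear search over the candidate dominating-set size $\gamma$. First I would state the algorithm explicitly: for each $\gamma \in [1,n]$, build the flow network $G_{d,\gamma}$, compute an integral maximum $s$--$t$ flow, and test whether its value equals $\sum_{i=1}^n d_i$. Let $\gamma^\ast$ denote the smallest $\gamma$ passing this test.

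Next I would show that $\gamma^\ast = \domset(d)$ by chaining the equivalences in both directions. On one hand, if $\domset(d) = \gamma$, then by \Cref{lem: MDS prefix} the sequence $d$ admits a $\gamma$-prefix-dominated realization; by the $(\Leftarrow)$ direction of \Cref{lem: MDS equivalence} this yields a $\gamma$-prefix-dominated realization of $(d,d)$, and then the $(\Rightarrow)$ direction of \Cref{lem:proof of flow reduction} forces the maximum flow in $G_{d,\gamma}$ to reach $\sum_i d_i$; hence $\gamma^\ast \le \domset(d)$. On the other hand, whenever the flow in $G_{d,\gamma}$ equals $\sum_i d_i$, the $(\Leftarrow)$ direction of \Cref{lem:proof of flow reduction} produces a bipartite realization of $(d,d)$ with a dominating set of size $2\gamma$ of the prescribed prefix form, and the $(\Rightarrow)$ direction of \Cref{lem: MDS equivalence} (i.e., the FHM-based transformation of Steps~1--8) converts it into a realization of $d$ with a dominating set of size $\gamma$; hence $\domset(d) \le \gamma^\ast$. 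Combining the two inequalities gives $\gamma^\ast = \domset(d)$.

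Having fixed $\gamma^\ast$, I would then actually emit the output graph: take the integral maximum flow computed at $\gamma^\ast$, apply the forward construction of \Cref{lem:proof of flow reduction} to obtain the bipartite realization $\hat G$ of $(d,d)$, and run Steps~1--8 to convert $\hat G$ into the general realization $G$ of $d$. By construction $G$ is $\gamma^\ast$-prefix-dominated, so $\{1,\ldots,\gamma^\ast\}$ is a dominating set of the minimum possible size. For the running time I would note that there are only $O(n)$ values of $\gamma$, each network has $O(n)$ nodes and $O(n^2)$ edges so an integral maximum flow is computable in polynomial time, and the conversion of Steps~1--8 is polynomial since each modification strictly decreases the number of weight-$\frac12$ edges, as already argued.

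The main obstacle has in fact already been dispatched by the preceding lemmas: the delicate part is the rounding in Steps~3--8, where the half-integral realization $G^\omega$ must be made integral without destroying the domination guarantee $\sum_{x \in D}\omega(s,x) \ge 1$ of Inequality~\eqref{eq: dominating weight}. Given those lemmas, the theorem is essentially an assembly argument, and the only point needing genuine care is verifying that the \emph{smallest} feasible $\gamma$ coincides with $\domset(d)$ rather than merely providing an upper bound---which is exactly what the two-sided equivalence chain above secures.
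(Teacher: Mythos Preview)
Your proposal is correct and follows essentially the same approach as the paper: the theorem is stated without a separate proof, being the assembly of \Cref{lem: MDS prefix}, \Cref{lem: MDS equivalence}, and \Cref{lem:proof of flow reduction} via a search over $\gamma$, exactly as you outline. Your explicit two-sided equivalence chain establishing $\gamma^\ast = \domset(d)$ is a faithful (and slightly more detailed) rendering of what the paper leaves implicit.
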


In the next section we derive a more efficient algorithm by providing a succinct characterization for the problem and using it for a faster search.
%%%%%%%%%%%%%%%%%%%%%%%%%%
\section{An Erdős--Gallai Type }

In this section we complement the previous result by providing an Erdős--Gallai Type characterization for the degree sequences $d$ that have a realization with a dominating set of size $\gamma$. 

It was proven in the previous section that a degree sequence $d$ has a realization with a minimum dominating set of size $\gamma$ if and only if the flow graph $G_{d, \gamma}$ admits a flow of size $\sum_{i=1}^n d_i$. 
Therefore, by the max--flow min--cut theorem, $d$ has a realization with a minimum dominating set of size $\gamma$ 
if and only if the capacity of every cut in $G_{d, \gamma}$ is at least $\sum_{i=1}^n d_i$.
This can serve as a basis for the sought characterization. Unfortunately, a naive implementation yields a characterization of length exponential in the input size. To overcome that difficulty, we apply the following strategy.
As a first step, we show that a minimum cut belongs to one of three families of cuts $\calF_1$, $\calF_2$, and $\calF_3$.
Each cut in the above families induces a constraint whose size is at least $\sum_{i=1}^n d_i$.
Next, we decrease the number of constraints by showing that it is sufficient to consider $O(n)$ constraints per family.

%%%%%%%%%%%%%%%%%%%%%%%%
\subsection{The characterization}

We prove the following theorem.

\begin{theorem}
\label{thm:systems}
A sequence $d$ has a realization with a dominating set of size $\gamma$ if and only if the following systems of constraints are satisfied.
\begin{align}
\label{eqn:erdos-gallai 1}
\sum_{i=\gamma+1}^{\gamma+k} (d_i-1) 
& \leq k(k-1) - (n-\gamma) +\sum_{i=1}^\gamma d_i + \sum_{i=\gamma+k+1}^n \min\{k, d_i-1\}, ~ 
\mbox{\rm for every }
k \in [0,n-\gamma] \\
\label{eqn:erdos-gallai 2}
\sum_{i=1}^{k} d_i 
& \leq k(k-1) + \sum_{i=k+1}^n \min\{k, d_i\}, ~ 
\mbox{\rm for every }
k \in [0,n] \\
\label{eqn:erdos-gallai 3}
\sum_{i=\gamma+1}^{\gamma+k} d_i 
& \leq k(k-1) +\sum_{i=1}^\gamma \min\{k, d_i\} + \sum_{i=\gamma+k+1}^n \min\{k, d_i-1\}, ~ 
\mbox{\rm for every }
k \in [0,d_1]
\end{align}
\end{theorem}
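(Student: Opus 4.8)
The plan is to build directly on the max--flow min--cut reduction already in place: by Lemma~\ref{lem:proof of flow reduction}, $d$ has a realization with a dominating set of size $\gamma$ if and only if the maximum $s$--$t$ flow in $G_{d,\gamma}$ equals $\sum_{i=1}^n d_i$, which by the max--flow min--cut theorem is equivalent to the assertion that every $s$--$t$ cut has capacity at least $\sum_{i=1}^n d_i$. Thus the entire theorem reduces to rewriting the single condition ``the minimum cut of $G_{d,\gamma}$ has capacity at least $\sum_{i=1}^n d_i$'' as the three explicit systems \eqref{eqn:erdos-gallai 1}--\eqref{eqn:erdos-gallai 3}. I would parametrize a cut $(\mathcal{S},\mathcal{T})$ with $s\in\mathcal{S}$ and $t\in\mathcal{T}$ by the sets $I=\{i : x_i\in\mathcal{S}\}$ and $J=\{j : y_j\in\mathcal{T}\}$, observing that every $x_i\in\mathcal{T}$ pays its cut source edge $d_i$, every $y_j\in\mathcal{S}$ pays its cut sink edge $d_j$, and the remaining cost comes from the unit-capacity middle edges together with the supplementary gadget edges of capacities $d_i-1$ and $d_j-1$.

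The heart of the argument is a structural lemma stating that \emph{a minimum cut of $G_{d,\gamma}$ lies in one of three families $\calF_1,\calF_2,\calF_3$}. First I would show that, once the source-side choice on the $X$-vertices is fixed, each downstream vertex (the supplementary $x'_i,y'_j$ and the $y_j$) can be assigned to the side minimizing the induced crossing cost, so that the minimum cut decomposes into a separable optimum. The three families then correspond to the three ways in which the flow leaving a nondominating source vertex $x_i\in X_S$ can be severed: either through its direct edges into $Y_D$, which produces a pure bipartite, graphic-type cut and hence the classical Erd\H{o}s--Gallai inequality~\eqref{eqn:erdos-gallai 2}, or through the supplementary route $x_i\!\to\!x'_i\!\to\!y'_j\!\to\!y_j$, whose $d_i-1$ bottleneck produces the shifted degrees $d_i-1$ appearing in~\eqref{eqn:erdos-gallai 1} and~\eqref{eqn:erdos-gallai 3}. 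The constant $k(k-1)$ in each system counts the unit-capacity edges of an almost-complete bipartite block $\tilde X\hourglass\tilde Y$ (complete minus the forbidden diagonal of property (D2)), while the extra $-(n-\gamma)$ term in~\eqref{eqn:erdos-gallai 1} records that each of the $n-\gamma$ nondominating vertices is forced, by that same bottleneck, to spend one unit connecting to a dominator.

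Next I would reduce each family to $O(n)$ representative cuts, and here the decisive fact is that $d$ is non-increasing. For a cut that places $k$ vertices on a given side, a vertex on the opposite side contributes to the crossing cost either its full (shifted) degree, if it switches sides, or at most $k$, since it meets at most $k$ unit-capacity crossing edges; taking the cheaper option yields exactly the terms $\min\{k,d_i\}$ and $\min\{k,d_i-1\}$. Because the degrees are sorted, the tightest constraint for a fixed $k$ is attained by the appropriate prefix of the sequence, which collapses each family to a one-parameter system indexed by $k$ over the stated ranges $[0,n-\gamma]$, $[0,n]$, and $[0,d_1]$. Writing ``capacity $\ge\sum_{i=1}^n d_i$'' for these representatives and rearranging produces precisely~\eqref{eqn:erdos-gallai 1}--\eqref{eqn:erdos-gallai 3}, and since a minimum cut belongs to one of the families, the three systems hold exactly when every cut is large enough.

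I expect the main obstacle to be the structural lemma that a minimum cut always lies in one of the three families: one must rule out ``mixed'' cuts that sever some nondominating vertices through the direct route and others through the supplementary gadget in a way that could a priori beat all three canonical forms, and one must track the forced domination edges through the $d_i-1$ bottlenecks without double counting. Verifying the exact additive constants (the $k(k-1)$, the $-(n-\gamma)$, and the precise index ranges), together with the boundary cases $k=0$ and the symmetry between the $X$- and $Y$-sides, is routine but error-prone and is where most of the care is needed.
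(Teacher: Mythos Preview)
Your proposal is correct and follows essentially the same route as the paper: reduce via max--flow min--cut, show that a minimum cut belongs to one of three canonical families, compute the capacity of each family, and then use the monotonicity of $d$ to collapse the subset-indexed constraints to prefix-indexed ones (the paper isolates this last step as a standalone lemma, \Cref{lemma:system equivalence}). One small difference worth noting: the paper organizes the case split not by ``how the flow from a nondominating $x_i$ is severed'' but by whether $S\cap X_D$ and $T\cap Y_D$ are empty or nonempty, giving four classes $\calS_{00},\calS_{01},\calS_{10},\calS_{11}$; a symmetry argument identifies $\calS_{01}$ with $\calS_{10}$, and short modification lemmas push each remaining class into the corresponding $\calF_i$. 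This classification makes the ``no mixed cuts'' concern you raise essentially automatic, so you may find it cleaner than tracking the supplementary gadget directly.
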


Notice that System~\eqref{eqn:erdos-gallai 2} coincides with the Erd\H{o}s-Gallai conditions.
The rest of this subsection is devoted to proving the above theorem.

Let $c(S,T)$ be the capacity of an $s$-$t$ cut $(S,T)$ of $G_{d, \gamma}$.
Let $\calS$ be the set of $s$-$t$ cuts of the graph $G_{d, \gamma}$.
Clearly, $\calS = \calS_{00} \cup \calS_{01} \cup \calS_{10} \cup \calS_{11}$, where
\begin{align*}
\calS_{00} & = \set{(S,T) \mid S \cap X_D = \emptyset, T \cap Y_D = \emptyset} \\
\calS_{01} & = \set{(S,T) \mid S \cap X_D = \emptyset, T \cap Y_D \neq \emptyset} \\
\calS_{10} & = \set{(S,T) \mid S \cap X_D \neq \emptyset, T \cap Y_D = \emptyset} \\
\calS_{11} & = \set{(S,T) \mid S \cap X_D \neq \emptyset, T \cap Y_D \neq \emptyset} 
\end{align*}

We show that the capacities of minimum cuts in $\calS_{01}$ and in $\calS_{10}$ are the same.

\begin{lemma}
\label{lemma:symmetric}
$\min \set{c(S,T) \mid (S,T) \in \calS_{01}} = \min \set{c(S,T) \mid (S,T) \in \calS_{10}}$.
\end{lemma}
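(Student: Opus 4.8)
The plan is to exhibit an explicit capacity-preserving bijection between $\calS_{01}$ and $\calS_{10}$, exploiting the structural symmetry of $G_{d,\gamma}$ under the transposition that swaps the $X$-side and the $Y$-side. The flow graph is built symmetrically: the connection types $X_D \hourglass Y_D$, $X_D \hourglass Y_S$, $X_S \hourglass Y_D$, $X'_S \hourglass Y'_S$ are invariant (as a set) under swapping $X \leftrightarrow Y$ (with $X_S \hourglass Y_D$ and $X_D \hourglass Y_S$ interchanged), the source edges $(s,x_i,d_i)$ map to the sink edges $(y_i,t,d_i)$ with identical capacities since both sequences are the same $d$, and the internal capacities $d_i-1$ on $X_S \equiv X'_S$ match those on $Y'_S \equiv Y_S$. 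Thus there is a capacity-preserving graph automorphism-like map $\sigma$ that reverses the source/sink roles: $\sigma(s)=t$, $\sigma(t)=s$, $\sigma(x_i)=y_i$, $\sigma(y_j)=x_j$, $\sigma(x'_i)=y'_i$, $\sigma(y'_j)=x'_j$, reversing every edge direction.

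The key steps, in order, are as follows. First I would verify that $\sigma$ maps the edge set $\calE$ onto itself with capacities preserved; this is a routine check against the explicit list of $\calE$, and the only subtlety is confirming that the asymmetric-looking families $X_D \hourglass Y_S$ and $X_S \hourglass Y_D$ are swapped into one another rather than broken. Second, given any cut $(S,T)\in\calS_{01}$, I would define the corresponding cut $(S',T')$ by setting $S' = \{s\}\cup\sigma(T\setminus\{t\})$ and $T' = \{t\}\cup\sigma(S\setminus\{s\})$; because $\sigma$ swaps $s$ and $t$ and reverses edge orientation, this sends a forward $s$-$t$ cut to a forward $s$-$t$ cut, and the edges crossing $(S',T')$ are exactly the $\sigma$-images of the edges crossing $(S,T)$, so $c(S',T')=c(S,T)$. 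Third, I would check that this operation carries $\calS_{01}$ into $\calS_{10}$: since $\sigma(X_D)=Y_D$ and $\sigma(Y_D)=X_D$, the condition $S\cap X_D=\emptyset$ (so $X_D\subseteq T$) becomes $Y_D\subseteq\sigma(T\setminus\{t\})\subseteq S'$, i.e.\ $S'\cap X_D\neq\emptyset$ after translating $Y_D$ back through $\sigma$; and dually $T\cap Y_D\neq\emptyset$ becomes $T'\cap Y_D=\emptyset$. (The bookkeeping here is where I must be careful to track whether $\sigma$ sends $X_D$-membership to $Y_D$-membership correctly.) Since $\sigma$ is an involution, the map is a bijection, so minima over the two families coincide.

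The main obstacle I anticipate is purely the index/orientation bookkeeping in step three: ensuring that the defining conditions of $\calS_{01}$ ($S\cap X_D=\emptyset$ and $T\cap Y_D\neq\emptyset$) transform into exactly the defining conditions of $\calS_{10}$ ($S'\cap X_D\neq\emptyset$ and $T'\cap Y_D=\emptyset$) and not into some other family. Because $\sigma$ interchanges $X_D$ with $Y_D$, an empty intersection with $X_D$ on one side becomes a nonempty intersection with $X_D$ on the other side, which is precisely the asymmetry between the two families; verifying this correspondence cleanly is the crux. Once the map and its capacity-preservation are established, equality of the two minima is immediate, and I would note that the same involution $\sigma$ could be reused later to prune redundant cuts when reducing to $O(n)$ constraints per family.
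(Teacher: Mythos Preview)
Your proposal is correct and is essentially the paper's own argument: the paper defines the same involution (written out explicitly rather than via a map $\sigma$) sending $(S,T)$ to the cut with $S'=\{x_i\mid y_i\in T\}\cup\{x'_i\mid y'_i\in T\}\cup\{y_i\mid x_i\in T\}\cup\{y'_i\mid x'_i\in T\}$, and then lists the same capacity-preserving edge bijection you describe. One small bookkeeping slip to fix in your step three: the condition $S\cap X_D=\emptyset$ actually yields $T'\cap Y_D=\emptyset$ (since $Y_D\subseteq S'$), while it is $T\cap Y_D\neq\emptyset$ that yields $S'\cap X_D\neq\emptyset$; you have these two correspondences interchanged, but as you anticipated this is exactly the care point and is easily straightened out.
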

\begin{proof}
Let $(S, T) \in \calS_{01}$ and define $(S',T')$ as follows:
\begin{align*}
S' & = \{x_i \mid y_i \in T\} \cup  \{x'_i \mid y'_i \in T\} \cup \{y_i \mid x_i \in T\} \cup  \{y'_i \mid x'_i \in T\}, \\
T' & = \{x_i \mid y_i \in S\} \cup  \{x'_i \mid y'_i \in S\} \cup \{y_i \mid x_i \in S\} \cup  \{y'_i \mid x'_i \in S\}.
\end{align*}
Observe that $(S',T') \in \calS_{10}$ by construction.
Moreover, due to the symmetric structure of $G_{d, \gamma}$, there is a capacity preserving bijection 
between the edges in $(S,T)$ and the edges in $(S', T')$.
More specifically, $(x_i, y_j)$ maps to $(x_j, y_i)$, $(x'_i, y'_j)$ maps to $(x'_j, y'_i)$, 
$(x_i, x'_i)$ maps to $(y_i, y'_i)$, $(y_i, y'_i)$ maps to $(x_i, x'_i)$, $(s, x_i)$ maps to $(y_i, t)$,
and finally $(y_i, t)$ maps to $(s, x_i)$. Therefore, $c(S',T') = c(S,T)$. 
\end{proof}

Our next goal is to consider smaller cut families.
The following lemmas facilitate this goal.

\begin{lemma}
\label{lemma:modification emptyY}
Let $(S, T)$ be a cut of $G_{d, \gamma}$ such that $T \cap Y_D = \emptyset$.
Also, let $S' = S \cup X_S$ and $T' = T \setminus X_S$.
Then $c(S',T') \le c(S,T)$.
\end{lemma}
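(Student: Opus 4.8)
The plan is to bound the change in cut capacity directly, by tracking which edges change their crossing status when we move the vertices of $X_S$ from $T$ to $S$. The only vertices whose side changes are those in the set $M = X_S \cap T$; since the nodes of $X'_S$, $Y$, and $X_D$ are all untouched, it suffices to account for the edges incident to the vertices of $M$. Recalling the edge set $\calE$, each $x_i \in X_S$ is incident to exactly three types of edges: the incoming source edge $(s,x_i)$ of capacity $d_i$, the outgoing edges $(x_i, y_j)$ of capacity $1$ leading to the candidate dominators $y_j \in Y_D$, and the single outgoing edge $(x_i, x'_i)$ of capacity $d_i - 1$. Crucially, there are no edges inside $X_S$ and $x'_i \notin X_S$, so the vertices of $M$ move independently of one another and the net change in $c(S,T)$ decomposes as a sum of per-vertex contributions.

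First I would analyze a single vertex $x_i \in M$ moving from $T$ to $S$. The source edge $(s, x_i)$, which crossed forward (from $s \in S$ to $x_i \in T$) and contributed $d_i$ to $c(S,T)$, becomes internal to $S'$, saving exactly $d_i$. Next I would invoke the hypothesis: since $T \cap Y_D = \emptyset$ forces $Y_D \subseteq S \subseteq S'$, each edge $(x_i, y_j)$ with $y_j \in Y_D$ runs backward before the move (from $x_i \in T$ to $y_j \in S$) and becomes internal afterward, so it is counted in neither cut and is therefore neutral. Finally, the edge $(x_i, x'_i)$ is the only edge that can create a new crossing: if $x'_i \in S$ it stays non-crossing, but if $x'_i \in T$ it switches from internal to a forward crossing of $(S',T')$, adding $d_i - 1$.

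Summing the per-vertex contributions then finishes the proof: the net change contributed by each $x_i \in M$ is either $-d_i$ (when $x'_i \in S$) or $-d_i + (d_i - 1) = -1$ (when $x'_i \in T$), and in both cases it is negative. Hence $c(S',T') - c(S,T) \le -|M| \le 0$, which gives $c(S',T') \le c(S,T)$ as claimed. I do not expect a genuine obstacle here, since the argument is pure edge-bookkeeping; the one place where care is needed is invoking the hypothesis $T \cap Y_D = \emptyset$ to guarantee that relocating $X_S$ to the source side cannot turn any $X_S$-to-$Y_D$ edge into a new forward-crossing edge, which is precisely what keeps the capacity from increasing and makes the saving of $d_i$ dominate the possible gain of $d_i - 1$.
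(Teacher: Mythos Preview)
Your proof is correct and follows essentially the same approach as the paper's: move each $x_i \in X_S \cap T$ to the $S$-side, losing the source edge of capacity $d_i$ and gaining at most the edge $(x_i,x'_i)$ of capacity $d_i-1$. You are simply more explicit than the paper in verifying that the $(x_i,y_j)$ edges to $Y_D$ are neutral (using the hypothesis $Y_D \subseteq S$) and that the moved vertices have no edges among themselves, but the argument is the same.
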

\begin{proof}
Let $x_i \in X_S \setminus S$.
By adding $x_i$ to $S'$ we remove the edge $(s,x_i)$ from the cut, but we may add 
the edge $(x_i,x'_i)$ to the cut (see \Cref{fig:flow graph}). Hence, 
\(
c(S,T) - c(S', T') \geq \sum_{x_i \in X_S \setminus S}(d_i - (d_i-1)) \ge 0
\).
\end{proof}

Symmetrical arguments imply that 

\begin{lemma}
\label{lemma:modification emptyX}
Let $(S,T)$ be a cut of $G_{d, \gamma}$ such that $S \cap X_D = \emptyset$.
Also, let $T' = T \cup Y_S$ and $S' = S \setminus Y_S$.
Then $c(S',T') \le c(S,T)$.
\end{lemma}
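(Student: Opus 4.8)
The plan is to mirror the argument of \Cref{lemma:modification emptyY}, interchanging the roles of the source side and the sink side of $G_{d,\gamma}$. The hypothesis $S \cap X_D = \emptyset$ forces $X_D \subseteq T$, which plays exactly the role that $Y_D \subseteq S$ played in the previous lemma. Since $T' = T \cup Y_S$ and $S' = S \setminus Y_S$, passing from $(S,T)$ to $(S',T')$ amounts to relocating every vertex of $Y_S$ that currently sits in $S$ across the cut to the $T'$ side. I would process these relocations one vertex at a time and bound the resulting change in cut capacity; because $G_{d,\gamma}$ has no edges internal to $Y_S$ and no edges joining distinct supplementary nodes to these vertices, the relocations are independent and the per-vertex contributions simply add up.

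Fix $y_j \in Y_S$ (so $j \in [\gamma+1,n]$) with $y_j \in S$. Inspecting the edge set $\calE$, its only outgoing edge is $(y_j,t)$ of capacity $d_j$, and its incoming edges are $(x_i,y_j)$ for $i \in [1,\gamma]$ (the edges of $X_D \hourglass Y_S$, each of capacity one) together with the single edge $(y'_j,y_j)$ of capacity $d_j-1$. The crucial structural fact, mirroring the role of $(x_i,x'_i)$ in \Cref{lemma:modification emptyY}, is that there is no edge from $X_S$ into $Y_S$: the nondominating groups are joined only indirectly through $X'_S \hourglass Y'_S$. Hence these are all the edges incident to $y_j$.

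Now I would tally the effect of moving $y_j$ from $S$ to $T'$, noting that $t$, the nodes of $X_D$, and $y'_j$ do not move. The edge $(y_j,t)$ leaves the cut, saving $d_j$. The only edge that can enter the cut is $(y'_j,y_j)$, and only when $y'_j$ remains on the $S'$ side, at a cost of at most $d_j-1$. The incoming edges $(x_i,y_j)$ with $i\in[1,\gamma]$ contribute nothing: since $X_D \subseteq T$, such an edge has its tail in $T$ both before the move (tail in $T$, head in $S$) and after (tail and head both in $T'$), so it is never directed from the $S$-side to the $T$-side. Summing over all relocated vertices yields
\[
c(S,T) - c(S',T') ~\ge~ \sum_{y_j \in Y_S \setminus T}\bigl(d_j - (d_j-1)\bigr) ~\ge~ 0,
\]
which is the claim. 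Alternatively, one may obtain the statement formally from \Cref{lemma:modification emptyY} by conjugating with the capacity-preserving bijection of \Cref{lemma:symmetric}, which exchanges the two halves of the network.

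The step demanding the most care is precisely the edge bookkeeping for $y_j$: one must confirm that $Y_S$ receives no edge from $X_S$, so that the hypothesis $S \cap X_D = \emptyset$ genuinely neutralizes \emph{every} incoming edge of $y_j$ and leaves $(y'_j,y_j)$ as the sole edge that can be added to the cut. This asymmetry between $d_j$ saved and $d_j-1$ potentially paid is what makes each per-vertex change nonnegative, and it is the exact mirror of the $(x_i,x'_i)$ argument in the preceding lemma.
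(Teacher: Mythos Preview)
Your proof is correct and is precisely the symmetric argument the paper invokes: the paper does not spell out a proof but simply writes ``Symmetrical arguments imply that'' after \Cref{lemma:modification emptyY}, and your edge-by-edge accounting for each $y_j\in Y_S\cap S$ (saving $d_j$ on $(y_j,t)$, paying at most $d_j-1$ on $(y'_j,y_j)$, with the $X_D\hourglass Y_S$ edges neutralized by $X_D\subseteq T$) is exactly that symmetric argument made explicit.
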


\begin{lemma}
\label{lemma:modification non-emptyY}
Let $(S, T)$ be a cut of $G_{d, \gamma}$ such that $T \cap Y_D \neq \emptyset$.
Also, let $S' = S \setminus(Z \cup Z')$ and $T'= T \cup (Z \cup Z')$, where 
$Z  = \{x_i \in S \mid x'_i \notin S, i \in [\gamma+1,n]\}$ and 
$Z' =  \{x'_i \in S\mid x_i \notin S, i \in [\gamma+1,n]\}$.
Then, $c(S',T') \le c(S,T)$.
\end{lemma}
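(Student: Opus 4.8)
The plan is to track the change in cut capacity caused by transferring each vertex of $Z \cup Z'$ from the source side to the sink side, and to show that the hypothesis $T \cap Y_D \neq \emptyset$ renders every such transfer non-increasing. The guiding structural fact is that the modification simply \emph{couples} the pairs: after the move, for each $i \in [\gamma+1,n]$ the vertices $x_i$ and $x'_i$ lie on the same side of the cut. The first thing I would establish is that the moved vertices form an independent set of $G_{d,\gamma}$, so their contributions may be computed in isolation and then summed. Indeed, the only edges incident to $X_S \cup X'_S$ are the source edges $(s,x_i)$, the matching edges $(x_i,x'_i)$, and the bipartite edges $X_S \hourglass Y_D$ and $X'_S \hourglass Y'_S$; for a fixed index $i$ at most one of $x_i,x'_i$ lies in $Z \cup Z'$, since the defining conditions of $Z$ and $Z'$ are mutually exclusive, and no edge joins two moved vertices. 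In particular no vertex of $Y_D$ is moved, so $|T \cap Y_D|$ stays fixed throughout the whole process.

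Next I would carry out the per-vertex bookkeeping, taking care that $c(S,T)$ counts only edges directed from the source side into the sink side. For $x_i \in Z$ (so $x_i \in S$, $x'_i \in T$), moving $x_i$ to $T'$ newly cuts the source edge $(s,x_i)$, adding $d_i$; it un-cuts the matching edge $(x_i,x'_i)$, which previously crossed from $S$ into $T$, saving $d_i-1$; and it removes from the cut every edge $(x_i,y_j)$ with $y_j \in Y_D \cap T$, since these now emanate from the sink side, saving $|T \cap Y_D|$. The net change is therefore $d_i - (d_i-1) - |T \cap Y_D| = 1 - |T \cap Y_D|$. For $x'_i \in Z'$ (so $x'_i \in S$, $x_i \in T$), the matching edge $(x_i,x'_i)$ already ran from $T$ into $S$ and so never contributed to the cut, while moving $x'_i$ to $T'$ can only remove edges $(x'_i,y'_j)$ with $y'_j \in T$ from the cut; hence the net change for $x'_i$ is non-positive.

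The crux of the argument is the single inequality $1 - |T \cap Y_D| \le 0$, and this is the step I expect to carry all the content: trading a source edge of capacity $d_i$ for a supplementary edge of capacity $d_i-1$ yields a stubborn surplus of $+1$, which must be absorbed by at least one dominator edge $(x_i,y_j)$ leaving the cut. The hypothesis $T \cap Y_D \neq \emptyset$ is exactly what supplies such an edge, since it guarantees $|T \cap Y_D| \ge 1$. Summing over all moved vertices gives $c(S',T') - c(S,T) = \sum_{x_i \in Z}\bigl(1 - |T \cap Y_D|\bigr) + \sum_{x'_i \in Z'}(\text{non-positive}) \le 0$, which is the desired conclusion. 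The main obstacle to watch for is orientation discipline: the entire sign of the calculation hinges on counting only $S \to T$ edges, so I would be especially careful with the directions of the matching edges $(x_i,x'_i)$ in the two cases, where the same edge contributes $d_i-1$ for a $Z$-vertex but $0$ for a $Z'$-vertex.
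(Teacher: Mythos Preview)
Your proposal is correct and follows essentially the same approach as the paper: both arguments move the vertices of $Z$ and $Z'$ to the sink side one at a time, track the source edge $(s,x_i)$, the coupling edge $(x_i,x'_i)$, and the bipartite edges into $Y_D$ (respectively $Y'_S$), and use $|T\cap Y_D|\ge 1$ to absorb the $+1$ surplus from each $x_i\in Z$. Your explicit observation that $Z\cup Z'$ is independent (so the per-vertex changes are additive) is a nice clarification that the paper leaves implicit.
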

\begin{proof}
Let $x_i \in Z$, where $i \in [\gamma+1,n]$.
The removal of $x_i$ adds the edge $(s,x_i)$ to the cut, but removes the edge $(x_i,x'_i)$ and at least 
one edge of the form $(x_i,y_j)$, where $j \in [1,\gamma]$.
Next let $x'_i \in Z'$, where $i \in [\gamma+1,n]$.
The removal of $x'_i$ removes edges of the form $(x'_i,y'_j)$,
where $y'_j \in Y'_S \cap T$, for $i \neq j$.
(See \Cref{fig:flow graph}.)
It follows that 
\[
c(S,T) - c(S', T') 
= \sum_{x_i \in Z} (d_i-1 + |T \cap Y_D| - d_i) + \sum_{x'_i \in Z'}  |T \cap Y'_S \setminus \{y'_i\}|
\geq |Z|(|T \cap Y_D| - 1) \ge 0
~. \qedhere
\]
\end{proof}

Symmetrical arguments imply that 

\begin{lemma}
\label{lemma:modification non-emptyX}
Let $(S, T)$ be a cut of $G_{d, \gamma}$ such that $S \cap X_D \neq \emptyset$.
Also, let $S' = S \cup (Z \cup Z')$ and $T'= T \setminus (Z \cup Z')$, where 
$Z  = \{y_i \mid y'_i \notin S, i \in [\gamma+1,n]$ and $Z' =  \{y'_i \mid y_i \notin S, i \in [\gamma+1,n]\}$.
Then, $c(S',T') \le c(S,T)$.
\end{lemma}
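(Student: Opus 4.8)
The statement is the $X\leftrightarrow Y$ mirror image of \Cref{lemma:modification non-emptyY}, so the plan is to obtain it from that lemma through the capacity-preserving reflection already used in the proof of \Cref{lemma:symmetric}, rather than repeating the argument from scratch. Let $\sigma$ be the involution on the node set of $G_{d,\gamma}$ that swaps $x_i\leftrightarrow y_i$, $x'_i\leftrightarrow y'_i$ and $s\leftrightarrow t$. As recorded there, $\sigma$ carries every edge to an edge of the same capacity: $(x_i,y_j)\mapsto(x_j,y_i)$, $(x'_i,y'_j)\mapsto(x'_j,y'_i)$, $(x_i,x'_i)\mapsto(y_i,y'_i)$, $(s,x_i)\mapsto(y_i,t)$, and conversely. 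For the given cut $(S,T)$ I would pass to its reflection $(\bar S,\bar T)=(\sigma(T),\sigma(S))$, which is again an $s$-$t$ cut of the same capacity. Because $x_i\in\bar S\iff y_i\in T$ and $x'_i\in\bar S\iff y'_i\in T$, the hypothesis $S\cap X_D\neq\emptyset$ becomes $\bar T\cap Y_D\neq\emptyset$, which is precisely the hypothesis of \Cref{lemma:modification non-emptyY}.

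First I would verify the dictionary between the two modifications. Writing $\bar Z,\bar Z'$ for the sets that \Cref{lemma:modification non-emptyY} associates to $(\bar S,\bar T)$, the equivalences above show that $\sigma(\bar Z\cup\bar Z')=Z\cup Z'$, and that the reflected passage $\bar S'=\bar S\setminus(\bar Z\cup\bar Z')$, $\bar T'=\bar T\cup(\bar Z\cup\bar Z')$ becomes exactly $S'=S\cup(Z\cup Z')$, $T'=T\setminus(Z\cup Z')$. Concretely, for each index $i\in[\gamma+1,n]$ the set $Z\cup Z'$ picks out the endpoint lying in $T$ of a pair $(y_i,y'_i)$ that straddles the cut, and the modification slides that endpoint into $S$ so the pair ends up entirely in $S$. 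Granting this dictionary, \Cref{lemma:modification non-emptyY} yields $c(\bar S',\bar T')\le c(\bar S,\bar T)$, and reflecting back through $\sigma$ (capacities unchanged) gives $c(S',T')\le c(S,T)$, which is the claim.

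As a self-contained alternative I would instead mirror the edge count of \Cref{lemma:modification non-emptyY} with $X$ and $Y$ interchanged. Relocating a $y_i\in T$ whose partner $y'_i$ already lies in $S$ adds the edge $(y_i,t)$ of capacity $d_i$ to the cut, removes the crossing edge $(y'_i,y_i)$ of capacity $d_i-1$, and removes the $|S\cap X_D|$ crossing edges $(x_j,y_i)$ with $x_j\in X_D\cap S$; relocating a $y'_i\in T$ whose partner $y_i$ already lies in $S$ removes only the crossing edges $(x'_j,y'_i)$ with $x'_j\in X'_S\cap S$ and $j\neq i$ (recall $X'_S\hourglass Y'_S$ omits $(x'_i,y'_i)$). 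Since distinct pairs share no edges and the two endpoints of a straddling pair are never relocated together, the contributions sum independently to
\[
c(S,T)-c(S',T') ~\ge~ |Z|\,(|S\cap X_D|-1) ~\ge~ 0 ,
\]
exactly paralleling the bound in \Cref{lemma:modification non-emptyY}; this is the step that consumes the hypothesis $S\cap X_D\neq\emptyset$, i.e.\ $|S\cap X_D|\ge 1$.

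I expect the only real difficulty to be bookkeeping rather than conceptual. The delicate points are applying the reflection dictionary consistently to every edge type---especially the internal edges $(y'_i,y_i)$ and the bipartite $X'_S\hourglass Y'_S$ edges, whose orientations must be tracked so that no crossing edge is double counted or dropped---and confirming that $Z\cup Z'$ indeed selects precisely the $T$-endpoints of the straddling $Y_S/Y'_S$ pairs. Once that accounting is pinned down, non-positivity of the total change follows immediately from $|S\cap X_D|\ge1$, just as in the non-emptyY case.
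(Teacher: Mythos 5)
Your proposal is correct and takes essentially the paper's route: the paper gives no separate proof, deriving the lemma from \Cref{lemma:modification non-emptyY} with the words ``symmetrical arguments,'' which is exactly the reflection via the capacity-preserving map of \Cref{lemma:symmetric} that you make explicit, and your self-contained mirrored count $c(S,T)-c(S',T') \ge |Z|\left(|S\cap X_D|-1\right) \ge 0$ is the faithful analogue of the paper's computation there. One point worth noting: your reading silently corrects an apparent typo in the paper's statement --- as written, $Z=\{y_i \mid y'_i \notin S\}$ is not the mirror of the non-emptyY sets (and would let both endpoints of a pair contained in $T$ move, breaking the bound); the intended sets, which your reflection dictionary and edge count both use, are $Z=\{y_i \in T \mid y'_i \in S\}$ and $Z'=\{y'_i \in T \mid y_i \in S\}$.
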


Define the following cut families:
\begin{itemize}
\item $\calF_1 = \set{(S^1_{I,J},V\setminus S^1_{I,J}) \mid I,J \subseteq [\gamma+1,n]}$, 

      where $S^1_{I,J} = \{x_i' \mid i\in I\} \cup \{y'_j \mid j \in J\} \cup X_S \cup Y_D$.
\item $\calF_2 = \{(S^2_{I,J}, V\setminus S^2_{I,J}) \mid I,J \subseteq [1,n]\}$, 

      where $S^2_{I,J} = \{x_i \mid i\in I\} \cup \{y_j \mid j \in J\} \cup
             \{x_i' \mid i \in I \cap [\gamma+1,n]\} \cup \{y'_j \mid j \in J \cap [\gamma+1,n]\}$.

\item $\calF_3 = \{(S^3_{I,J}, V\setminus S^3_{I,J}) \mid I \subseteq [\gamma+1,n], ~ J \subseteq [1,n]\}$,

    where $S^3_{I,J} = \{x_i, ~x'_i \mid i \in I\} \cup \{y_j \mid j \in J \cap [1,\gamma]\} 
    \cup \{y'_j \mid j \in J \cap [\gamma+1,n]\}$.

\end{itemize}

We obtain the following.

\begin{corollary}
We have that
\begin{itemize}
\item $\min \set{c(S,T) \mid (S,T) \in \calF_1} \leq \min \set{c(S,T) \mid (S,T) \in \calS_{00}}$.
\item $\min \set{c(S,T) \mid (S,T) \in \calF_2} \leq \min \set{c(S,T) \mid (S,T) \in \calS_{11}}$.
\item $\min \set{c(S,T) \mid (S,T) \in \calF_3} \leq \min \set{c(S,T) \mid (S,T) \in \calS_{10}}$.
\end{itemize}
\end{corollary}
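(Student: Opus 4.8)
The plan is to prove the three inequalities uniformly by a \emph{cut-pushing} argument: for each target family $\calF_i$, take a cut $(S,T)$ of minimum capacity in the relevant $\calS$-class and apply the modification lemmas in sequence, each of which can only decrease the capacity while forcing the membership of an additional block of vertices. After the modifications the resulting cut will lie exactly in $\calF_i$, so $\min\set{c(S,T)\mid(S,T)\in\calF_i} \le c(\text{resulting cut}) \le \min\set{c(S,T)\mid(S,T)\in\calS_{\ast\ast}}$. The only thing to check in each case is (a) that the precondition of the second lemma survives the first modification, and (b) that the final membership pattern matches the structural definition of $S^i_{I,J}$. Both follow because consecutive lemmas act on disjoint blocks of vertices ($X_S,X'_S$ versus $Y_S,Y'_S$) and never touch $X_D$ or $Y_D$.

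For $\calF_1$ versus $\calS_{00}$, start from a minimum cut $(S,T)\in\calS_{00}$, so $S\cap X_D=\emptyset$ and $T\cap Y_D=\emptyset$ (equivalently $Y_D\subseteq S$). Apply \Cref{lemma:modification emptyY} to force $X_S\subseteq S$, then \Cref{lemma:modification emptyX} to force $Y_S\subseteq T$; the precondition $S\cap X_D=\emptyset$ of the latter is unaffected since \Cref{lemma:modification emptyY} only moves $X_S$-vertices. The $X'_S$ and $Y'_S$ vertices are left untouched and thus remain arbitrary, giving the index sets $I,J\subseteq[\gamma+1,n]$. The final cut has $S = \{x'_i\mid i\in I\}\cup\{y'_j\mid j\in J\}\cup X_S\cup Y_D$, which is precisely $S^1_{I,J}$.

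For $\calF_2$ versus $\calS_{11}$, start from a minimum cut with $S\cap X_D\neq\emptyset$ and $T\cap Y_D\neq\emptyset$. Apply \Cref{lemma:modification non-emptyY} (using $T\cap Y_D\neq\emptyset$) to \emph{pair up} $X_S$ with $X'_S$, i.e.\ to achieve $x_i\in S\iff x'_i\in S$ for all $i\in[\gamma+1,n]$, then \Cref{lemma:modification non-emptyX} (using $S\cap X_D\neq\emptyset$, which \Cref{lemma:modification non-emptyY} preserves) to pair up $Y_S$ with $Y'_S$. The memberships of $X_D$ and $Y_D$ stay free, yielding $I,J\subseteq[1,n]$, and the paired supplementary vertices give exactly the form $S^2_{I,J}$.

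For $\calF_3$ versus $\calS_{10}$, note that $\calF_3$-cuts satisfy $S\cap X_D=\emptyset$, so it is natural to route through $\calS_{01}$: by \Cref{lemma:symmetric} we have $\min\set{c\mid\calS_{10}}=\min\set{c\mid\calS_{01}}$, and it suffices to bound the latter. Starting from a minimum cut in $\calS_{01}$ (so $S\cap X_D=\emptyset$, $T\cap Y_D\neq\emptyset$), apply \Cref{lemma:modification emptyX} to force $Y_S\subseteq T$, then \Cref{lemma:modification non-emptyY} to pair $X_S$ with $X'_S$; the precondition $T\cap Y_D\neq\emptyset$ is preserved since \Cref{lemma:modification emptyX} moves only $Y_S$. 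The surviving free choices are $I\subseteq[\gamma+1,n]$ (the paired $X_S/X'_S$ block) together with the untouched $Y_D$ and $Y'_S$ memberships, which combine into a single $J\subseteq[1,n]$ via $J\cap[1,\gamma]$ and $J\cap[\gamma+1,n]$; the result is exactly $S^3_{I,J}$. I expect the main obstacle to be purely bookkeeping: verifying that after the two modifications every vertex group sits on the prescribed side, so that the cut genuinely belongs to $\calF_i$ rather than merely having comparable capacity — this is where the disjointness of the blocks acted on by the paired lemmas does the real work.
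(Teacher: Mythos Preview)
Your argument is correct and matches the paper's approach: for each class $\calS_{\ast\ast}$ one applies two of the modification lemmas in sequence to push a minimum cut into the target family $\calF_i$, checking that the first modification preserves the hypothesis of the second. Your handling of the third item is in fact tidier than the paper's --- you pass from $\calS_{10}$ to $\calS_{01}$ via \Cref{lemma:symmetric} and then apply \Cref{lemma:modification emptyX,lemma:modification non-emptyY}, whereas the paper's proof cites \Cref{lemma:modification non-emptyY,lemma:modification emptyX} directly for $\calS_{10}$ even though their hypotheses are not literally satisfied there; your detour through \Cref{lemma:symmetric} is the intended repair.
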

\begin{proof}
Let $(S,T)$ be a minimum cut in $G_{\gamma, d}$. 
We consider the following three cases:
\begin{enumerate}[{Case} 1:]
\item If $(S,T) \in \calS_{00}$, then by \Cref{lemma:modification emptyY,lemma:modification emptyX}, 
      there exists a cut $(S',T') \in \calF_1$ such that $c(S', T') \leq c(S,T)$.

\item If $(S,T) \in \calS_{11}$, then by \Cref{lemma:modification non-emptyY,lemma:modification non-emptyX}, 
      there exists a cut $(S',T') \in \calF_2$ such that $c(S', T') \leq c(S,T)$.

\item If $(S,T) \in \calS_{10}$, then by \Cref{lemma:modification non-emptyY,lemma:modification emptyX}, 
      there exists a cut $(S',T') \in \calF_3$ such that $c(S', T') \leq c(S,T)$.
      \qedhere
\end{enumerate}
\end{proof}

The next lemma will helps us to reduce the number of constraints in each family.

\begin{lemma}
\label{lemma:system equivalence}
For a non-decreasing sequence of integers $\{a_i\}_{i=1}^n$ and 
a non-decreasing function $f: [0,m] \to \mathbb{R}$ the following two constraint systems are equivalent
\begin{align}
\sum_{i\in I} a_i 
& \le f(|I|) + \sum_{i\in I} \min\{|I|-1, a_i\} + \sum_{i \in [1,n] \setminus I} \min\{|I|, a_i\}, 
   ~ I \subseteq [1,n],~ |I| \le m 
\label{eqn:system1}
\\
\sum_{i = 1}^k a_i 
& \le f(k) + k(k-1) + \sum_{i = k+1}^n \min\{k, a_i\}, ~ k \in [0,m]
\label{eqn:system2}
\end{align}
\end{lemma}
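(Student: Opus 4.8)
The plan is to prove the two implications separately, the point being to collapse the exponentially many subset constraints of \eqref{eqn:system1} into the $O(m)$ prefix constraints of \eqref{eqn:system2}. Throughout I order the entries so that each prefix $\{1,\dots,k\}$ consists of the $k$ \emph{largest} entries, i.e. $a_1\ge a_2\ge\cdots\ge a_n$ (the monotone setting under which the lemma is invoked in \Cref{thm:systems}); write $P_k=\{1,\dots,k\}$. The implication \eqref{eqn:system1}$\Rightarrow$\eqref{eqn:system2} is the easy one: specialize the subset $I$ to $P_k$ and use the trivial bound $\min\{k-1,a_i\}\le k-1$, so that $\sum_{i=1}^k\min\{k-1,a_i\}\le k(k-1)$; substituting this into the $I=P_k$ instance of \eqref{eqn:system1} yields \eqref{eqn:system2} at $k$ verbatim. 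This uses nothing about monotonicity.

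For \eqref{eqn:system2}$\Rightarrow$\eqref{eqn:system1} I proceed in two steps, the first reducing \eqref{eqn:system1} to its prefix instances. Fix the size $k=|I|$ and rewrite the slack of a size-$k$ constraint as $h(I)=\sum_{i\in I}(\phi_i+\psi_i)-\sum_{i=1}^n\psi_i$, where $\phi_i=(a_i-(k-1))^+$ and $\psi_i=\min\{k,a_i\}$; only the first sum depends on $I$. A one-line case check shows $\phi_i+\psi_i$ equals $a_i$ when $a_i\le k-1$ and $a_i+1$ when $a_i\ge k$, so it is non-decreasing in $a_i$. An exchange argument — swapping a high-index member of $I$ for an unused lower index never decreases $\sum_{i\in I}(\phi_i+\psi_i)$, since that quantity is monotone and lower indices carry larger $a_i$ — then shows that among all $I$ of size $k$ the prefix $P_k$ maximizes $h$. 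Hence \eqref{eqn:system1} holds for all $I$ iff, for every $k\le m$, the prefix inequality
\[
\sum_{i=1}^k a_i \;\le\; f(k)+\sum_{i=1}^k\min\{k-1,a_i\}+\sum_{i=k+1}^n\min\{k,a_i\}
\]
holds; call it $(\star)$.

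The second step is to deduce $(\star)$ from \eqref{eqn:system2}, and this is where the genuine obstacle lies: $(\star)$ carries the exact term $\sum_{i=1}^k\min\{k-1,a_i\}$, whereas \eqref{eqn:system2} only offers the looser $k(k-1)$. I resolve this by reducing the index. Put $p=|\{i:a_i\ge k-1\}|$. If $p\ge k$ then $\sum_{i=1}^k\min\{k-1,a_i\}=k(k-1)$ and $(\star)$ is literally \eqref{eqn:system2} at $k$. If $p<k$, then $a_i\le k-2$ for $i\in[p+1,k]$, so $\min\{k-1,a_i\}=a_i$ there; cancelling these equal terms on both sides turns $(\star)$ into the equivalent inequality $\sum_{i=1}^p a_i\le f(k)+p(k-1)+\sum_{i=k+1}^n\min\{k,a_i\}$. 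I then show this is implied by \eqref{eqn:system2} at the smaller index $p$ (note $p<k\le m$) by verifying that the right-hand side just obtained dominates the right-hand side of \eqref{eqn:system2} at $p$: indeed $f(p)\le f(k)$ since $f$ is non-decreasing, the surplus $p(k-1)-p(p-1)=p(k-p)$ is available, the tail satisfies $\min\{p,a_i\}\le\min\{k,a_i\}$, and the only term needing control, $\sum_{i=p+1}^k\min\{p,a_i\}\le (k-p)p$, is exactly absorbed by that surplus. The degenerate case $p=0$ reduces to \eqref{eqn:system2} at $k=0$, which reads $0\le f(0)$ and thus forces $f\ge 0$, making $(\star)$ trivial.

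The only delicate point is this last reduction of the length-$k$ prefix constraint to the length-$p$ one; once the bookkeeping bound $\sum_{i=p+1}^k\min\{p,a_i\}\le p(k-p)$ is in hand, the surplus cancels it and the two systems coincide. The remaining ingredients — the exchange argument identifying $P_k$ as extremal and the easy direction — are routine.
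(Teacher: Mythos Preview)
Your proof is correct. It shares the paper's two essential ideas---an exchange argument showing that the prefix $P_k$ is the extremal subset of each size $k$, and a reduction to a smaller index (your $p$, the paper's $I'=\{i\in I:a_i\ge|I|-1\}$) at which the terms $\min\{k-1,a_i\}$ saturate---but packages them differently. The paper introduces an intermediate subset system in which $\sum_{i\in I}\min\{|I|-1,a_i\}$ is weakened to $|I|(|I|-1)$, and proves the chain \eqref{eqn:system1}$\Leftrightarrow$(intermediate)$\Leftrightarrow$\eqref{eqn:system2}, applying the smaller-index trick at the level of general subsets; you instead route through the prefix-restricted version $(\star)$ of \eqref{eqn:system1} and prove \eqref{eqn:system2}$\Rightarrow(\star)\Rightarrow$\eqref{eqn:system1}, applying the smaller-index trick only for prefixes. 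Both orderings work; yours is arguably a bit more transparent since $(\star)$ is already a finite system like \eqref{eqn:system2}. Your remark that the sequence must be taken non-increasing (not non-decreasing as the lemma states) is correct and matches how the paper itself uses and proves the lemma. One minor point: your separate treatment of $p=0$ is unnecessary, since the general bound $\sum_{i=p+1}^k\min\{p,a_i\}\le(k-p)p$ already covers it (each summand is $\le0$ when $p=0$); the claim that \eqref{eqn:system2} at $k=0$ forces $f\ge0$ would additionally require $a_i\ge0$, which holds in every application but is not assumed in the lemma.
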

\begin{proof}
First we prove that System~\eqref{eqn:system1} is equivalent to
\begin{align}
\label{eqn:system3}
\sum_{i \in I} a_i 
\le f(|I|) + |I|(|I|-1) + \sum_{i \in [1,n] \setminus I} \min\{|I|, a_i\}, ~ I \subseteq [1,n], ~|I| \le m ~. 
\end{align}
Since $\min\{|I|-1, a_i\} \le |I|-1$, it follows that System~\eqref{eqn:system1} implies System~\eqref{eqn:system3}. 
On the other hand, fix $I$ and consider $I' = \{i \in I \mid |I| -1\le a_i \}$. Clearly, $|I'| \le |I|$. 
The inequality for $I$ in System~\eqref{eqn:system1} can be rewritten as
\begin{align}
\label{eqn:system1-I}
\sum_{i\in I'} a_i\le f(|I|) + (|I|-1)|I'| +\sum_{i\in[1,n]\setminus I} \min\{|I|, a_i\}
~.
\end{align}
Now consider the inequality for $I'$ in System~\eqref{eqn:system3}:
\[
\sum_{i\in I'} a_i 
\leq f(|I'|) +  (|I'|-1)|I'|+\sum_{i\in[1,n]\setminus I'} \min\{|I'|, a_i\} ~.
\]
Since $f$ is non-decreasing we have that
\begin{align*}
\sum_{i\in I'} a_i 
& \leq f(|I|) +  (|I'|-1)|I'| + (|I|-|I'|)|I'| +\sum_{i\in [1,n] \setminus I} \min\{|I'|, a_i\} \\
& =    f(|I|) +  (|I|-1)|I'| + \sum_{i \in [1,n] \setminus I} \min\{|I|, a_i\}
~,
\end{align*}
namely \eqref{eqn:system1-I} is satisfied.

Clearly, System~\eqref{eqn:system2} is implied by System~\eqref{eqn:system3} 
by choosing $I = [1,k]$. On the other hand, let $k = |I|$. 
We have that 
\[
\sum_{i\in I} a_i 
~\leq~ \sum_{i=1}^{k} a_i 
~\leq~ f(k) + k(k-1) + \sum_{i = k+1}^n \min\{k, a_i\} 
~\leq~ f(|I|) + |I|(|I|-1) + \sum_{i \in [1,n] \setminus I} \min\{|I|, a_i\}
~,
\]
which means that System~\eqref{eqn:system3} follows from System~\eqref{eqn:system2}.
\end{proof}   

In the next three lemmas we use \Cref{lemma:system equivalence} to reduce the number of constraints in each family. Next figures follow the convention that yellow blocks belong to $S$ and blue blocks belong to $T$ in a $s$-$t$ cut $(S, T)$. Multicolored blocks may contain vertices from both parts of the cut.

\begin{lemma}
\label{lemma:F1}
$c(S,T) \geq \sum_{i=1}^k d_i$, for every $(S,T) \in \calF_1$ if and only if 
System~\eqref{eqn:erdos-gallai 1} is satisfied.
\end{lemma}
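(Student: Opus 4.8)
The plan is to compute the capacity of a generic cut in $\calF_1$ explicitly as a function of the index sets $I,J\subseteq[\gamma+1,n]$, rewrite the requirement ``$c(S,T)\ge\sum_{i=1}^n d_i$'' (i.e.\ the cut reaches the maximum-flow value) as a family of inequalities indexed by $I,J$, and then recognize the result as an instance of the subset system~\eqref{eqn:system1}, so that \Cref{lemma:system equivalence} collapses it to the prefix system~\eqref{eqn:erdos-gallai 1}.

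First I would tally the edges of $\calE$ crossing the cut $(S,T)=(S^1_{I,J},V\setminus S^1_{I,J})$. Since $S$ contains $s$, $X_S$, $Y_D$ and the chosen primed nodes while $X_D$ and $Y_S$ lie in $T$, only the following edges cross (all read off \Cref{fig:flow graph}): the source edges into $X_D$ contribute $\sum_{i=1}^\gamma d_i$; the sink edges out of $Y_D$ contribute another $\sum_{i=1}^\gamma d_i$; the edges $(x_i,x'_i)$ for $i\in[\gamma+1,n]\setminus I$ contribute $\sum_{i\in[\gamma+1,n]\setminus I}(d_i-1)$; the edges $(y'_j,y_j)$ for $j\in J$ contribute $\sum_{j\in J}(d_j-1)$; and the bipartite edges $(x'_i,y'_j)$ with $i\in I$, $j\in[\gamma+1,n]\setminus J$ and $i\ne j$ contribute their number. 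One must check that none of the $X_D\hourglass Y_D$, $X_D\hourglass Y_S$, $X_S\hourglass Y_D$ edges cross (each has both endpoints on the same side). Writing $A=\sum_{i=1}^\gamma d_i$ and $R=[\gamma+1,n]$, this yields a closed form for $c(S,T)$.

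Next I would substitute $\sum_{i=1}^n d_i=A+\sum_{i\in R}d_i$ into the bound $c(S,T)\ge\sum_{i=1}^n d_i$ and simplify; the two copies of $A$ collapse to one and the constant $-(n-\gamma)$ appears, giving the equivalent requirement that for all $I,J\subseteq R$, $\sum_{i\in I}(d_i-1)\le A-(n-\gamma)+\sum_{j\in J}(d_j-1)+\bigl|\{(i,j):i\in I,\ j\in R\setminus J,\ i\ne j\}\bigr|$. For fixed $I$ only the right-hand side depends on $J$, so this holds for every $J$ iff it holds for the minimizing $J$. The last two terms decompose additively over $j\in R$: each $j$ contributes $d_j-1$ if placed in $J$ and $|I|-[j\in I]$ otherwise, so the minimum over $J$ equals $\sum_{j\in R}\min\{d_j-1,\ |I|-[j\in I]\}$. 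Splitting on whether $j\in I$ turns this into $\sum_{i\in I}\min\{|I|-1,d_i-1\}+\sum_{j\in R\setminus I}\min\{|I|,d_j-1\}$.

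At this point the condition over all $I$ is exactly~\eqref{eqn:system1} of \Cref{lemma:system equivalence}, with the sequence $a_i=d_i-1$ indexed by $i\in R$, the constant (hence monotone) function $f\equiv A-(n-\gamma)$, and $m=n-\gamma$. Invoking \Cref{lemma:system equivalence} replaces the exponential family of subset constraints by the $n-\gamma+1$ prefix constraints $\sum_{i=\gamma+1}^{\gamma+k}(d_i-1)\le k(k-1)+f(k)+\sum_{i=\gamma+k+1}^n\min\{k,d_i-1\}$ for $k\in[0,n-\gamma]$, which is precisely System~\eqref{eqn:erdos-gallai 1}; the prefix direction is the binding one because $d$ is non-increasing, so $d_{\gamma+1},\dots,d_{\gamma+k}$ are the largest relevant degrees. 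I expect the main obstacle to be the bookkeeping in the cut computation, and specifically the $i\ne j$ exclusion among the $X'_S\hourglass Y'_S$ edges: it is exactly this missing diagonal that separates the minimization into the $\min\{|I|-1,\cdot\}$ term for $j\in I$ and the $\min\{|I|,\cdot\}$ term for $j\notin I$, so that the outcome matches the two sums of~\eqref{eqn:system1} rather than a single uniform $\min$.
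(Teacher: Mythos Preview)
Your proposal is correct and follows essentially the same approach as the paper: compute the cut capacity explicitly, rewrite the bound as a family of inequalities over $I,J$, minimize over $J$ to obtain the subset system, and invoke \Cref{lemma:system equivalence}. You are in fact more explicit than the paper about the minimization over $J$ (the paper just says ``it is not hard to verify''), and your choice $m=n-\gamma$ is tighter and arguably more correct than the paper's $m=n$, since the reindexed sequence has only $n-\gamma$ elements.
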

\begin{proof}
Consider a cut $(S^1_{I,J}, V\setminus S^1_{I,J}) \in \calF_1$, where $I, J \subseteq [\gamma+1,n]$, 
and recall that 
\[
S^1_{I,J} = \{x_i' \mid i\in I\} \cup \{y'_j \mid j \in J\} \cup X_S \cup Y_D
~.
\]
We have that
\begin{align*}
c(S^1_{I,J}, V\setminus S^1_{I,J}) 
& = \sum_{i=1}^\gamma d_i + \sum_{i\in[\gamma+1,n]\setminus I} (d_i-1) +
    \sum_{i \in I, j \in [\gamma+1,n] \setminus J, i \neq j} 1 +
    \sum_{j \in J} (d_j-1) + \sum_{j=1}^\gamma d_j \\
& = 2\sum_{i=1}^\gamma d_i + \sum_{i \in [\gamma+1,n] \setminus I} (d_i-1) +
    \sum_{j \in [\gamma+1,n] \setminus (I \cup J)} |I| + \sum_{j \in I \setminus J}(|I|-1) +
    \sum_{j \in J} (d_j-1) 
~.
\end{align*}
(See example in~\Cref{fig:flow graph erdos-gallai 1}.)
Hence the capacity of all cuts in $\calF_1$ is least $\sum_{i=1}^n d_i$ if and only if 
for every $I, J \subseteq [\gamma+1,n]$
\[
\sum_{i\in I}d_i 
\leq \sum_{i=1}^\gamma d_i - (n-\gamma -|I|) + 
     \sum_{j \in [\gamma+1,n] \setminus (I\cup J)} |I| + 
     \sum_{j \in I \setminus J}(|I|-1) + \sum_{j \in J} (d_j-1)
~.
\]
It is not hard to verify that this system is satisfied if and only if for every $I \subseteq [\gamma+1,n]$
\[
\sum_{i\in I} (d_i-1) 
\leq \sum_{i=1}^\gamma d_i - (n-\gamma) + 
     \sum_{j \in [\gamma+1,n] \setminus I} \min\{|I|, d_j-1\} + 
     \sum_{j \in I} \min\{|I|-1, d_j-1\}
~.
\]
The lemma follows due to \Cref{lemma:system equivalence} applied with $a_i = d_{\gamma+i}-1$, for $i \in [1,n-\gamma]$,
$f(|I|) = \sum_{i=1}^\gamma d_i - (n-\gamma)$ and $m=n$.
\end{proof}

%%%%%%%%%%%%%%%%%%%%%%%%%%%%%%%%%%%%%%%%
\begin{figure}[t!]
\centering
\resizebox{0.6\textwidth}{!}{
\tikzset{
myptr/.style={-{Stealth[scale=1.5]}},
}

\definecolor{r}{rgb}{1, 0, 0.0}
\definecolor{g}{rgb}{0.0, 0.5, 0.0}
\definecolor{b}{rgb}{0.0, 0, 1}
\definecolor{rg}{rgb}{1, 0.5, 0.0}
\definecolor{gb}{rgb}{0, 0.5, 1}
\def\thicc{1.5}

\begin{tikzpicture}[
    roundnode/.style={circle, draw=black, minimum size=35pt, inner sep=0pt, line width=1pt, font=\huge},
    squarebox/.style={draw=black, inner sep=5pt, font=\huge},
    roundedbox/.style={draw=black, rounded corners=15pt, inner sep=5pt, font=\huge}, 
    roundedboxprime/.style={draw=black,rounded corners=15pt, inner sep=5pt, font=\huge}, 
    % Increase the corner radius
%    font=\Large
    dots/.style={
    draw=none, fill=none, font=\huge},
    edgenode/.style={font=\huge}
]

% X-Nodes
\node[dots] (x1) {$\vdots$};
\node[roundnode, below=1.3cm of x1] (xi) {$x_{i'}$};
\node[dots, below=0.7cm of xi] (xl) {$\vdots$};

\node[dots, below=2cm of xl] (xl1) {$\vdots$};
\node[roundnode, below=1cm of xl1] (xi') {$x_i$};
\node[dots, below=1cm of xi'] (xn) {$\vdots$};

\node[font = \huge, left=0.8cm of x1] (XD) {$X_D$};
\node[font = \huge, above=0.5cm of xl1] (XS) {$X_S$};

% Create a virtual node to center 's'
\node[fit=(x1)(xn)] (xcenter) {};

% Position 's' relative to the virtual node
\node[roundnode, fill=yellow!20, left=3cm of xcenter] (s) {s};

% X'-Nodes
\node[dots, below right=4cm of xl1] (x'1) {$\vdots$};
\node[roundnode, below=0.7cm of x'1] (x'i) {$x'_i$};
\node[dots, below=1.3cm of x'i] (x'n) {$\vdots$};

\node[font = \huge, above=0.5cm of x'1] (X'S) {$X'_S$};

% Y-Nodes
\node[dots, right=10cm of x1] (y1) {$\vdots$};
\node[roundnode, below=0.7cm of y1] (yj) {$y_{j'}$};
\node[dots, below=1.3cm of yj] (yk) {$\vdots$};

\node[dots, below=2cm of yk] (yk1) {$\vdots$};
\node[roundnode, below=1cm of yk1] (yj') {$y_j$};
\node[dots, below=1cm of yj'] (ym) {$\vdots$};

\node[font = \huge, right=0.8cm of y1] (YD) {$Y_D$};
\node[font = \huge, above=0.5cm of yk1] (YS) {$Y_S$};

% Create a virtual node to center 't'
\node[fit=(y1)(ym)] (ycenter) {};

% Position 's' relative to the virtual node
\node[roundnode, fill=blue!20, right=3cm of ycenter] (t) {t};

% Y'-Nodes
\node[dots, below left=4cm of yk1] (y'1) {$\vdots$};
\node[roundnode, below=1.3cm of y'1] (y'j) {$y'_j$};
\node[dots, below=0.7cm of y'j] (y'm) {$\vdots$};

\node[font = \huge, above=0.5cm of y'1] (Y'S) {$Y'_S$};

% Arrows
\draw[line width=\thicc] (s) -- (xi) node[edgenode, pos=0.5, above=0.2cm] {$d_{i'}$};
\draw[line width=\thicc] (s) -- (xi') node[edgenode, pos=0.5, below=0.2cm] {$d_i$};

\draw[line width=\thicc] (yj) -- (t) node[edgenode, pos=0.75, above=0.2cm] {$d_{j'}$};
\draw[line width=\thicc] (yj') -- (t) node[edgenode, pos=0.75, below=0.2cm] {$d_j$};

\draw[line width=\thicc] (xi) -- (yj) node[edgenode, pos=0.5, above=0.2cm] {$1$};
\draw[line width=\thicc] (xi) -- (yj') node[edgenode, pos=0.5, above=0.2cm] {};
\draw[line width=\thicc] (xi') -- (yj) node[edgenode, pos=0.47, above=0.2cm] {$1$};

\draw[line width=\thicc] (x'i) -- (y'j) node[edgenode, pos=0.5, above=0.2cm] {$1$};
\draw[line width=\thicc] (xi') -- (x'i) node[edgenode, pos=0.9, below left=0.1cm] {$d_i-1$};
\draw[line width=\thicc] (y'j) -- (yj') node[edgenode, pos=0.2, below right=0.1cm] {$d_j-1$};

% Fit boxes
\begin{scope}[on background layer]
    \node[squarebox, fill=blue, fill opacity=0.2, line width=\thicc, fit=(x1) (xi) (xl)] {};
    \node[squarebox, fill=yellow, fill opacity=0.2, line width=\thicc, fit=(xl1) (xi') (xn)] {};
    
     \draw[black, fill=blue, fill opacity=0.2, line width=\thicc] ($(x'1.north west)+(-0.56,0.2)$)  rectangle ($(x'1.south east)+(0.56, 0.7)$);
     
     \draw[black, fill=yellow, fill opacity=0.2, line width=\thicc] ($(x'i.north west)+(-0.34,1.585)$)  rectangle ($(x'i.south east)+(0.34,-0.585)$);

     \draw[black, fill=blue, fill opacity=0.2, line width=\thicc] ($(x'n.north west)+(-0.56,0.9)$)  rectangle ($(x'n.south east)+(0.56,-0.2)$);
    
    \node[squarebox, fill=yellow, fill opacity=0.2, line width=\thicc, fit=(y1) (yj) (yk)] {};
    \node[squarebox, fill=blue, fill opacity=0.2, line width=\thicc, fit=(yk1) (yj') (ym)] {};

    \draw[black, fill=blue, fill opacity=0.2, line width=\thicc] ($(y'1.north west)+(-0.56,0.2)$)  rectangle ($(y'1.south east)+(0.56,-0.9)$);
     
    \draw[black, fill=yellow, fill opacity=0.2, line width=\thicc] ($(y'j.north west)+(-0.34,0.585)$)  rectangle ($(y'j.south east)+(0.34,-0.985)$);

    \draw[black, fill=blue, fill opacity=0.2, line width=\thicc] ($(y'm.north west)+(-0.56,-0.1)$)  rectangle ($(y'm.south east)+(0.56,-0.2)$);
\end{scope}

\end{tikzpicture}
}
\caption{$s$-$t$ cut from $\calF_1$ in the flow graph.}
\label{fig:flow graph erdos-gallai 1}
\end{figure}
%%%%%%%%%%%%%%%%%%%%%%%%

\begin{lemma}
\label{lemma:F2}
$c(S,T) \geq \sum_{i=1}^k d_i$, for every $(S,T) \in \calF_2$ if and only if 
System~\eqref{eqn:erdos-gallai 2} is satisfied.
\end{lemma}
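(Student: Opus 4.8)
The plan is to mirror the proof of \Cref{lemma:F1}: first write the capacity of a generic cut in $\calF_2$ explicitly as a function of the index sets $I,J$, then turn the requirement ``$c(S,T)\ge \sum_{i=1}^n d_i$ for every $(S,T)\in\calF_2$'' into a subset inequality, eliminate $J$, and finally invoke \Cref{lemma:system equivalence}.

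The key structural observation is that in the family $\calF_2$ the paired vertices travel together: $x_i$ and $x'_i$ lie on the same side of the cut (both in $S^2_{I,J}$ exactly when $i\in I$), and likewise $y_j,y'_j$ (both in $S^2_{I,J}$ exactly when $j\in J$). Consequently the internal edges $(x_i,x'_i)$ and $(y'_j,y_j)$ never cross the cut, and the two-layer gadget collapses into a single bipartite structure: for every pair $i\ne j$ there is exactly one unit-capacity edge joining the $i$-th left column to the $j$-th right column (namely $(x_i,y_j)$, or $(x'_i,y'_j)$ when $i,j>\gamma$), and it crosses the cut iff $i\in I$ and $j\notin J$. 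Counting the cut source edges $\sum_{i\notin I} d_i$, the cut sink edges $\sum_{j\in J} d_j$, and the $|I|(n-|J|)-|I\setminus J|$ unit edges gives a closed form for $c(S^2_{I,J},\,\cdot\,)$, so that the requirement $c\ge\sum_{i=1}^n d_i$ becomes $\sum_{i\in I} d_i \le \sum_{j\in J} d_j + |I|(n-|J|) - |I\setminus J|$ for all $I,J\subseteq[1,n]$.

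The main work is eliminating $J$. For a fixed $I$ with $|I|=k$, the right-hand side is minimized by a threshold choice of $J$: index $j$ is placed in $J$ exactly when doing so decreases the bound, which happens for $d_j\le k-2$ when $j\in I$ (the gain absorbs the extra $+1$ coming from the $-|I\setminus J|$ term) and for $d_j\le k-1$ when $j\notin I$. Substituting this optimal $J$ and simplifying, the minimized bound equals $kn-k+\sum_{i\in I}\min\{0,d_i-(k-1)\}+\sum_{i\notin I}\min\{0,d_i-k\}$, which, using $\min\{k-1,d_i\}=(k-1)+\min\{0,d_i-(k-1)\}$ and $\min\{k,d_i\}=k+\min\{0,d_i-k\}$, rewrites exactly as $\sum_{i\in I}\min\{k-1,d_i\}+\sum_{i\notin I}\min\{k,d_i\}$. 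Thus the family of cut constraints is equivalent to System~\eqref{eqn:system1} of \Cref{lemma:system equivalence} instantiated with $a_i=d_i$, $f\equiv 0$ and $m=n$.

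I would finish by applying \Cref{lemma:system equivalence}, which turns this subset system into System~\eqref{eqn:system2} with the same parameters, i.e.\ $\sum_{i=1}^k d_i\le k(k-1)+\sum_{i=k+1}^n\min\{k,d_i\}$ for every $k\in[0,n]$ --- precisely System~\eqref{eqn:erdos-gallai 2}. I expect the off-by-one in the thresholds to be the delicate point: indices inside $I$ receive threshold $k-1$ while those outside receive $k$, and this asymmetry (traceable to the forbidden diagonal edges $(x_i,y_i)$, i.e.\ the $i\ne j$ restriction that produces the $-|I\setminus J|$ term) is exactly what yields the $\min\{k-1,d_i\}$ versus $\min\{k,d_i\}$ split that \Cref{lemma:system equivalence} is tailored to absorb.
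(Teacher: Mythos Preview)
Your proposal is correct and follows essentially the same route as the paper: compute the cut capacity (your count $\sum_{i\notin I}d_i+\sum_{j\in J}d_j+|I|(n-|J|)-|I\setminus J|$ is exactly the paper's expression $\sum_{i\notin I}d_i+\sum_{j\notin I\cup J}|I|+\sum_{j\in I\setminus J}(|I|-1)+\sum_{j\in J}d_j$ after regrouping), eliminate $J$ to reach the subset system $\sum_{i\in I}d_i\le\sum_{i\in I}\min\{|I|-1,d_i\}+\sum_{i\notin I}\min\{|I|,d_i\}$, and then invoke \Cref{lemma:system equivalence} with $a_i=d_i$, $f\equiv 0$, $m=n$. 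The only difference is cosmetic: the paper dismisses the $J$-elimination with ``it is not hard to verify'', whereas you spell out the per-index threshold argument and the structural reason (paired $x_i,x'_i$ and $y_j,y'_j$) why the auxiliary edges never cross the cut.
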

\begin{proof}
Consider a cut $(S^2_{I,J}, V\setminus S^2_{I,J}) \in \calF_2$, where $I, J \subseteq[1,n]$, 
and recall that 
\[
S^2_{I,J} 
= \{x_i \mid i\in I\} \cup \{y_j \mid j \in J\} \cup
  \{x_i' \mid i \in I \cap [\gamma+1,n]\} \cup \{y'_j \mid j \in J \cap [\gamma+1,n]\}
~.
\]
Observe that
\begin{align*}
c(S^2_{I,J}, V\setminus S^2_{I,J}) 
& = \sum_{i \in [1,n] \setminus I} d_i + 
    \sum_{i \in I, j \in [1,n] \setminus J, i \neq j} 1 +
    \sum_{j \in J} d_j \\
& = \sum_{i \in [1,n] \setminus I} d_i + \sum_{j \in [1,n] \setminus (I\cup J)} |I| + 
    \sum_{j\in I \setminus J}(|I|-1) + \sum_{j \in J} d_j
~.
\end{align*}
(See example in~\Cref{fig:flow graph erdos-gallai 2}.)
It follows that the capacity of all cuts in $\calF_2$ is at least $\sum_{i=1}^n d_i$ if and only if 
for every $I, J \subseteq [1,n]$
\[
\sum_{i\in I} d_i \le \sum_{j \in [1,n] \setminus (I \cup J)} |I| + \sum_{j \in I \setminus J}(|I|-1) + \sum_{j \in J} d_j
~.
\]
It is not hard to verify that this system is satisfied if and only if for every $I \subseteq[1,n]$
\[
\sum_{i\in I} d_i \le \sum_{i \in [1,n] \setminus I} \min\{|I|, d_i\} + \sum_{i\in I}\min\{|I|-1, d_i\}
~.
\]
The lemma follows due to \Cref{lemma:system equivalence} applied with $a_i = d_i$, for $i \in [1,n]$,
$f(|I|) = 0$ and $m=n$.
\end{proof}

%%%%%%%%%%%%%%%%%%%%%%%%%%%%%%%%%%%%%%%%
\begin{figure}[t!]
\centering
\resizebox{0.6\textwidth}{!}{
\tikzset{
myptr/.style={-{Stealth[scale=1.5]}},
}

\definecolor{r}{rgb}{1, 0, 0.0}
\definecolor{g}{rgb}{0.0, 0.5, 0.0}
\definecolor{b}{rgb}{0.0, 0, 1}
\definecolor{rg}{rgb}{1, 0.5, 0.0}
\definecolor{gb}{rgb}{0, 0.5, 1}
\def\thicc{1.5}

\begin{tikzpicture}[
    roundnode/.style={circle, draw=black, minimum size=35pt, inner sep=0pt, line width=1pt, font=\huge},
    squarebox/.style={draw=black, inner sep=5pt, font=\huge},
    roundedbox/.style={draw=black, rounded corners=15pt, inner sep=5pt, font=\huge}, 
    roundedboxprime/.style={draw=black,rounded corners=15pt, inner sep=5pt, font=\huge}, 
    % Increase the corner radius
%    font=\Large
    dots/.style={
    draw=none, fill=none, font=\huge},
    edgenode/.style={font=\huge}
]

% X-Nodes
\node[dots] (x1) {$\vdots$};
\node[roundnode, below=0.7cm of x1] (xi) {$x_{i'}$};
\node[dots, below=1.3cm of xi] (xl) {$\vdots$};

\node[dots, below=2cm of xl] (xl1) {$\vdots$};
\node[roundnode, below=1.3cm of xl1] (xi') {$x_i$};
\node[dots, below=0.7cm of xi'] (xn) {$\vdots$};

\node[font = \huge, left=0.8cm of x1] (XD) {$X_D$};
\node[font = \huge, above=0.5cm of xl1] (XS) {$X_S$};

% Create a virtual node to center 's'
\node[fit=(x1)(xn)] (xcenter) {};

% Position 's' relative to the virtual node
\node[roundnode, fill=yellow!20, left=3cm of xcenter] (s) {s};

% X'-Nodes
\node[dots, below right=4cm of xl1] (x'1) {$\vdots$};
\node[roundnode, below=1.3cm of x'1] (x'i) {$x'_i$};
\node[dots, below=0.7cm of x'i] (x'n) {$\vdots$};

\node[font = \huge, above=0.5cm of x'1] (X'S) {$X'_S$};

% Y-Nodes
\node[dots, right=10cm of x1] (y1) {$\vdots$};
\node[roundnode, below=1.3cm of y1] (yj) {$y_{j'}$};
\node[dots, below=0.7cm of yj] (yk) {$\vdots$};

\node[dots, below=2cm of yk] (yk1) {$\vdots$};
\node[roundnode, below=0.7cm of yk1] (yj') {$y_j$};
\node[dots, below=1.3cm of yj'] (ym) {$\vdots$};

\node[font = \huge, right=0.8cm of y1] (YD) {$Y_D$};
\node[font = \huge, above=0.5cm of yk1] (YS) {$Y_S$};

% Create a virtual node to center 't'
\node[fit=(y1)(ym)] (ycenter) {};

% Position 's' relative to the virtual node
\node[roundnode, fill=blue!20, right=3cm of ycenter] (t) {t};

% Y'-Nodes
\node[dots, below left=4cm of yk1] (y'1) {$\vdots$};
\node[roundnode, below=0.7cm of y'1] (y'j) {$y'_j$};
\node[dots, below=1.3cm of y'j] (y'm) {$\vdots$};

\node[font = \huge, above=0.5cm of y'1] (Y'S) {$Y'_S$};

% Arrows
\draw[line width=\thicc] (s) -- (xi) node[edgenode, pos=0.5, above=0.2cm] {$d_{i'}$};
\draw[line width=\thicc] (s) -- (xi') node[edgenode, pos=0.5, below=0.2cm] {$d_i$};

\draw[line width=\thicc] (yj) -- (t) node[edgenode, pos=0.75, above=0.2cm] {$d_{j'}$};
\draw[line width=\thicc] (yj') -- (t) node[edgenode, pos=0.75, below=0.2cm] {$d_j$};

\draw[line width=\thicc] (xi) -- (yj) node[edgenode, pos=0.5, above=0.2cm] {$1$};
\draw[line width=\thicc] (xi) -- (yj') node[edgenode, pos=0.5, above right=0.1cm] {$1$};
\draw[line width=\thicc] (xi') -- (yj) node[edgenode, pos=0.5, above=0.2cm] {};

\draw[line width=\thicc] (x'i) -- (y'j) node[edgenode, pos=0.5, above=0.2cm] {$1$};
\draw[line width=\thicc] (xi') -- (x'i) node[edgenode, pos=0.8, below left=0.1cm] {$d_i-1$};
\draw[line width=\thicc] (y'j) -- (yj') node[edgenode, pos=0.2, below right=0.1cm] {$d_j-1$};

% Fit boxes
\begin{scope}[on background layer]
    % X_D
    \draw[black, fill=blue, fill opacity=0.2, line width=\thicc] ($(x1.north west)+(-0.56,0.2)$)  rectangle ($(x1.south east)+(0.56,-0.6)$);
     
    \draw[black, fill=yellow, fill opacity=0.2, line width=\thicc] ($(xi.north west)+(-0.34,0.285)$)  rectangle ($(xi.south east)+(0.34,-0.285)$);

    \draw[black, fill=blue, fill opacity=0.2, line width=\thicc] ($(xl.north west)+(-0.56,1.22)$)  rectangle ($(xl.south east)+(0.56,-0.2)$);

    % X_S
    \draw[black, fill=blue, fill opacity=0.2, line width=\thicc] ($(xl1.north west)+(-0.56,0.2)$)  rectangle ($(xl1.south east)+(0.56,-0.9)$);
     
    \draw[black, fill=yellow, fill opacity=0.2, line width=\thicc] ($(xi'.north west)+(-0.34,0.585)$)  rectangle ($(xi'.south east)+(0.34,-0.585)$);

    \draw[black, fill=blue, fill opacity=0.2, line width=\thicc] ($(xn.north west)+(-0.56,0.3)$)  rectangle ($(xn.south east)+(0.56,-0.2)$);

    % X'_S
    \draw[black, fill=blue, fill opacity=0.2, line width=\thicc] ($(x'1.north west)+(-0.56,0.2)$)  rectangle ($(x'1.south east)+(0.56,-0.9)$);
     
    \draw[black, fill=yellow, fill opacity=0.2, line width=\thicc] ($(x'i.north west)+(-0.34,0.585)$)  rectangle ($(x'i.south east)+(0.34,-0.585)$);

    \draw[black, fill=blue, fill opacity=0.2, line width=\thicc] ($(x'n.north west)+(-0.56,0.3)$)  rectangle ($(x'n.south east)+(0.56,-0.2)$);

    % Y_D
    \draw[black, fill=blue, fill opacity=0.2, line width=\thicc] ($(y1.north west)+(-0.56,0.2)$)  rectangle ($(y1.south east)+(0.56,-0.6)$);
     
    \draw[black, fill=yellow, fill opacity=0.2, line width=\thicc] ($(yj.north west)+(-0.34,0.885)$)  rectangle ($(yj.south east)+(0.34,-0.885)$);

    \draw[black, fill=blue, fill opacity=0.2, line width=\thicc] ($(yk.north west)+(-0.56,0)$)  rectangle ($(yk.south east)+(0.56,-0.2)$);
    
    % Y_S
    \draw[black, fill=blue, fill opacity=0.2, line width=\thicc] ($(yk1.north west)+(-0.56,0.2)$)  rectangle ($(yk1.south east)+(0.56,0.2)$);
     
    \draw[black, fill=yellow, fill opacity=0.2, line width=\thicc] ($(yj'.north west)+(-0.34,1.085)$)  rectangle ($(yj'.south east)+(0.34,-0.585)$);

    \draw[black, fill=blue, fill opacity=0.2, line width=\thicc] ($(ym.north west)+(-0.56,0.9)$)  rectangle ($(ym.south east)+(0.56,-0.2)$);

    % Y'_S
    \draw[black, fill=blue, fill opacity=0.2, line width=\thicc] ($(y'1.north west)+(-0.56,0.2)$)  rectangle ($(y'1.south east)+(0.56,0.2)$);
     
    \draw[black, fill=yellow, fill opacity=0.2, line width=\thicc] ($(y'j.north west)+(-0.34,1.085)$)  rectangle ($(y'j.south east)+(0.34,-0.585)$);

    \draw[black, fill=blue, fill opacity=0.2, line width=\thicc] ($(y'm.north west)+(-0.56,0.9)$)  rectangle ($(y'm.south east)+(0.56,-0.2)$);
\end{scope}

\end{tikzpicture}
}
\caption{$s$-$t$ cut from $\calF_2$ in the flow graph.}
\label{fig:flow graph erdos-gallai 2}
\end{figure}
%%%%%%%%%%%%%%%%%%%%%%%%

\begin{lemma}
\label{lemma:F3}
$c(S,T) \geq \sum_{i=1}^k d_i$, for every $(S,T) \in \calF_3$ if and only if
$\sum_{i=1}^{\gamma} d_i \ge n- \gamma$ and System~\eqref{eqn:erdos-gallai 3} is satisfied.
\end{lemma}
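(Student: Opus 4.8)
The plan is to compute the capacity of a generic cut $(S^3_{I,J}, V\setminus S^3_{I,J})\in\calF_3$ edge-type by edge-type, remembering that in a directed network only edges oriented from $S$ to $T$ contribute. Writing $k=|I|$, $J_D=J\cap[1,\gamma]$ and $J_S=J\cap[\gamma+1,n]$, and consulting \Cref{fig:flow graph}, I would record: the source edges into $X_D$ contribute $\sum_{i=1}^\gamma d_i$ (these $x_i$ always lie in $T$) and those into $X_S$ contribute $\sum_{i\in[\gamma+1,n]\setminus I}d_i$; the sink side contributes $\sum_{j\in J_D}d_j$ from $Y_D$ and $\sum_{j\in J_S}(d_j-1)$ across the $Y'_S\equiv Y_S$ links, while the $X_S\equiv X'_S$ links never cross since $x_i$ and $x'_i$ share their membership through $I$; the edges leaving $X_D$ never cross (their tails sit in $T$); the $X_S\hourglass Y_D$ edges give $k(\gamma-|J_D|)$; and the $X'_S\hourglass Y'_S$ edges give $k(n-\gamma-|J_S|)-|I\setminus J_S|$, the subtracted term accounting for the missing diagonals $(x'_i,y'_i)$. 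Summing these and subtracting $\sum_{i=1}^n d_i$, the requirement that every cut of $\calF_3$ have capacity at least $\sum_{i=1}^n d_i$ becomes an inequality of the form $\sum_{i\in I}d_i\le(\text{expression in }I,J)$.

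Because the condition must hold for every $J$, I would then minimize the right-hand side over $J$ for each fixed $I$. The $J_D$- and $J_S$-parts decouple: minimizing the $J_D$-part (include $j$ iff $d_j<k$) collapses to $\sum_{j=1}^\gamma\min\{k,d_j\}$, and minimizing the $J_S$-part, where the coupling $|I\setminus J_S|=k-|I\cap J_S|$ injects a $+[j\in I]$ into each per-index coefficient, yields $-k+\sum_{j\in I}\min\{k,d_j\}+\sum_{j\in[\gamma+1,n]\setminus I}\min\{k,d_j-1\}$ once the $k(n-\gamma)$ and $k^2$ bookkeeping terms cancel. Applying the identity $\min\{k,d_j\}=\min\{k-1,d_j-1\}+1$ to the $j\in I$ terms, the constraints reduce to: for every $I\subseteq[\gamma+1,n]$ with $k=|I|$,
\[
\sum_{i\in I}d_i\;\le\;\sum_{j=1}^\gamma\min\{k,d_j\}+\sum_{j\in I}\min\{k-1,d_j-1\}+\sum_{j\in[\gamma+1,n]\setminus I}\min\{k,d_j-1\}.
\]

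With the substitution $a_i=d_{\gamma+i}-1$ for $i\in[1,n-\gamma]$ and $f(k)=\sum_{j=1}^\gamma\min\{k,d_j\}-k$, subtracting $k=|I|$ from both sides turns the left-hand side into $\sum_{i\in I}a_i$ and the display into exactly System~\eqref{eqn:system1} of \Cref{lemma:system equivalence}. Here $f$ is non-decreasing on $[0,d_1]$, since $f(k+1)-f(k)=|\{j\le\gamma:d_j>k\}|-1\ge0$ whenever $k<d_1$ (the top-degree dominator witnesses the count), so I would split the constraints by $k=|I|$. For $k\le d_1$, \Cref{lemma:system equivalence} with $m=d_1$ converts the set constraints to the prefix form System~\eqref{eqn:system2}; substituting back and adding $k$ reproduces System~\eqref{eqn:erdos-gallai 3} verbatim. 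For $k>d_1$ every $\min$ saturates, so the constraint becomes $\sum_{j=1}^\gamma d_j+\sum_{j\in[\gamma+1,n]\setminus I}(d_j-1)\ge k$; as the second sum is non-negative and is stripped entirely when $I=[\gamma+1,n]$ (giving $k=n-\gamma$), this whole band is equivalent to the single inequality $\sum_{i=1}^\gamma d_i\ge n-\gamma$. I would finally verify that the cut $I=[\gamma+1,n]$ forces $\sum_{i=1}^\gamma\min\{n-\gamma,d_i\}\ge n-\gamma$, which is equivalent to $\sum_{i=1}^\gamma d_i\ge n-\gamma$, so the condition is genuinely necessary; combining the two regimes gives the stated characterization.

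I expect the main obstacle to be the second step: correctly evaluating the $X'_S\hourglass Y'_S$ contribution with its missing diagonal and then performing the $J$-minimization so that the bookkeeping cancels and the outcome lands precisely in the shape of System~\eqref{eqn:system1}. A secondary subtlety is the $k>d_1$ regime, where one must confirm that $f$ ceases to be non-decreasing exactly there and that those constraints collapse to $\sum_{i=1}^\gamma d_i\ge n-\gamma$ rather than contributing new inequalities, which is what justifies recording this single extra condition alongside System~\eqref{eqn:erdos-gallai 3}.
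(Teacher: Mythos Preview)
Your proposal is correct and follows essentially the same route as the paper's proof: compute the cut capacity, minimize over $J$ to obtain a single constraint per $I$, then split into the regimes $|I|\le d_1$ and $|I|>d_1$, applying \Cref{lemma:system equivalence} in the former and collapsing the latter to the single inequality $\sum_{i=1}^\gamma d_i\ge n-\gamma$. Your edge-type bookkeeping and explicit $J$-minimization are more detailed than the paper's (which defers that step to ``it is not hard to verify''), but the substitution $a_i=d_{\gamma+i}-1$, $f(k)=\sum_{j=1}^\gamma\min\{k,d_j\}-k$ and the monotonicity argument for $f$ on $[0,d_1]$ are identical. One minor caveat: your final sentence asserts that the cut $I=[\gamma+1,n]$ yields exactly $\sum_{i=1}^\gamma\min\{n-\gamma,d_i\}\ge n-\gamma$, but this simplification tacitly assumes $d_{\gamma+1}\le n-\gamma$; when $d_{\gamma+1}>n-\gamma$ the constraint takes a different form, though in that case $\sum_{i=1}^\gamma d_i\ge d_1>n-\gamma$ holds automatically, so the conclusion survives.
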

\begin{proof}
Consider a cut $(S^3_{I,J}, V\setminus S^3_{I,J}) \in \calF_3$, where $I \subseteq [\gamma+1,n]$ 
and $J \subseteq [1,n]$, and recall that 
\[
S^3_{I,J} = \{x_i, ~x'_i \mid i \in I\} \cup \{y_j \mid j \in J\cap [1,\gamma]\} \cup \{y'_j \mid j \in J \cap [\gamma+1,n]\}
~.
\]
Observe that
\begin{align*}
& c(S^3_{I,J}, V\setminus S^3_{I,J}) 
=  \sum_{i=1}^\gamma d_i +\sum_{i\in [\gamma+1,n] \setminus I} d_i + 
    \sum_{i \in I, j \in [1,n] \setminus J, i\neq j} 1 +
    \sum_{j\in [1,\gamma] \cap  J} d_j + \sum_{j \in [\gamma+1,n] \cap J} (d_j-1) \\
& =  \sum_{i=1}^\gamma d_i + \sum_{i \in [\gamma+1,n] \setminus I} \!\! d_i + 
     \sum_{j \in [1,n] \setminus (I\cup J)} \!\! |I| + 
     \sum_{j\in I \setminus J} (|I|-1) + \sum_{j \in [1,\gamma] \cap  J} \!\! d_j + 
     \sum_{j \in [\gamma+1,n] \cap J} \!\! (d_j-1) ~. 
\end{align*}
(See example in~\Cref{fig:flow graph erdos-gallai 3}.)
It follows that the capacity of all cuts in $\calF_3$ is at least $\sum_{i=1}^n d_i$ if and only if 
for every $I \subseteq[\gamma+1,n]$ 
and $J \subseteq [1,n]$
\[
\sum_{i\in I}d_i 
\le \sum_{j \in [1,n] \setminus (I\cup J)} |I| + \sum_{j \in I \setminus J}(|I|-1) + 
    \sum_{j \in [1,\gamma] \cap  J} d_j + \sum_{j \in [\gamma+1,n] \cap J} (d_j-1) 
    ~.
\]
It is not hard to verify that this system is satisfied if and only if for every 
$I \subseteq [\gamma+1,n]$
\[
\sum_{i\in I} d_i 
\leq \sum_{j \in [1,\gamma]} \min\{|I|, d_j\} + 
     \sum_{j \in [\gamma+1,n] \setminus I} \min\{|I|, d_j-1\} + \sum_{j\in I}\min\{|I|-1, d_j-1\}
     ~.
\]
If $|I| > d_1$, then the system transforms to 
\[
n-\gamma 
\leq \sum_{j \in [1,\gamma]} d_j + \sum_{j \in [\gamma+1,n] \setminus I} d_j
~.
\]
Notice that all the equations for $|I| > d_1$ follow from the single one $\sum_{i=1}^\gamma d_i\ge n-\gamma$
(which corresponds to the case where $I = [\gamma+1,n]$).
Otherwise, if $|I| \le d_1$, then rewrite the system  as 
\[
\sum_{i\in I} (d_i-1) 
\leq -|I| + \sum_{j \in [1,\gamma]} \min\{|I|, d_j\} + 
     \sum_{j \in [\gamma+1,n] \setminus I} \min\{|I|, d_j-1\} + \sum_{j \in I} \min\{|I|-1, d_j-1\}
~.
\]
Observe that  $f(|I|) = -|I|+ \sum_{j \in [1, \gamma]} \min\{|I|, d_j\}$ is 
a non-decreasing function of $|I|$ in the range (i.e.,$|I| \leq d_1$). 
The lemma follows due to \Cref{lemma:system equivalence} with $a_i = d_{\gamma+i}-1$ for $i \in [1, n-\gamma]$, and $m=d_1$.
\end{proof}

%%%%%%%%%%%%%%%%%%%%%%%%%%%%%%%%%%%%%%%%
\begin{figure}[t!]
\centering
\resizebox{0.6\textwidth}{!}{
\tikzset{
myptr/.style={-{Stealth[scale=1.5]}},
}

\definecolor{r}{rgb}{1, 0, 0.0}
\definecolor{g}{rgb}{0.0, 0.5, 0.0}
\definecolor{b}{rgb}{0.0, 0, 1}
\definecolor{rg}{rgb}{1, 0.5, 0.0}
\definecolor{gb}{rgb}{0, 0.5, 1}
\def\thicc{1.5}

\begin{tikzpicture}[
    roundnode/.style={circle, draw=black, minimum size=35pt, inner sep=0pt, line width=1pt, font=\huge},
    squarebox/.style={draw=black, inner sep=5pt, font=\huge},
    roundedbox/.style={draw=black, rounded corners=15pt, inner sep=5pt, font=\huge}, 
    roundedboxprime/.style={draw=black,rounded corners=15pt, inner sep=5pt, font=\huge}, 
    % Increase the corner radius
%    font=\Large
    dots/.style={
    draw=none, fill=none, font=\huge},
    edgenode/.style={font=\huge}
]

% X-Nodes
\node[dots] (x1) {$\vdots$};
\node[roundnode, below=1cm of x1] (xi) {$x_{i'}$};
\node[dots, below=1cm of xi] (xl) {$\vdots$};

\node[dots, below=2cm of xl] (xl1) {$\vdots$};
\node[roundnode, below=1.3cm of xl1] (xi') {$x_i$};
\node[dots, below=0.7cm of xi'] (xn) {$\vdots$};

\node[font = \huge, left=0.8cm of x1] (XD) {$X_D$};
\node[font = \huge, above=0.5cm of xl1] (XS) {$X_S$};

% Create a virtual node to center 's'
\node[fit=(x1)(xn)] (xcenter) {};

% Position 's' relative to the virtual node
\node[roundnode, fill=yellow!20, left=3cm of xcenter] (s) {s};

% X'-Nodes
\node[dots, below right=4cm of xl1] (x'1) {$\vdots$};
\node[roundnode, below=1.3cm of x'1] (x'i) {$x'_i$};
\node[dots, below=0.7cm of x'i] (x'n) {$\vdots$};

\node[font = \huge, above=0.5cm of x'1] (X'S) {$X'_S$};

% Y-Nodes
\node[dots, right=10cm of x1] (y1) {$\vdots$};
\node[roundnode, below=1.3cm of y1] (yj) {$y_{j'}$};
\node[dots, below=0.7cm of yj] (yk) {$\vdots$};

\node[dots, below=2cm of yk] (yk1) {$\vdots$};
\node[roundnode, below=1cm of yk1] (yj') {$y_j$};
\node[dots, below=1cm of yj'] (ym) {$\vdots$};

\node[font = \huge, right=0.8cm of y1] (YD) {$Y_D$};
\node[font = \huge, above=0.5cm of yk1] (YS) {$Y_S$};

% Create a virtual node to center 't'
\node[fit=(y1)(ym)] (ycenter) {};

% Position 's' relative to the virtual node
\node[roundnode, fill=blue!20, right=3cm of ycenter] (t) {t};

% Y'-Nodes
\node[dots, below left=4cm of yk1] (y'1) {$\vdots$};
\node[roundnode, below=0.7cm of y'1] (y'j) {$y'_j$};
\node[dots, below=1.3cm of y'j] (y'm) {$\vdots$};

\node[font = \huge, above=0.5cm of y'1] (Y'S) {$Y'_S$};

% Arrows
\draw[line width=\thicc] (s) -- (xi) node[edgenode, pos=0.5, above=0.2cm] {$d_{i'}$};
\draw[line width=\thicc] (s) -- (xi') node[edgenode, pos=0.5, below=0.2cm] {$d_i$};

\draw[line width=\thicc] (yj) -- (t) node[edgenode, pos=0.75, above=0.2cm] {$d_{j'}$};
\draw[line width=\thicc] (yj') -- (t) node[edgenode, pos=0.75, below=0.2cm] {$d_j$};

\draw[line width=\thicc] (xi) -- (yj) node[edgenode, pos=0.5, above=0.2cm] {$1$};
\draw[line width=\thicc] (xi) -- (yj') node[edgenode, pos=0.5, above =0.2cm] {};
\draw[line width=\thicc] (xi') -- (yj) node[edgenode, pos=0.52, above =0.3cm] {$1$};

\draw[line width=\thicc] (x'i) -- (y'j) node[edgenode, pos=0.5, above=0.2cm] {$1$};
\draw[line width=\thicc] (xi') -- (x'i) node[edgenode, pos=0.8, below left=0.1cm] {$d_i-1$};
\draw[line width=\thicc] (y'j) -- (yj') node[edgenode, pos=0.1, below right=0.1cm] {$d_j-1$};

% Fit boxes
\begin{scope}[on background layer]
    % X_D
   \node[squarebox, fill=blue, fill opacity=0.2, line width=\thicc, fit=(x1) (xi) (xl)] {};

    % X_S
    \draw[black, fill=blue, fill opacity=0.2, line width=\thicc] ($(xl1.north west)+(-0.56,0.5)$)  rectangle ($(xl1.south east)+(0.56,-0.9)$);
     
    \draw[black, fill=yellow, fill opacity=0.2, line width=\thicc] ($(xi'.north west)+(-0.34,0.585)$)  rectangle ($(xi'.south east)+(0.34,-1.085)$);

    \draw[black, fill=blue, fill opacity=0.2, line width=\thicc] ($(xn.north west)+(-0.56,-0.2)$)  rectangle ($(xn.south east)+(0.56,-0.2)$);

    % X'_S
    \draw[black, fill=blue, fill opacity=0.2, line width=\thicc] ($(x'1.north west)+(-0.56,0.2)$)  rectangle ($(x'1.south east)+(0.56,-0.9)$);
     
    \draw[black, fill=yellow, fill opacity=0.2, line width=\thicc] ($(x'i.north west)+(-0.34,0.585)$)  rectangle ($(x'i.south east)+(0.34,-1.085)$);

    \draw[black, fill=blue, fill opacity=0.2, line width=\thicc] ($(x'n.north west)+(-0.56,-0.2)$)  rectangle ($(x'n.south east)+(0.56,-0.2)$);

    % Y_D
    \draw[black, fill=blue, fill opacity=0.2, line width=\thicc] ($(y1.north west)+(-0.56,0.2)$)  rectangle ($(y1.south east)+(0.56,-0.9)$);
     
    \draw[black, fill=yellow, fill opacity=0.2, line width=\thicc] ($(yj.north west)+(-0.34,0.585)$)  rectangle ($(yj.south east)+(0.34,-0.285)$);

    \draw[black, fill=blue, fill opacity=0.2, line width=\thicc] ($(yk.north west)+(-0.56,0.6)$)  rectangle ($(yk.south east)+(0.56,-0.2)$);
    
    % Y_S
    \node[squarebox, fill=blue, fill opacity=0.2, line width=\thicc, fit=(yk1) (yj') (ym)] {};

    % Y'_S
    \draw[black, fill=blue, fill opacity=0.2, line width=\thicc] ($(y'1.north west)+(-0.56,0.2)$)  rectangle ($(y'1.south east)+(0.56,0.1)$);
     
    \draw[black, fill=yellow, fill opacity=0.2, line width=\thicc] ($(y'j.north west)+(-0.34,0.985)$)  rectangle ($(y'j.south east)+(0.34,-0.585)$);

    \draw[black, fill=blue, fill opacity=0.2, line width=\thicc] ($(y'm.north west)+(-0.56,0.9)$)  rectangle ($(y'm.south east)+(0.56,-0.2)$);
\end{scope}

\end{tikzpicture}
}
\caption{$s$-$t$ cut from $\calF_3$ in the flow graph.}
\label{fig:flow graph erdos-gallai 3}
\end{figure}
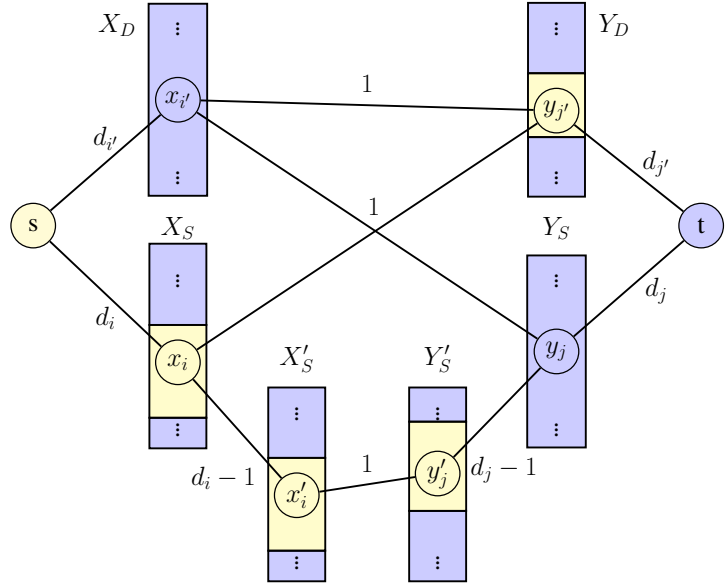
%%%%%%%%%%%%%%%%%%%%%%%%

Since the equation for $k=0$ in System~\eqref{eqn:erdos-gallai 1} implies $\sum_{i=1}^\gamma d_i \geq n-\gamma$, 
\Cref{thm:systems} is implied by \Cref{lemma:F1,lemma:F2,lemma:F3}.

%%%%%%%%%%%%%%%%%
\subsection{Complexity analysis and a faster realization algorithm}

This section discusses the running time of two algorithms:
\begin{inparaenum}[(i)]
\item an algorithm to compute $\domset(d)$, which is implied by the characterizations given in \Cref{thm:systems}, and
\item our realization algorithm for \domsetDR.
\end{inparaenum}

Before addressing the running time of an algorithm for computing $\domset(d)$, 
consider first a decision algorithm.
Since the three systems in \Cref{thm:systems} contain $O(n)$ constraints, 
it follows that a naive implementation of a decision algorithm would result in 
a $O(n^2)$ running time. However, it is well known that the Erd\H{o}s-Gallai 
characterization (i.e., System~\eqref{eqn:erdos-gallai 2}) can be computed in $O(n)$ time, 
since the $k$th constraint can be computed from the $(k-1)$th constraint in $O(1)$ time.
A similar idea can be used for the other two systems. 
Hence, the decision algorithm can be implemented with a
$O(n)$ running time. Finally, Using a binary search an algorithm to compute 
$\domset(d)$ whose running time is $O(n \log n)$ is obtained.

As for the realization algorithm, 
given  $\gamma =\domset(d)$, to find a $\gamma$-prefix dominated realization of $d$ we first build the flow graph $G_{d, \gamma}$ and find a maximum $s$-$t$ flow in it. 
Since $G_{d, \gamma}$ has $O(n)$ vertices and $O(n^2)$ edges, 
a maximum flow can be computed in $O(n^3)$ time using the maximum flow
algorithm by Orlin~\cite{Orlin13}.
Given a maximum flow, adjacency matrix for $\gamma$-prefix-dominated realization $\hat{G}$ for the sequence pair $(d,d)$ can be computed in $O(n^2)$ time. 
Next we go through all the steps of transformation of $\hat{G}$ into $\gamma$-prefix dominated realization $G$ for $d$ and show that each step takes $O(n^3)$ time.
We need the following observations.

\begin{observation}
\label{obs: 2-dom path condition}
If $s \in S$ is adjacent to at least three vertices in $D$ in $\Ghalf$, then there is a 2-dom path $P[a,s,b,c]$. 
\end{observation}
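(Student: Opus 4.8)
The plan is to exploit the fact that $\Ghalf$ is an even graph (\Cref{obs: G1/2 is even graph}), so that every dominator adjacent to $s$ is forced to have a further neighbor, and then to use the \emph{third} dominator neighbor of $s$ as a spare endpoint that lets me dodge a possible coincidence. The whole argument is a short direct construction of the desired path $P[a,s,b,c]$.

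First I would let $a_1, a_2, a_3 \in D$ be three distinct neighbors of $s$ in $\Ghalf$, which exist by hypothesis. Since $\Ghalf$ is even, each $a_i$ has even degree; being adjacent to $s$, each $a_i$ therefore has degree at least $2$ and hence possesses at least one neighbor different from $s$. In particular, I fix $b = a_1$ and let $c \ne s$ be a neighbor of $b$ in $\Ghalf$. This already supplies the edges $s$--$b$ and $b$--$c$ needed for the path.

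Next I would choose the endpoint $a$. The remaining neighbors $a_2, a_3$ of $s$ are both candidates, and at most one of them can equal $c$; hence at least one of $a_2, a_3$ differs from $c$, and I set $a$ to be that one. I would then verify that $P[a, s, b, c]$ is a genuine $4$-vertex path: the vertices are pairwise distinct ($a \ne b$ as distinct neighbors of $s$; $a, b \ne s$ since $a, b \in D$ while $s \in S$; $c \ne s$ and $c \ne b$ since $c$ is a neighbor of $b$; and $c \ne a$ by the choice of $a$), and all three edges $a$--$s$, $s$--$b$, $b$--$c$ are present in $\Ghalf$. Since $s \in S$ and $a, b \in D$, this is exactly a 2-dom path.

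The only place requiring care---the mild obstacle---is the possibility that the neighbor $c$ produced by $b$ coincides with the vertex one would otherwise pick as $a$; the role of the third dominator neighbor of $s$ is precisely to guarantee a free choice of $a \ne c$, and the evenness of $\Ghalf$ is what rules out the degenerate case in which a dominator neighbor of $s$ has no neighbor besides $s$ at all. Apart from checking these two points, nothing further is needed.
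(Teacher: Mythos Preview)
Your proof is correct and is essentially the same argument as the paper's: both pick one dominator neighbor $b$ of $s$, invoke evenness of $\Ghalf$ to obtain a further neighbor $c\neq s$ of $b$, and then use the remaining dominator neighbors of $s$ as a spare to ensure the endpoint $a$ can be chosen distinct from $c$. The only cosmetic difference is that the paper names the neighbors $a,a',b$ and branches on whether the extra vertex equals $a$, whereas you name them $a_1,a_2,a_3$ and pick $a\in\{a_2,a_3\}\setminus\{c\}$ directly.
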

\begin{proof}
Let $a, a', b \in D$ be three dominator vertices adjacent to $s$. Since $\Ghalf$ is an even graph, $b$ must be adjacent to some vertex $x \in \Vhalf$ besides $s$. If $x \neq a$, then $P[a,s,b,x]$ is 2-dom path. Otherwise, $P[a',s,b,a]$ is 2-dom path. 
\end{proof}

\begin{observation}
\label{obs: no new 2-dom paths}
For a vertex $s \in S$ if there is no 2-dom paths $P[a,s,b,c]$ in $\Ghalf$, then there will not be any such $2$-dom paths after applying MR1, MR2 or MR3 to any 2-dom path in $\Ghalf$. 
\end{observation}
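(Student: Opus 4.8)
The plan is to show that any 2-dom path through $s$ that is present \emph{after} a modification must in fact have been present \emph{before} it, contradicting the hypothesis. Let $P[a,s_0,b,c]$ be the 2-dom path to which the rule is applied, so $a,b\in D$, $s_0\in S$, and $s_0$ plays the role of the middle vertex named $s$ in rules MR1--MR3. The first thing I would record is the only way $\Ghalf$ can change: inspecting the three rules, each of MR1, MR2, MR3 removes the path edges $(a,s_0),(s_0,b),(b,c)$ from $\Ghalf$, and the \emph{only} edge that can be added to $\Ghalf$ is $(a,c)$ (added by MR1 and MR3, and removed by MR2). Since by hypothesis $s$ is not the centre of any 2-dom path, we have $s_0\ne s$; and since $a,b\in D$ while $s\in S$, the sole vertex of the processed path that can coincide with $s$ is the endpoint $c$.

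Next I would reduce everything to the newly added edge. Any edge present in $\Ghalf$ after the modification, other than $(a,c)$, was already present before, because the weights of all other edges are untouched. Hence if a 2-dom path $P[a',s,b',c']$ exists after the modification but does \emph{not} use the edge $(a,c)$, then all three of its edges existed before, so it was a 2-dom path through $s$ before the modification --- a contradiction. In particular this disposes of MR2, which adds no edge at all, so from now on I may assume the new path uses $(a,c)$.

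Consider first the case $s\ne c$. Then $(a,c)$ is not incident to $s$, so in $P[a',s,b',c']$ it can only be the terminal edge $(b',c')$; thus the dominator $b'$ equals $a$ or $c$. The two $s$-incident edges $(a',s),(s,b')$ are not removed (they are not incident to any vertex of the processed path, as $s\notin\{a,s_0,b,c\}$ here) and are not the added edge, so they were present before; in particular $s$ had two distinct dominating neighbours $a',b'$ before the modification. Moreover the dominator $b'$ already had, before the modification, a neighbour other than $s$ on the processed path ($s_0$ if $b'=a$, or $b$ if $b'=c$). Combining these pre-existing edges yields a 2-dom path through $s$ present before --- again a contradiction (a routine distinctness check chooses the external neighbour among $\{s_0,b\}$ so that the four vertices are distinct; the nondominating endpoint $s_0$ always works when $b'=a$).

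Finally the case $s=c$, which I expect to be the main obstacle, since it is the only situation in which the modification actually alters the neighbourhood of $s$. Here $(b,c)=(b,s)$ was an edge before, so before the modification $s$ is adjacent to the dominator $b$, and $b$ is adjacent to $s_0\ne s$. Consequently, if $s$ had any second dominating neighbour $a'\ne b$ before, then $P[a',s,b,s_0]$ would already be a 2-dom path through $s$, contradicting the hypothesis; therefore $b$ is the \emph{unique} dominating neighbour of $s$ before the modification. Now MR1 and MR3 delete $(b,s)$ and insert $(a,s)$ --- and their preconditions ($\omega(a,c)=\omega(a,s)\in\{0,1\}$) guarantee that $a$ was not already a neighbour of $s$ in $\Ghalf$ --- while leaving every other $s$-incident edge unchanged. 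Hence after the modification $s$ has exactly one dominating neighbour, namely $a$, which is too few to be the centre of a 2-dom path. This exhausts all cases, so no 2-dom path through $s$ is created.
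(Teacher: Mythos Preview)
Your proof is correct and takes a somewhat different route from the paper's. The paper first invokes Observation~\ref{obs: 2-dom path condition} to bound the number of dominating $\Ghalf$-neighbours of $s$ by two, asserts that no modification rule can increase this count, and then, in the exactly-two case, shows that $s$ and its two dominators cannot occur in \emph{any} 2-dom path at all, so the rule being applied leaves their $\Ghalf$-neighbourhoods untouched. You instead isolate the single edge $(a,c)$ that a rule can add to $\Ghalf$ and case on whether $s=c$; this makes the argument self-contained (no appeal to the earlier observation) at the price of a slightly longer case analysis, and your $s=c$ case is in effect a direct verification of the paper's ``count does not increase'' claim in the only situation where it is nontrivial.

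One small slip is worth fixing: in the $s\ne c$ case you justify that $(a',s)$ and $(s,b')$ are not removed by saying they ``are not incident to any vertex of the processed path'', but this is false since $b'\in\{a,c\}$ lies on that path. The correct reason these edges are unaffected is that each of them has $s$ as an endpoint, while every edge whose weight is altered has \emph{both} endpoints in $\{a,s_0,b,c\}$, and $s$ is not among those four. Your conclusion stands; only the stated justification needs adjusting. (Your ``routine distinctness check'' is also a bit terse --- in the sub-case $b'=c$, $a'=b$ one should swap the roles of $a'$ and $b'$ and use $b$'s pre-existing neighbour $s_0$ --- but it does go through.)
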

\begin{proof}
By \Cref{obs: 2-dom path condition} we can assume that $s$ is connected to at most two dominator vertices in $\Ghalf$. Note that neither of the modifications can increase the number of dominator vertices adjacent to $s$ in $\Ghalf$. 
Therefore, if $s$ is connected to at most one dominator vertex, the Observation follows. 
    
Assume that $s$ is connected to exactly two dominator vertices denoted $a$ and $b$. 
Note that if $a, b$ or $s$ participated in any 2-dom path, it would imply 
the existence of $c \notin\{a,s,b\}$ adjacent to $a$ or to $b$ in $\Ghalf$. 
But then there exists a 2-dom path $P[a,s,b,c]$ or $P[b,s,a,c]$. It follows that modifications of 2-dom paths cannot create any new edges in $\Ghalf$ going through $a,b$ or $s$ and therefore cannot create any new 2-dom paths containing $s$. The observation follows.
\end{proof}

\begin{observation}
\label{obs: time to find 2-dom path}
Given $s \in S$ it is possible to find a 2-dom path $P[a,s,b,c]$ in $\Ghalf$ or make sure there is no such paths in $O(n)$ time.
\end{observation}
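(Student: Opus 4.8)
The plan is to give an $O(n)$ procedure that, given $s \in S$, first identifies the dominating neighbours of $s$ in $\Ghalf$ and then searches for a suitable fourth vertex $c$, with the case analysis driven by the number of such neighbours. First I would scan the row of $s$ in the adjacency matrix of $\Ghalf$ and collect the set $N_D(s)$ of dominating neighbours of $s$; this takes $O(n)$ time. If $|N_D(s)| \le 1$, then $s$ cannot be the middle nondominating vertex of any path $P[a,s,b,c]$ with $a,b \in D$, so I report that no 2-dom path through $s$ exists.

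The interesting situation is $|N_D(s)| \ge 2$, where I would split according to \Cref{obs: 2-dom path condition}. If $|N_D(s)| \ge 3$, that observation already guarantees a 2-dom path, and its proof is constructive: pick three dominating neighbours $a, a', b$ of $s$ and scan the row of $b$ (in $O(n)$ time) for a neighbour $x \ne s$, which must exist because $\Ghalf$ is even by \Cref{obs: G1/2 is even graph}; then output $P[a,s,b,x]$ if $x \ne a$ and $P[a',s,b,a]$ otherwise. If $|N_D(s)| = 2$, say $N_D(s) = \{a,b\}$, I would scan the row of $b$ for a neighbour $c \notin \{a,s\}$ and output $P[a,s,b,c]$ if one is found; otherwise I scan the row of $a$ for a neighbour $c \notin \{b,s\}$ and output $P[b,s,a,c]$. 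If both scans fail, then every neighbour of $a$ and of $b$ lies in $\{a,s,b\}$, and since these two vertices have even degree and are both adjacent to $s$, each must be adjacent to exactly the other one and to $s$; hence $\{a,s,b\}$ is an isolated triangle, no candidate for the fourth vertex exists, and I report that there is no 2-dom path through $s$. Each scan touches a single matrix row, so the whole procedure runs in $O(n)$ time.

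The main point to get right---and the reason a case split is needed rather than a single uniform sweep---is the running time: naively checking, for every dominating neighbour of $s$, whether it has an external neighbour would cost $O(n^2)$, since $s$ may have $\Theta(n)$ dominating neighbours. The role of \Cref{obs: 2-dom path condition} is precisely to collapse the three-or-more-dominators case into one $O(n)$ scan of a single chosen vertex $b$, while the evenness of $\Ghalf$ both guarantees the required vertex $x$ in that case and certifies non-existence in the two-dominator case. I expect the step that most needs care is the two-dominator dead end: verifying that when both $a$ and $b$ fail to yield a $c$, the even-degree constraint forces $\deg(a)=\deg(b)=2$ with $\{a,s,b\}$ forming a triangle, so that concluding ``no 2-dom path'' is actually justified.
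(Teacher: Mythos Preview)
Your proposal is correct and follows essentially the same case split as the paper's proof: count dominating neighbours of $s$, dispose of the $\le 1$ case immediately, invoke the constructive proof of \Cref{obs: 2-dom path condition} for $\ge 3$, and in the two-dominator case scan the rows of $a$ and $b$ for an external neighbour. One minor imprecision: in the two-dominator dead end, $\{a,s,b\}$ need not be an \emph{isolated} triangle (since $s$ may have further nondominating neighbours in $\Ghalf$), but this does not affect your conclusion, which only requires that $a$ and $b$ have no neighbours outside $\{a,s,b\}$.
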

\begin{proof}
First, find the dominator vertices adjacent to $s$ in $O(n)$ time. 
If there is at most one such dominator vertex, then there is no 2-dom path
$P[a,s,b,c]$ in $\Ghalf$. If there is at least three dominator vertices, 
it is possible to find 2-dom paths in $O(n)$ time following the proof of 
\Cref{obs: 2-dom path condition}.

The remaining case is when $s$ is connected to exactly two dominator vertices $a$ and $b$. 
If $a$ and $b$ are connected only to each other and $s$ in the graph $\Ghalf$, 
then there is no 2-dom path going through $s$. 
Otherwise, if $a$ or $b$ is adjacent to $x \notin \{a,b,s\}$ in the graph $\Ghalf$, 
then the required 2-dom path is easily obtained.
Also, notice that the neighbors of $a$ and $b$ can be examined in $O(n)$ time.
The observation follows.
\end{proof}

\begin{enumerate}[{Step} 1:]
\item The adjacency matrix of the weighted graph $G^\omega$ can be computed from 
      the adjacency matrix of $\hat{G}$ in $O(n^2)$.

\item It takes $O(n^2)$ to compute the adjacency matrix of $\Ghalf$ using 
      the adjacency matrix of $G^\omega$.

\item Examine the vertices in $S$ one by one. 
      For $s \in S$ find a 2-dom path $P[a,s,b,c]$ in $\Ghalf$ and apply a corresponding modification to it. Since any modification decreases the number of neighbors of $s$ 
      in $\Ghalf$ by $2$, it follows that after $O(n)$ such steps no 2-dom paths $P[a,s,b,c]$ 
      remain in $\Ghalf$. 
      By \Cref{obs: no new 2-dom paths} and \ref{obs: time to find 2-dom path}, 
      all the 2-dom paths can be removed this way in $O(n^3)$ time.

\item $S'$, $E'$, $S^\Delta$ and $\calC^\Delta$ can be found straightforwardly
      in $O(n^2)$ time. Note that it takes only $O(1)$ time to check 
      if a vertex is a dominator or not, since $D = [1, \gamma]$.

\item The adjacency matrix of $H$ can be computed in $O(n^2)$ time 
      from $\calC^\Delta$ and the adjacency matrix of $\Ghalf$. 
      Since an Eulerian cycle can be found by linear time in the number of edges, 
      a partition of $H$ into disjoint cycles $\calC'$ takes $O(n^2)$ time.
    
\item Separating even cycles in $\calC$ and applying the corresponding modification 
      to them takes $O(n^2)$ time.
      
\item Since the vertices are enumerated, one can determine the intersection of the cycles 
      $C$ and $C'$ 
      in $O(|C| \log{|C|} + |C'| \log{|C'|})$ time by sorting their vertices 
      and then computing the intersection. If they intersect, it takes 
      $O(|C| +|C'|)$ time to apply the corresponding modification.
      If they do not intersect, the proof of \Cref{lem: right x and y} implies that 
      suitable $x$ and $y$ can be found in $O(n)$ time and $O(|C| +|C'|)$ time is needed to apply the modification. The complexity for all pairs is $O(n^3)$. 
      
\item At this step $G^\omega$ does not have edges of weight $1/2$, 
      so it can be transformed into $G$ in $O(n^2)$ time.
\end{enumerate}
It follows that the algorithm can be implemented in $O(n^3)$ time.

We conclude the section with the following result.

\begin{theorem}
\label{thm:real-MDS-fast}
There exists an $O(n^3)$ time algorithm for constructing a realization of a 
given graphic sequence $d$ that also has a dominating set $D$ of minimum size 
(among all possible realizations of $d$).
\end{theorem}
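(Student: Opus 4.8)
The plan is to assemble the components developed earlier in this section into a single pipeline and then account for its total running time, since correctness has already been established piecemeal. Concretely, the algorithm operates in three phases. First, it computes $\gamma = \domset(d)$. Second, given $\gamma$, it builds the flow graph $G_{d,\gamma}$, computes an integral maximum $s$-$t$ flow, and reads off a $\gamma$-prefix-dominated bipartite realization $\hat{G}$ of $(d,d)$. Third, it runs the transformation of Steps 1--8 to convert $\hat{G}$ into a general realization $G$ of $d$ in which $D = [1,\gamma]$ is a dominating set.

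For correctness I would argue as follows. By the Prefix \Cref{lem: MDS prefix}, since $\gamma = \domset(d)$ the sequence $d$ admits a $\gamma$-prefix-dominated realization; by the $(\Leftarrow)$ direction of \Cref{lem: MDS equivalence} this guarantees a $\gamma$-prefix-dominated bipartite realization of $(d,d)$, and hence by \Cref{lem:proof of flow reduction} the maximum flow in $G_{d,\gamma}$ equals $\sum_{i=1}^n d_i$. Thus phase two indeed returns a valid $\hat{G}$, and the $(\Rightarrow)$ direction of \Cref{lem: MDS equivalence}, realized constructively by Steps 1--8, yields a $\gamma$-prefix-dominated realization $G$ of $d$. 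Minimality is then immediate: $G$ realizes $d$ and has the dominating set $D$ of size $\gamma$, so $\domset(G) \le \gamma$; since $\gamma = \domset(d)$ is by definition the minimum over all realizations of $d$, we conclude $\domset(G) = \gamma$, i.e.\ $G$ attains the minimum. That $\gamma$ is correctly identified follows because feasibility is monotone in $\gamma$ (any dominating set can be enlarged), so a binary search over $\gamma \in [0,n]$ using the characterization of \Cref{thm:systems} locates $\domset(d)$.

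For the running time I would bound each phase. Phase one uses \Cref{thm:systems}: each of the three constraint systems has $O(n)$ inequalities, and—exactly as for the Erd\H{o}s--Gallai system~\eqref{eqn:erdos-gallai 2}—the $k$-th inequality is obtained from the $(k-1)$-th in $O(1)$ time after an initial sort, so a single feasibility test costs $O(n)$ and the binary search computes $\domset(d)$ in $O(n\log n)$. Phase two computes a maximum flow in the $O(n)$-vertex, $O(n^2)$-edge graph $G_{d,\gamma}$ in $O(n^3)$ time via Orlin's algorithm~\cite{Orlin13}, and extracts the adjacency matrix of $\hat{G}$ in $O(n^2)$. For phase three I would go through Steps 1--8 individually: Steps 1, 2, 4, 5, 6, 8 only manipulate adjacency matrices or compute an Eulerian decomposition and cost $O(n^2)$ each, while the dominant Steps 3 and 7 cost $O(n^3)$ each. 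Summing, the pipeline runs in $O(n^3)$.

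The main obstacle is verifying the $O(n^3)$ bounds for Steps 3 and 7, the remaining steps being routine. For Step 3 the subtlety is eliminating all $2$-dom paths without looping forever: processing the vertices of $S$ one at a time, \Cref{obs: no new 2-dom paths} ensures that once a vertex $s$ is cleared of $2$-dom paths no later modification reintroduces one through $s$, \Cref{obs: 2-dom path condition} bounds the relevant degree of $s$, and each modification strictly reduces the neighbourhood of $s$ in $\Ghalf$; combined with the $O(n)$ detection cost of \Cref{obs: time to find 2-dom path}, this yields $O(n)$ modifications per vertex and $O(n^3)$ overall. For Step 7, pairs of odd cycles are processed, and the cost is dominated by detecting cycle intersections (by sorting vertices) and, in the disjoint case, locating the vertices $x,y$ guaranteed by \Cref{lem: right x and y}; since the cycles within each class are vertex-disjoint, the work telescopes to $O(n^3)$. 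Once these two bounds are in hand, the theorem follows.
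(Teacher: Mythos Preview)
Your proposal is correct and follows essentially the same approach as the paper: compute $\gamma=\domset(d)$ in $O(n\log n)$ via binary search on the characterization of \Cref{thm:systems}, solve the max-flow on $G_{d,\gamma}$ in $O(n^3)$ by Orlin's algorithm to obtain $\hat G$, and then run Steps~1--8 with the same per-step bounds (the $O(n^3)$ bottlenecks being Steps~3 and~7, justified exactly as you describe via \Cref{obs: 2-dom path condition,obs: no new 2-dom paths,obs: time to find 2-dom path} and \Cref{lem: right x and y}). Your explicit correctness chain through \Cref{lem: MDS prefix}, \Cref{lem: MDS equivalence}, and \Cref{lem:proof of flow reduction}, together with the monotonicity remark justifying binary search, is a nice touch that the paper leaves implicit.
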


%%%%%%%%%%%%%%%%%%%%%%%%%%%%%%%%%%%%%%%%%%%%%%%%%%%%%%%%%%%%%%%%%%%%%%%%%%
\section{Realization with Maximum Matching}
\label{s:MM-DR alg}

This section presents an algorithm for \matchDR, based on the Inverted Prefix Lemma \ref{lem: MM prefix} and a modification of the FHM realization algorithm
\footnote{An alternative algorithm 
the Inverted Prefix Lemma and the techniques of~\cite{Kundu73}. 
Here we present the FHM-based solution for uniformity of presentation.}. 
The algorithm follows the same two stages of the algorithm for \domsetDR described in \Cref{s:MDS-DR alg}, 
first reducing the problem to the bipartite setting and then reducing it to a maximum flow problem. 
Finally, the Inverted Prefix Lemma for maximum matching narrows the search to a polynomial number of cases.

%%%%%
\subsection{Prefix Lemma}
For \matchDR, 
there is a general prefix lemma 
by Gould, Jacobson and Lehel~{\cite{GJL99}}.

\begin{lemma}
{\bf (Arbitrary Prefix Lemma for \match) \cite{GJL99}}
\label{cl: Gould for MM}
If a sequence $d$ has a realization with a matching of size $\nu$, then $d$ has a realization with a matching of the same size such that $2 \nu$ vertices participating in the matching have the highest $2\nu$ degrees in $d$, i.e., $d_1, d_2, \ldots, d_{2\nu}$. 
\end{lemma}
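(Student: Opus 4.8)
The plan is to prove the lemma by an extremal (potential‑minimizing) choice of realization combined with a local \emph{clean swap} that preserves both the degree sequence and the matching size, while moving one low‑degree matched vertex out of the matching and one high‑degree unmatched vertex into it. Concretely, among all pairs $(G,M)$ in which $G$ realizes $d$ and $M$ is a matching of size $\nu$, I would fix one minimizing the potential $\Phi(G,M)=\sum_{v\in V(M)} r(v)$, where $V(M)$ is the set of the $2\nu$ matched vertices and $r(v)\in[1,n]$ is the position of $v$ in the non‑increasing order of $d$. If $V(M)=\{1,\dots,2\nu\}$ we are done, so suppose not. Then there is an unmatched $u$ with $r(u)\le 2\nu$ and a matched $w$ with $r(w)>2\nu$, and since $d$ is non‑increasing this forces $d_u\ge d_w$. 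Let $w'$ be the $M$-partner of $w$. The target of the swap is a pair $(G',M')$ with the same degrees and the same matching size such that $V(M')=(V(M)\setminus\{w\})\cup\{u\}$; this strictly lowers $\Phi$ by $r(w)-r(u)>0$, contradicting minimality and closing the argument.

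Next I would describe the swap in two cases. If $u\sim w'$, simply rematch: replace the matching edge $(w,w')$ by $(u,w')$, leaving $G$ untouched; then $u$ becomes matched, $w$ becomes unmatched, and $V(M')$ is as required. Otherwise ($u\not\sim w'$) I would first perform a single 2‑switch to create the edge $(u,w')$ and then rematch: choose a neighbour $z\in N(u)\setminus\{w,w'\}$ with $z\not\sim w$, delete $(u,z)$ and $(w,w')$, add $(u,w')$ and $(z,w)$, and set $M'=(M\setminus\{(w,w')\})\cup\{(u,w')\}$. One checks routinely that $u,z,w,w'$ are distinct, that the 2‑switch preserves all degrees, and that — because $u$ is unmatched, the deleted edge $(u,z)$ is never a matching edge — the matching status of $z$ and of every other vertex is unaffected, so $M'$ is again a matching of size $\nu$ with $V(M')=(V(M)\setminus\{w\})\cup\{u\}$.

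The crux, and the step I expect to be the main obstacle, is showing that the required vertex $z$ always exists when $u\not\sim w'$; this is a short neighbourhood count driven by $d_u\ge d_w$. If no such $z$ existed, then every neighbour of $u$ other than $w$ would also be a neighbour of $w$, i.e. $N(u)\setminus\{w\}\subseteq N(w)$ (using $w'\notin N(u)$ so that $N(u)\setminus\{w,w'\}=N(u)\setminus\{w\}$). Since $w'\in N(w)\setminus N(u)$, this already gives $|N(w)|\ge|N(u)\setminus\{w\}|+1$. When $u\not\sim w$ the right‑hand side is $d_u+1$; when $u\sim w$ the set $N(w)$ additionally contains $u$, which lies in neither $N(u)\setminus\{w\}$ nor $\{w'\}$, so again $|N(w)|\ge d_u+1$. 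Either way $d_w=|N(w)|>d_u$, contradicting $d_u\ge d_w$. Hence $z$ always exists, the clean swap always goes through, and the extremal choice of $(G,M)$ must already satisfy $V(M)=\{1,\dots,2\nu\}$, which is exactly the assertion. I would finally note that the same 2‑switch‑plus‑rematch bookkeeping is what is needed to track a \emph{specific} matching, as in the stronger variant \Cref{lem: MM prefix} used later for \matchDR.
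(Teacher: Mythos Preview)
The paper does not prove this lemma; it quotes it from \cite{GJL99} and then proves the strengthening (Claim~\ref{cl: MM structure}) on top of it. So there is no ``paper's own proof'' to compare against here.

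Your argument is correct and self-contained. The extremal/potential setup is clean, and the crucial existence of $z$ in the non-adjacent case is handled properly: the neighbourhood inclusion forced by $d_u\ge d_w$ together with $w'\in N(w)\setminus N(u)$ (and, when $u\sim w$, additionally $u\in N(w)\setminus N(u)$) gives $d_w\ge d_u+1$, a contradiction. The distinctness of $u,z,w,w'$ and the fact that the deleted edge $(u,z)$ is never in $M$ (since $u$ is unmatched) ensure $M'$ is a valid matching of size $\nu$ in $G'$ with $V(M')=(V(M)\setminus\{w\})\cup\{u\}$, so $\Phi$ strictly drops. This is exactly the kind of FLIP-based exchange argument the paper itself uses in the proof of Claim~\ref{cl: MM structure}, so your approach is in the same spirit as (and presumably close to) the original \cite{GJL99} proof.
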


For our purposes, however, we need a stronger type of prefix lemma, similar to the one used in~\cite{BockRautenbach19} for solving MM-DR over bipartite graphs. Towards proving this stronger claim, we introduce some notation. 
Consider a realization $G = (V, E)$ of a degree sequence $d$, 
with four vertices $x, y, u, v \in V$ such that $(x,u), (y,v) \in E$ and $(x,y), (u, v) \notin E$. 
Then the \emph{FLIP operation} transforms $G$ into a graph $G' = (V,E')$ by replacing the former 
two edges with the latter two, i.e., setting $E'=E \cup \{(x, y), (u, v)\} \setminus \{(x, u), (y, v)\}$. 
This operation preserves the degrees of individual vertices, but could lead to non-isomorphic realization. 
It was widely studied and used in different contexts, see~\cite{Fulkerson1960,EKM13,Barrus16}.

\begin{claim}
\label{cl: MM structure}
Let $G = (V, E)$ be a realization of $d$ with $V = \{v_1, v_2,\ldots, v_n\}$, 
such that $\deg_G(v_i) = d_i$ for every $i \in [1,n]$. 
If there is a matching $M  \subseteq E$ on the first $2\nu$ vertices $v_1,\ldots,v_{2\nu}$, 
then there is a realization $G' = (V, E')$ of $d$ with the "inverted" matching $M^{inv} = \{(v_i, v_{2\nu-i+1}) \mid i \in [1,\nu]\}$, $M^{inv} \subseteq E'$, 
and the same degrees $\deg_{G'}(v_i) = d_i$ for every $i \in [1,n]$.
\end{claim}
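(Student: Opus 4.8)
The plan is to prove the claim by transforming the given realization $G$ step by step, using only the \emph{FLIP} operation (so that all vertex degrees are preserved automatically), until the inverted matching appears. Write $f_i = (v_i, v_{2\nu-i+1})$ for the $i$-th inverted edge and proceed by induction on $k$, maintaining the invariant that there is a realization of $d$ together with a perfect matching $M_k$ on the first $2\nu$ vertices such that $f_1, \ldots, f_{k-1} \in M_k$. The base case $k=1$ is exactly the hypothesis, with $M_1 = M$. The goal of stage $k$ is to install $f_k$ without destroying $f_1, \ldots, f_{k-1}$, thereby advancing the invariant to $k+1$; once $k=\nu+1$, all of $f_1, \ldots, f_\nu$ are present, and since they are pairwise vertex-disjoint they form the required $M^{inv} \subseteq E'$.

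For stage $k$, set $a = v_k$ and $b = v_{2\nu-k+1}$ and let $a', b'$ be their partners in $M_k$. The endpoints of $f_1, \ldots, f_{k-1}$ are precisely $v_1, \ldots, v_{k-1}$ and $v_{2\nu-k+2}, \ldots, v_{2\nu}$, so $a, a', b, b'$ all lie in the ``middle block'' $R_k = \{v_k, \ldots, v_{2\nu-k+1}\}$. This is the key structural point: any FLIP confined to vertices of $R_k$ leaves $f_1, \ldots, f_{k-1}$ untouched. If $a' = b$, then $f_k$ is already a matching edge and there is nothing to do. Otherwise $a, a', b, b'$ are four distinct vertices of $R_k$, and it suffices to reach a realization in which both $(a,b)$ and $(a',b')$ are edges: then $M_{k+1} = M_k \setminus \{(a,a'),(b,b')\} \cup \{(a,b),(a',b')\}$ is a perfect matching containing $f_k$, and the invariant advances.

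I would then split according to which of $(a,b)$ and $(a',b')$ are already present. If neither is an edge, a single FLIP replacing $(a,a'),(b,b')$ by $(a,b),(a',b')$ finishes the stage. If both are already edges, no FLIP is needed and one simply re-matches as above. The hard part, which I expect to be the main obstacle, is the remaining essentially symmetric case in which exactly one of the two target edges is already present while the other is absent: here the natural FLIP is \emph{blocked}, since it would attempt to re-insert an existing edge. Suppose first $(a,b) \in E$ and $(a',b') \notin E$. I would resolve this either by observing that $G[R_k \setminus \{a,b\}]$ already admits a perfect matching (equivalently, an alternating path re-matches the exposed vertices $a'$ and $b'$), in which case $f_k$ is installed with no FLIP, or else by performing an auxiliary FLIP that creates $(a',b')$ from a neighbour of $a'$ and a neighbour of $b'$ distinct from $a,b$, after which $a'$ and $b'$ are matched directly. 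The non-increasing order of $d$ is exactly what supplies those extra neighbours: since $a'$ and $b'$ carry interior indices, $\deg(a'),\deg(b') \geq \deg(v_{2\nu-k+1}) \geq 2$ (the lower bound of $2$ holding here because $b$ is adjacent to both $a$ and $b'$), so each has room to redirect an edge.

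The genuinely delicate configurations are the symmetric subcase where $(a,b)$ is the absent edge to be created and the lowest-index block vertex $b = v_{2\nu-k+1}$ has small degree, since then the extra-neighbour argument must be applied on the $\{a,b\}$ side rather than the $\{a',b'\}$ side. I would handle these by a short, explicitly chosen sequence of FLIPs that moves $b$'s incident edge onto $a$ while using a neighbour of $a$ (whose degree is at least that of $a'$, hence at least $2$) and a fifth vertex to avoid recreating an existing edge or a self-loop. Once every such local case is dispatched, each stage installs $f_k$ while preserving $f_1, \ldots, f_{k-1}$, so the induction closes and the vertex-disjoint edges $f_1, \ldots, f_\nu$ constitute the desired inverted matching $M^{inv} \subseteq E'$ in a realization $G'$ of $d$.
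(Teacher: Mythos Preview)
Your inductive framework is exactly the paper's: process the smallest index $k$ with $f_k=(v_k,v_{2\nu-k+1})\notin M$, observe that the partners $a'=v_x$ and $b'=v_y$ of $a=v_k$ and $b=v_{2\nu-k+1}$ lie in the middle block, and split into four cases according to which of $(a,b)$ and $(a',b')$ are already edges. Cases ``both'' and ``neither'' you handle correctly.

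The gap is in the two blocked cases, where your argument is not a proof. In the case $(a,b)\in E$, $(a',b')\notin E$ you propose a FLIP that creates $(a',b')$ using a neighbour of $a'$ and a neighbour of $b'$ \emph{distinct from $a,b$}, justified by $\deg(a'),\deg(b')\ge 2$. This can fail: take $\deg(a')=\deg(b')=2$ with $N(a')=\{a,w\}$ and $N(b')=\{b,w\}$ for a common vertex $w$; then the only neighbours of $a',b'$ outside $\{a,b\}$ coincide and no such FLIP exists. The alternative you mention (a perfect matching in $G[R_k\setminus\{a,b\}]$, i.e.\ an $M_k$-alternating $a'$--$b'$ path) need not exist either. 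In the symmetric case you give no argument at all beyond ``a short, explicitly chosen sequence of FLIPs''.

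What you are missing is that one should \emph{not} avoid $a$ and $b$; the clean move is to use a matching edge itself in the FLIP. You already have the right inequality in hand---$\deg(a')\ge\deg(b)$ since $a'$ carries an interior index---but you weaken it to $\ge 2$ instead of using it directly. From $\deg(a')\ge\deg(b)$ together with the witness $b'\in N(b)\setminus N(a')$ one gets a vertex $z\in N(a')\setminus N(b)$ with $z\notin\{a,b,a',b'\}$; then the single FLIP replacing $(a',z),(b,b')$ by $(a',b'),(b,z)$ produces a realization containing both $(a,b)$ and $(a',b')$. The symmetric case uses $\deg(a)\ge\deg(b')$ and the witness $b\in N(b')\setminus N(a)$ to find $z\in N(a)\setminus N(b')$, and FLIPs $(a,z),(b,b')$ to $(a,b),(b',z)$. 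Each blocked case is thus resolved by one FLIP, not a sequence, and the induction closes.
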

\begin{proof}
Call the pair $(G,M)$ \emph{valid} if $G=(V,E)$ is defined on $V = \{v_1, \ldots, v_n\}$ with $\deg(v_i) = d_i$, $M\subseteq E$, and $M$ is on $\{v_1, v_2, \ldots, v_{2\nu}\}$. 
Consider a valid pair $(G, M)$ and 
let $I(M)=\{ i \mid (v_i, v_{2\nu-i+1}) \notin M, ~i \in [1,\nu]$ and
$f(M) = \min I(M)$. 
If $I(M)=\emptyset$, then $M=M^{inv}$, and we are done.
    
We describe procedure \InvMatch that given a valid pair $(G,M)$ violating the claim produces a new valid pair $(G', M')$, such that either $(G',M')$ satisfies the claim or $f(M') > f(M)$. Repeatedly applying \InvMatch
gradually modifies $(G,M)$ while preserving validity and strictly increasing $f(M)$. Since $f$ is bounded above by $\nu$, the process terminates in at most $\nu$ steps and the obtained pair $(G', M')$ satisfies the claim. 

Given a pair $(G,M)$ not satisfying the claim, let $i = f(M)$ and $j = 2\nu -i +1$. 
Since $(v_i,v_j) \notin M$, there are edges $(v_i,v_x), (v_y,v_j) \in M$, and the involved vertices satisfy $\deg_G(v_i) \ge \deg_G(v_y)$ and $\deg_G(v_j) \le \deg_G(v_x)$ by the definition of $f$. 
Let $M' = M \cup \{(v_i,v_j), (v_x, v_y)\}\setminus \{(v_i,v_x), (v_y,v_j)\}$. 
If there is a realization $G'$ such that $M' \subseteq G'$, then the pair $(G', M')$ is valid and either satisfies the claim or increases $f$. Proceed according to one of the following cases to obtain $G'$. See \Cref{fig: MM prefix cases}.
\begin{enumerate}[{Case} 1:]
\item 
$(v_i, v_j), (v_x, v_y) \in E(G)$. Then simply take $G' = G$.

\item $(v_i, v_j), (v_x, v_y) \notin E(G)$. 
Then perform a FLIP operation on $G$ replacing $(v_i,v_x), (v_y,v_j)$ with $(v_i, v_j), (v_x, v_y)$. $G'$ obtained in this way satisfies $M' \subseteq G'$.

\item $(v_i, v_j) \in E(G)$ and $(v_x, v_y) \notin E(G)$. 
Then there is a vertex $v_z \in V(G) \setminus \{v_i.v_j,v_x,v_y\}$, such that $(v_x, v_z) \in E(G)$ and $(v_j, v_z) \notin E(G)$. 
Indeed, $\deg_G(v_x) \ge \deg_G(v_j)$ and $v_y$ is connected to $v_j$, but not to $v_x$, so there exists $v_z$ compensating for this. 
Obtain $G'$ by doing a FLIP replacing $(v_x, v_z), (v_j,v_y)$ with $(v_x,v_y), (v_j, v_z)$.

\item $(v_i, v_j) \notin E(G)$ and $(v_x, v_y) \in E(G)$.
Then there is a vertex $v_z \in V(G) \setminus \{v_i.v_j,v_x,v_y\}$, such that $(v_i, v_z) \in E(G)$ and $(v_y, v_z) \notin E(G)$. 
Similarly to the previous case, this is because $\deg_G(v_i) \ge \deg_G(v_y)$. 
Obtain $G'$ by doing a FLIP replacing $(v_i, v_z), (v_j,v_y)$ with $(v_i,v_j), (v_y, v_z)$.
\qedhere

\end{enumerate}
\end{proof}

%%%%%%%%%%%%%%%%%%%%%%%%%%%%%%%%%%%%%%%%
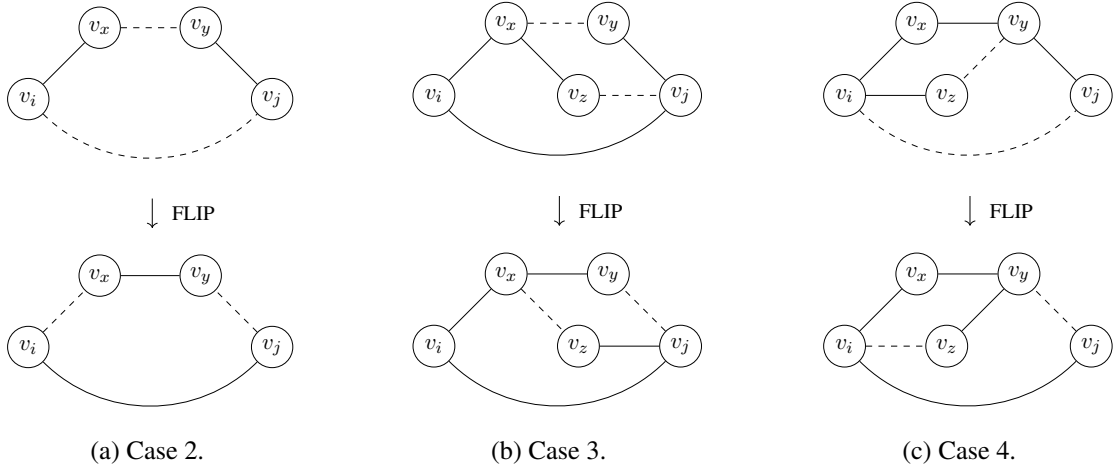
\begin{figure}[h!]
\centering
\begin{tabular}{c@{\hspace{5pt}}c@{\hspace{5pt}}c}
\begin{subfigure}{.305\textwidth}
    \centering
    \resizebox{0.8\textwidth}{!}{
    \tikzset{
    myptr/.style={-{Stealth[scale=1.5]}},
    }
    \begin{tikzpicture}[
        mynode/.style={circle, draw=black, minimum size=20pt, inner sep=0pt}
    ]
    
    % Nodes in G
    \node[mynode] (i) {$v_i$};
    \node[mynode, above right = 1cm of i] (x) {$v_x$};
    \node[mynode, right =1cm of x] (y) {$v_y$};
    \node[mynode, below right = 1cm of y] (j) {$v_j$};
    \node[right = 0.4cm of x] (sup) {};
    
    % Edges in G
    \draw[dashed] (x) -- (y);
    \draw[dashed] (i) to[out=-45, in=-135] (j);
    \draw (i) -- (x);
    \draw (j) -- (y) ;
    
    % mapsto
    \node[below=2.5cm of sup] (m1) {};
    \node[below=0.5cm of m1] (m2) {};
    \draw[->]        (m1)   -- (m2) node[pos=0.5, right=0.2cm] {\small FLIP};
    % \node[minimum size=20pt, right=1.5cm of y, font = \Large] (mapsto) {$\xMapsto[]{\text{FLIP}}$};
    
    % Nodes in G
    \node[mynode, below = 3.5cm of i] (i1) {$v_i$};
    \node[mynode, above right = 1cm of i1] (x1) {$v_x$};
    \node[mynode, right =1cm of x1] (y1) {$v_y$};
    \node[mynode, below right = 1cm of y1] (j1) {$v_j$};
    
    % Edges in G
    \draw (x1) -- (y1);
    \draw (i1)  to[out=-45, in=-135] (j1);
    \draw[dashed] (i1) -- (x1);
    \draw[dashed] (j1) -- (y1);
    
    \end{tikzpicture}
    }
    \caption{Case 2.}
    \label{fig: MM prefix 2}
\end{subfigure}
%%%%%%%%%%%%%%%%%%%%%%%%%%%%%%%
&
\begin{subfigure}{.32\textwidth}
    \centering
    \resizebox{0.8\textwidth}{!}{
    \tikzset{
    myptr/.style={-{Stealth[scale=1.5]}},
    }
    
    \definecolor{b}{rgb}{0.0, 0, 1}
    \definecolor{gr}{rgb}{0.7, 0.7, 0.7}
    
    \begin{tikzpicture}[
        mynode/.style={circle, draw=black, minimum size=20pt, inner sep=0pt}
    ]

    % Nodes in G
    \node[mynode] (i) {$v_i$};
    \node[mynode, above right = 1cm of i] (x) {$v_x$};
    \node[mynode, right =1cm of x] (y) {$v_y$};
    \node[mynode, below right = 1cm of y] (j) {$v_j$};
    \node[below = 0.2cm of x] (sup) {};
    \node[mynode, below right = 1cm of x] (z) {$v_z$};
    \node[right = 0.4cm of x] (sup) {};
    
    % Edges in G
    \draw[dashed] (x) -- (y);
    \draw (i) to[out=-45,in=-135] (j);
    \draw (i) -- (x);
    \draw (j) -- (y);
    \draw (x) -- (z);
    \draw[dashed] (z) -- (j);

    % mapsto
    \node[below=2.5cm of sup] (m1) {};
    \node[below=0.5cm of m1] (m2) {};
    \draw[->]  (m1)   -- (m2) node[pos=0.5, right=0.2cm] {\small FLIP};
    % \node[minimum size=20pt, right=1.5cm of y, font = \Large] (mapsto) {$\xMapsto[]{\text{FLIP}}$};
    
   % Nodes in G'
    \node[mynode, below = 3.5cm of i] (i1) {$v_i$};
    \node[mynode, above right = 1cm of i1] (x1) {$v_x$};
    \node[mynode, right =1cm of x1] (y1) {$v_y$};
    \node[mynode, below right = 1cm of y1] (j1) {$v_j$};
    \node[mynode, below right = 1cm of x1] (z1) {$v_z$};
    
    % Edges in G'
    \draw (x1) -- (y1);
    \draw (i1) to[out=-45,in=-135] (j1);
    \draw (i1) -- (x1);
    \draw[dashed] (j1) -- (y1);
    \draw[dashed] (x1) -- (z1);
    \draw (z1) -- (j1);
    
    \end{tikzpicture}
    }
    \caption{Case 3.}
    \label{fig: MM prefix 3}
    \end{subfigure}
%%%%%%%%%%%%%%%%%%%%%%%%%%%%%%%%%%%%%%%%
&
    \begin{subfigure}{.32\textwidth}
    \centering
    \resizebox{0.8\textwidth}{!}{
    \tikzset{
    myptr/.style={-{Stealth[scale=1.5]}},
    }
    
    \definecolor{b}{rgb}{0.0, 0, 1}
    \definecolor{gr}{rgb}{0.7, 0.7, 0.7}
    
    \begin{tikzpicture}[
        mynode/.style={circle, draw=black, minimum size=20pt, inner sep=0pt}
    ]

    % Nodes in G
    \node[mynode] (i) {$v_i$};
    \node[mynode, above right = 1cm of i] (x) {$v_x$};
    \node[mynode, right =1cm of x] (y) {$v_y$};
    \node[mynode, below right = 1cm of y] (j) {$v_j$};
    \node[right = 0.2cm of x] (sup) {};
    \node[mynode, below left = 1cm of y] (z) {$v_z$};
    \node[right = 0.4cm of x] (sup) {};
    
    % Edges in G
    \draw (x) -- (y);
    \draw[dashed] (i) to[out=-45,in=-135] (j);
    \draw (i) -- (x);
    \draw (j) -- (y);
    \draw[dashed] (y) -- (z);
    \draw (i) -- (z);

    % mapsto
    \node[below=2.5cm of sup] (m1) {};
    \node[below=0.5cm of m1] (m2) {};
    \draw[->]        (m1)   -- (m2) node[pos=0.5, right=0.2cm] {\small FLIP};
    % \node[minimum size=20pt, right=1.5cm of y, font = \Large] (mapsto) {$\xMapsto[]{\text{FLIP}}$};
    
   % Nodes in G'
    \node[mynode, below = 3.5cm of i] (i1) {$v_i$};
    \node[mynode, above right = 1cm of i1] (x1) {$v_x$};
    \node[mynode, right =1cm of x1] (y1) {$v_y$};
    \node[mynode, below right = 1cm of y1] (j1) {$v_j$};
    \node[mynode, below left = 1cm of y1] (z1) {$v_z$};
    
    % Edges in G'
    \draw (x1) -- (y1);
    \draw (i1) to[out=-45,in=-135] (j1);
    \draw (i1) -- (x1);
    \draw[dashed] (j1) -- (y1);
    \draw (y1) -- (z1);
    \draw[dashed] (i1) -- (z1);
    
    \end{tikzpicture}
    }
    \caption{Case 4.}
    \label{fig: MM prefix 4}
    \end{subfigure}
%%%%%%%%%%%%%%%%%%%%%%%%%%%%%%%%%%%%%%
\end{tabular}
\caption{Illustration of the cases in the proof of Claim 3.
All the illustrating figures maintain the convention that 
solid lines represent edges existing in the graph and dashed lines represent edges not in the graph.
}
\label{fig: MM prefix cases}
\end{figure}
%%%%%%%%%%%%%%%%%%%%%%%%%%%%%%%%%%%%%%

\noindent
Combining \Cref{cl: Gould for MM} and \Cref{cl: MM structure} gives a more specific Prefix Lemma for \matchlong.

\begin{lemma}
{\bf (Inverted Prefix Lemma for \match)} 
\label{lem: MM prefix}
If a sequence $d$ has a realization $G = (V, E)$ where $V = \{v_1, v_2,\ldots, v_n\}$ such that $\deg_G(v_i) = d_i$ for every $i \in [1,n]$,
with a matching $M$ of size $\nu$, then $d$ has a realization $G' = (V, E')$ with the 
matching $M' = \{(v_i, v_{2\nu-i+1}) \mid i \in [1,\nu]$ and the same degrees $\deg_{G'}(v_i) = d_i$ for every $i \in [1,n]$.
\end{lemma}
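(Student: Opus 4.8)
The plan is to obtain the statement as a direct composition of the two results already in hand, namely the Arbitrary Prefix Lemma (\Cref{cl: Gould for MM}) and \Cref{cl: MM structure}. The Arbitrary Prefix Lemma delivers a realization in which the $2\nu$ matched vertices are precisely those of highest degree, while \Cref{cl: MM structure} upgrades any matching supported on that degree-prefix into the specific inverted matching $M'=\{(v_i,v_{2\nu-i+1})\mid i\in[1,\nu]\}$. Chaining them in this order should yield exactly the desired conclusion, which is why the preceding sentence in the text already announces that ``combining'' the two gives this sharper lemma.

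Concretely, first I would invoke \Cref{cl: Gould for MM} on the given realization $G$ with its size-$\nu$ matching $M$. This produces a new realization $G''$ of $d$ carrying a matching $M''$ of size $\nu$ whose $2\nu$ endpoints realize the $2\nu$ largest degrees $d_1,\ldots,d_{2\nu}$. Second, since $d$ is non-increasing and these endpoints take exactly the values $d_1,\ldots,d_{2\nu}$, I would label them $v_1,\ldots,v_{2\nu}$ in non-increasing degree order (breaking ties arbitrarily) so that $\deg_{G''}(v_i)=d_i$ holds; under this labeling $M''$ becomes a matching supported on the prefix $\{v_1,\ldots,v_{2\nu}\}$, which is precisely the hypothesis required by \Cref{cl: MM structure}.

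Third, applying \Cref{cl: MM structure} to the pair $(G'',M'')$ returns a realization $G'=(V,E')$ of $d$ that contains the inverted matching $M'=\{(v_i,v_{2\nu-i+1})\mid i\in[1,\nu]\}$ and preserves every degree, $\deg_{G'}(v_i)=d_i$ for all $i\in[1,n]$. This $G'$ is exactly the realization asserted by the lemma, so the statement follows.

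I do not expect a substantive obstacle here, since the heavy lifting has already been carried out inside the two cited ingredients --- in particular the \textsc{FLIP}-based case analysis underlying \Cref{cl: MM structure}. The one point requiring genuine care is the bookkeeping in the second step: one must check that the relabeling is consistent with the global non-increasing ordering of $d$, so that the prefix $v_1,\ldots,v_{2\nu}$ fed into \Cref{cl: MM structure} coincides with the highest-degree vertices guaranteed by \Cref{cl: Gould for MM}. Once that identification is pinned down, the two lemmas plug together with no additional argument.
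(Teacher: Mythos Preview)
Your proposal is correct and matches the paper's approach exactly: the paper presents \Cref{lem: MM prefix} without a separate proof, stating only that ``Combining \Cref{cl: Gould for MM} and \Cref{cl: MM structure} gives a more specific Prefix Lemma for \matchlong.'' Your write-up is in fact more careful than the paper's one-line justification, since you explicitly handle the relabeling needed when there are degree ties so that the matched vertices from \Cref{cl: Gould for MM} become precisely $v_1,\ldots,v_{2\nu}$ as required by the hypothesis of \Cref{cl: MM structure}.
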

\subsection{Reduction to a Bipartite Sequence Pair}

A bipartite graph $\hat{G} = (V, W, \hat{E})$, where $V = \{v_1, v_2, \ldots, v_n\}$ and $W = \{w_1, w_2, \ldots, w_n\}$, is a \emph{$\nu$-matched} realization for the sequence pair $(d, d)$ if it satisfies the following properties.
\begin{compactenum}[(M1)]
\item $\hat{G}$ realizes the sequence pair $(d,d)$.
\item $(v_i, w_i) \notin \hat{E}$ for every $i \in [1,n]$.
\item $\hat{M} = \{(v_i, w_{2\nu-i+1}) \mid i \in [1,2\nu] \subseteq \hat{E}$.
\end{compactenum}

First, we describe an algorithm that given a $\nu$-matched realization $\hat{G}$ of $(d,d)$, produces a graph $G$ realizing $d$ with a matching of size $\nu$.

%%%%%%%%%%%%%%%%%%%%%%%%%%%%%%%%%%%%%%%%%%%
\begin{enumerate}
\item
\label{step: 1}
Compute a half-integral solution.
\begin{compactenum}
\item
For all $i, j \in [1,n]$, let 
\begin{align*}
y_{ij} & 
= \begin{cases}
    1, ~ \{v_i, w_j\} \in \hat{E}, \\
    0, ~ \text{otherwise,}
\end{cases}
&
\text{and}
&& 
\omega(i,j) = \frac{1}{2}(y_{ij} + y_{ji}).
\end{align*}

\item 
Define a weighted graph $G^\omega = (V^\omega, E^\omega,\omega)$ with vertex set $V^\omega = [1,n]$ and 
%\\
an edge $e=(i, j)$ of weight $\omega(e)=$ $\omega(i,j)$ for every $i,j \in V^\omega$. 
Clearly, $w$ is half-intergral.
\item Let $M^\omega= \{(i, 2\nu-i+1) \mid ~i \in [1,\nu]\}$ be the inverted matching in $G^\omega$. 
According to (M3)
\begin{align}
\label{eq: matching edges}
    \omega(e) = 1, ~\text{for every $e \in M^\omega$}.
\end{align}
\item
Define the \emph{weighted degree} of a vertex $i \in V^\omega$ to be $d^\omega(i) = \sum_{j\in V^\omega} \omega(i, j)$. Note that $G^\omega$ realizes $d$ in the \emph{weighted} sense, namely,
\\
\hbox{\hskip 20pt}
$
    d^\omega(i) = \sum_{j\in V^\omega} \omega(i, j) = \frac{1}{2} \left(\sum_{j=1}^n y_{ij} + \sum_{j=1}^n y_{ji}\right) = d_i, ~ \mbox{for any}~ i \in V^\omega.~~~~~~~\mbox{}
$
\end{compactenum}

\item
\label{step: 2} 
Preparing for discarding non-integral weights while keeping the degrees unchanged.
\\
Construct a graph $\Ghalf = (\Vhalf,\Ehalf)$ obtained by removing from $G^\omega$ the edges of integral weight and keeping only those of weight $1/2$.
Formally, $\Vhalf = V^\omega$ and $\Ehalf = \{e \in E^\omega \mid \omega(e) = 1/2 \}$.

\medskip\noindent
In later stages of the construction, whenever $G^\omega$ is modified (by changing the weight of some edge $e$ from $1/2$ to 0 or 1), $\Ghalf$ is modified accordingly (by removing the edge $e$).

\item
\label{step: 3}
Partition into cycles.
\begin{compactenum}
\item 
Partition the edge set of $\Ghalf$ into disjoint (not necessarily simple) cycles each covering an entire connected component of the graph. Since $\Ghalf$ is an even graph by \Cref{obs: G1/2 is even graph for MM}, this can be done in polynomial number of steps.
\item
Let $\calC$ be a set of cycles in the partition.
\item
Let $\calC^{even} \gets \{C\in\calC \mid C ~\mbox{is of even length}\}$,
~~~~~
$\calC^{odd} \gets \calC \setminus \calC^{even}$.
\end{compactenum}

\item 
\label{step: 4}
Eliminate even cycles.

\noindent
For every cycle $C \in \calC^{even}$, do the following.
\begin{compactenum}
\item 
Traverse $C$ starting from an arbitrary vertex $x \in C$ and continuing along the cycle until returning to $x$. Denote  the resulting sequence of edges by $E(C)=(e_1, e_2, \ldots, e_\ell)$. 
\item
Increase (resp., decrease) edge weights on even (resp., odd) positions in the sequence $E(C)$ by $1/2$.
That is, for every $i \in [1,\ell]$,
$\omega(e_i)$ is set to 1 if $i$ is even and 0 otherwise. 
(This does not change the weighted degrees in $G^\omega$ and does not affect other cycles in $\calC$ or edges in $M^\omega$.)
\end{compactenum}

\item 
\label{step: 5}
Eliminate odd cycles.

\noindent
Arrange the cycles in $\calC^{odd}$ in pairs 
%\hfill
(recall that by \Cref{obs: even number of odd cycles for MM} their number is even).
For every pair of cycles $(C,C')$ choose vertices $x \in C$, $y \in C'$ according to Obs. \ref{obs: right x and y for MM}. Then do the following.
\begin{compactenum}
\item 
Starting from $x \in C$, traverse $C$ via edges 
$E(C)=(e_1, e_2, \ldots, e_\ell)$.
\\
Starting from vertex $y \in C'$, traverse $C'$ via edges 
$E(C')=(e'_1, e'_2, \ldots, e'_k)$.
\item 
Let $\xi = \omega(x,y)$ 
(by \Cref{obs: right x and y for MM}, this weight must be either $0$ or $1$).
\item Set $\omega(x, y) \gets 1-\xi$.
\item For every $i \in [1,\ell]$ and $j \in [1,k]$, 
      modify the edge weights in the cycles $C$ and $C'$ as follows:
\begin{align*}
\omega(e_i) & \gets 
    \begin{cases}
        1-\xi, ~ \text{$i$ is even}, \\
        \xi, ~ \text{$i$ is odd},
    \end{cases} 
&
\omega(e'_j) & \gets 
    \begin{cases}
        1-\xi, ~ \text{$j$ is even}, \\
        \xi, ~ \text{$j$ is odd}.
    \end{cases} 
\end{align*}
\end{compactenum}
By \Cref{obs: G^omega realizes d for MM}, at this stage each edge in $G^\omega$ 
has weight 1 or 0, $G^\omega$ realizes $d$, and Eq.~\eqref{eq: matching edges} holds.

\item 
\label{step: 6}
Generate the output $G$.

\noindent
Transform $G^\omega$ into a simple graph $G$, with an edge $(i,j)$ whenever $\omega(i,j) = 1$ in $G^\omega$. Note that $M = M^\omega$ forms a matching in $G$ by Eq.~\eqref{eq: matching edges}.
\end{enumerate}

%%%%%%%%%%%%%%%%%%%%%%%%%%%%%%%%%%%%%%%%%%%%

The algorithm is justified by the following Observations.

\begin{observation} 
\label{obs: G1/2 is even graph for MM}
The graph $G^{1/2}$ is even
(namely, all its vertex degrees are even).
\end{observation}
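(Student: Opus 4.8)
The plan is to mirror the argument of \Cref{obs: G1/2 is even graph} from the MDS section, since the weighted graph $G^\omega$ and its half-integral subgraph $\Ghalf$ are constructed in exactly the same way here (Steps~\ref{step: 1} and~\ref{step: 2}). The whole statement reduces to the integrality of the weighted degrees established in Step~\ref{step: 1}(d), so no new structural insight is needed.

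Concretely, I would first recall that every vertex $i \in V^\omega$ satisfies $d^\omega(i) = \sum_{j \in V^\omega} \omega(i,j) = d_i$, which is a (nonnegative) integer. Next I would split this sum according to the possible edge weights: each edge incident to $i$ carries weight $0$, $1/2$, or $1$. The edges of weight $0$ and $1$ contribute an integer total to $d^\omega(i)$, so the remaining contribution, coming from the weight-$1/2$ edges, must itself be an integer. That contribution equals $\tfrac{1}{2}$ times the number of weight-$1/2$ edges incident to $i$. By the definition $\Ehalf = \{e \in E^\omega \mid \omega(e) = 1/2\}$, this number is precisely $\deg_{G^{1/2}}(i)$, and for $\tfrac{1}{2}\deg_{G^{1/2}}(i)$ to be an integer we must have $\deg_{G^{1/2}}(i)$ even.

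There is essentially no obstacle here; the argument is a one-line parity observation, and the statement holds for every vertex simultaneously. The only point requiring care is to note explicitly that $\deg_{G^{1/2}}(i)$ counts exactly the weight-$1/2$ edges at $i$ (and not, say, all edges of $G^\omega$), which is immediate from the construction of $\Ehalf$. Hence $G^{1/2}$ is even, as claimed.
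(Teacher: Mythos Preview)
Your proposal is correct and follows exactly the argument the paper gives for the analogous \Cref{obs: G1/2 is even graph} in the MDS section; indeed, in the MM section the paper does not even repeat the proof, implicitly deferring to that same parity argument from integrality of $d^\omega(i)$.
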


A connected even graph has an Euler cycle, implying the following.

\begin{observation} 
\label{obs: even number of odd cycles for MM}
The number of cycles in $\calC^{odd}$ is even.
\end{observation}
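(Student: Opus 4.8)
The plan is to adapt the parity argument that established \Cref{obs: even number of odd cycles-2} in the \domset section. The starting point is that the half-integral weighting $G^\omega$ carries total weight
\[
\sum_{e \in E^\omega} \omega(e) ~=~ \frac{1}{2}\sum_{i=1}^n d_i ~=~ m,
\]
where $m$ is the number of edges of any realization of $d$, and hence an integer. (This equality holds because $\sum_{i} d^\omega(i) = \sum_i d_i$ counts every edge twice.) I would split the left-hand sum according to edge weights: edges of weight $1$ contribute an integer, edges of weight $0$ contribute nothing, and edges of weight $1/2$ — which are exactly the edges of $\Ehalf$ — contribute $\tfrac{1}{2}|\Ehalf|$. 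Since the integral contribution and the whole sum are both integers, $\tfrac{1}{2}|\Ehalf|$ must be an integer as well, so $|\Ehalf|$ is even.

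Next I would translate this into a statement about cycles. By Step~\ref{step: 3}, the collection $\calC$ partitions the edge set $\Ehalf$ into disjoint cycles, so
\[
|\Ehalf| ~=~ \sum_{C \in \calC^{even}} |E(C)| ~+~ \sum_{C \in \calC^{odd}} |E(C)|.
\]
Every cycle in $\calC^{even}$ has an even number of edges, so the first sum is even and does not affect the parity of $|\Ehalf|$. Each cycle in $\calC^{odd}$ contributes an odd number of edges, so the parity of the second sum equals the parity of $|\calC^{odd}|$. Combining this with the fact that $|\Ehalf|$ is even forces $|\calC^{odd}|$ to be even, which is exactly the claim.

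There is essentially no obstacle here: once the integrality of $m$ is in place the conclusion is a one-line parity count. The only points meriting care are, first, that the matching edges of $M^\omega$ have weight $1$ by Eq.~\eqref{eq: matching edges} and thus lie outside $\Ehalf$, so they are already absorbed into the integral part of $\sum_e \omega(e)$ and do not disturb the count; and second, that the existence of the partition in Step~\ref{step: 3} relies on \Cref{obs: G1/2 is even graph for MM} together with the Euler-cycle remark preceding the statement (each even connected component is covered by a single closed walk). The parity conclusion itself, however, rests solely on the integrality of $m$ and on the decomposition of $\Ehalf$ into cycles, independently of how that decomposition is obtained.
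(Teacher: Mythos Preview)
Your proof is correct and follows essentially the same approach as the paper: both use the integrality of $m = \sum_e \omega(e)$ to conclude $|\Ehalf|$ is even, and then invoke the disjoint cycle partition of $\Ehalf$ to deduce that $|\calC^{odd}|$ is even. Your version is simply more explicit about the intermediate parity bookkeeping.
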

\begin{proof}
Observe that
$\sum_{e \in E^\omega} \omega(e) ~=~ \frac{1}{2} \sum_{i=1}^n d_i = m,$
where $m$ is the number of edges, which is an integer.
Therefore, the number of edges with weight $1/2$ must be even. 
Since the cycles in $\calC$ cover all of $\Ehalf$ and are disjoint, the observation follows.
\end{proof}

\begin{observation}
\label{obs: right x and y for MM}
For any pair of disjoint odd cycles $C, C' \in \calC^{odd}$, there exist $x \in C$ and $y \in C'$, such that $(x,y) \notin M^\omega$ and $\omega(x, y) \neq 1/2$.
\end{observation}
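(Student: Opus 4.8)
The plan is to verify the two required conditions separately, exploiting the fact that the condition $\omega(x,y) \neq 1/2$ turns out to be automatic, so that the only real content is avoiding the matching edges. First I would translate both conditions into structural statements. By the definition in Step~\ref{step: 2}, the equality $\omega(x,y) = 1/2$ holds exactly when $(x,y) \in \Ehalf$, so requiring $\omega(x,y) \neq 1/2$ is the same as requiring that $(x,y)$ is not a half-weight edge of $\Ghalf$. Meanwhile $M^\omega$ is a matching, i.e.\ a set of vertex-disjoint weight-one edges, so requiring $(x,y) \notin M^\omega$ forbids, for each fixed vertex, at most one partner.

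Next I would invoke the structure of the cycle partition produced in Step~\ref{step: 3}. Since that step decomposes $\Ehalf$ into cycles, each covering an \emph{entire connected component} of $\Ghalf$ (one Euler tour per component, which exists because $\Ghalf$ is even by \Cref{obs: G1/2 is even graph for MM}), two distinct cycles $C$ and $C'$ necessarily lie in distinct connected components. Consequently there is no edge of $\Ghalf$ joining a vertex of $C$ to a vertex of $C'$, and hence $\omega(x,y) \neq 1/2$ for \emph{every} $x \in C$ and every $y \in C'$. This settles the half-weight condition for all candidate pairs simultaneously, so it only remains to exhibit a single cross pair that is not a matching edge.

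For the matching condition I would argue by a short pigeonhole. Each cycle in $\calC^{odd}$ is the Euler tour of a component with at least one edge, so every vertex on it has (even, positive, hence $\geq 2$) degree and thus at least two distinct neighbours; in particular $|V(C)| \geq 3$. Fix any $y_0 \in C'$. Because $M^\omega$ is a matching, at most one vertex of $C$ is matched to $y_0$, and since $|V(C)| \geq 2$ there is some $x \in C$ with $(x,y_0) \notin M^\omega$. Taking $y = y_0$ then yields a pair $(x,y)$ with $(x,y) \notin M^\omega$ and, by the previous paragraph, $\omega(x,y) \neq 1/2$, which is exactly the claim.

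The main point to get right — and essentially the only place the argument could slip — is the claim that distinct cycles occupy distinct connected components, since this is precisely what makes the half-weight condition free. It rests on reading the Step~\ref{step: 3} partition as ``one cycle per component'' (an Euler tour) rather than as an arbitrary cycle cover. Were several cycles allowed to share a component, one would instead have to argue directly that the cross pairs lying in $\Ehalf$ or in $M^\omega$ cannot exhaust all $|V(C)|\cdot|V(C')|$ pairs, a messier count; the component-based reading avoids this entirely and keeps the proof to the two clean steps above.
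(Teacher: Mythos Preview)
Your proposal is correct and takes essentially the same approach as the paper: both arguments rest on the observation that distinct cycles in $\calC$ occupy distinct connected components of $\Ghalf$ (so the half-weight condition is automatic for every cross pair), and then use the matching property of $M^\omega$ to avoid the at most one forbidden partner. The only cosmetic difference is that you fix a vertex $y_0 \in C'$ and vary $x$ over $C$, whereas the paper fixes an arbitrary $x \in C$ and, if $(x,y')$ happens to lie in $M^\omega$, replaces $y'$ by any other vertex of $C'$.
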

\begin{proof}
Let $x \in C$, $y' \in C'$ be arbitrary vertices. 
If $(x, y') \in M^\omega$, let $y \neq y'$ be any other vertex in $C'$. Otherwise, let $y = y'$.
Since $M^\omega$ is a matching, it follows that $(x, y) \notin M^\omega$. 

Assume, towards contradiction, that $\omega(x,y)=1/2$. Then the edge $(x,y)$ belongs to $G^{1/2}$, implying that $x$ and $y$ belong to the same connected component in $G^{1/2}$. But this contradicts the fact that $x\in C$ and $y\in C'$ where $C$ and $C'$ cover different connected components of $\Ghalf$. 
\end{proof}

\begin{observation} 
\label{obs: G^omega realizes d for MM}
At the end of Step \ref{step: 5}, each edge in 
$G^\omega$ has weight 1 or 0, $G^\omega$ 
realizes $d$ and Eq, \eqref{eq: matching edges} holds.
\end{observation}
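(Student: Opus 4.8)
The plan is to prove the three assertions -- integrality of all weights, preservation of the weighted degree sequence, and survival of the matching edges -- by tracking how the modifications in Steps \ref{step: 4} and \ref{step: 5} act on three kinds of edges: the half-weight edges (precisely those of $\Ghalf$), the remaining integral edges, and the matching edges of $M^\omega$. The whole argument rests on the fact that every modification is \emph{neutral}, i.e.\ it leaves every weighted degree $d^\omega(v)$ unchanged, together with a careful accounting of which edges are ever touched.

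First I would settle integrality. At the start of Step \ref{step: 3} the edges of weight $1/2$ are exactly the edges of $\Ghalf$, and by construction these are partitioned among the cycles of $\calC=\calC^{even}\cup\calC^{odd}$. Step \ref{step: 4} rewrites the weight of every edge of each even cycle to a value in $\{0,1\}$, and Step \ref{step: 5} does the same for every edge of each odd cycle; hence after Step \ref{step: 5} no cycle edge retains weight $1/2$. The only additional edges whose weights change are the inter-cycle edges $(x,y)$ chosen for each pair of odd cycles, and by \Cref{obs: right x and y for MM} such an edge satisfies $\omega(x,y)\neq 1/2$, so its weight $\xi$ lies in $\{0,1\}$ and is replaced by $1-\xi\in\{0,1\}$. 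Every integral edge not of this inter-cycle form is never modified, so all weights end in $\{0,1\}$.

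Second I would verify degree preservation, so that, since $G^\omega$ realizes $d$ in the weighted sense after Step \ref{step: 1}, it still does at the end of Step \ref{step: 5}. For an even cycle traversed as $e_1,\ldots,e_\ell$, every appearance of a vertex $v$ other than at the start is incident to two consecutive edges $e_i,e_{i+1}$ of opposite parity, so one is raised by $1/2$ and the other lowered by $1/2$, leaving $d^\omega(v)$ unchanged; this holds per appearance and hence remains valid even though the cycles are only edge-disjoint and possibly non-simple. For a pair $(C,C')$ of odd cycles the same cancellation occurs at every internal appearance; the only discrepancy is at the start vertices $x$ and $y$, where the two incident cycle edges share the odd parity of positions $1$ and $\ell$ (respectively $1$ and $k$) and so are both set to $\xi$. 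There the net change $2\xi-1$ from the two cycle edges is exactly cancelled by the change $(1-\xi)-\xi=1-2\xi$ of the toggled inter-cycle edge $(x,y)$, so $d^\omega(x)$ and $d^\omega(y)$ are preserved as well.

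Finally, for the matching, each edge of $M^\omega$ has weight $1$ by Eq.~\eqref{eq: matching edges}, hence is not an edge of $\Ghalf$ and lies in no cycle of $\calC$, while the inter-cycle edges used in Step \ref{step: 5} avoid $M^\omega$ by \Cref{obs: right x and y for MM}; therefore no modification ever touches a matching edge and Eq.~\eqref{eq: matching edges} continues to hold. The main obstacle I expect is exactly the degree-preservation bookkeeping for odd cycles in Step \ref{step: 5}: because the decomposition guarantees only edge-disjoint (not vertex-disjoint) Eulerian cycles, a vertex may be visited several times, and one must isolate the \emph{defect} created at the two start vertices $x,y$ -- where the alternation fails to cancel -- and check that it is cancelled precisely by toggling the single edge $(x,y)$. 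Getting both endpoints to odd parity ($\ell$ and $k$ odd) is the crux, and it is what forces the pairing of odd cycles (whose number is even by \Cref{obs: even number of odd cycles for MM}) together with the endpoint choice guaranteed by \Cref{obs: right x and y for MM}.
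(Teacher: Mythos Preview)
Your proposal is correct and follows essentially the same approach as the paper's proof, which simply asserts that the modifications in Steps~\ref{step: 4} and~\ref{step: 5} do not change weighted degrees, do not affect other cycles in $\calC$, and do not affect edges in $M^\omega$. Your version is considerably more detailed than the paper's two-sentence argument---in particular, you explicitly carry out the degree-preservation bookkeeping at the start vertices $x,y$ of the odd-cycle pairs and explain why the defect there is cancelled by toggling the edge $(x,y)$---but the underlying reasoning is the same.
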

\begin{proof}
After Step \ref{step: 5}, $G^\omega$ contains no more edges of weight $1/2$. Moreover, neither of the modifications performed in Step \ref{step: 5} change the weighted degrees in $G^\omega$, affect other cycles in $\calC$ or edges in $M^\omega$. The observation follows.
\end{proof}

The reduction correctness follows from the next lemma.
\begin{lemma}
\label{lem: MM equivalence}
There is a $\nu$-matched realization of $(d,d)$ if and only if $\match(d) \ge \nu$.
\end{lemma}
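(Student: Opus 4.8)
The plan is to prove the two directions separately, mirroring the structure of the $(\Leftarrow)$/$(\Rightarrow)$ argument in \Cref{lem: MDS equivalence}. For the forward direction, I would simply invoke the algorithm of Steps~\ref{step: 1}--\ref{step: 6} together with its supporting observations. Given a $\nu$-matched realization $\hat{G}$ of $(d,d)$, the algorithm produces a weighted graph $G^\omega$ that, by \Cref{obs: G^omega realizes d for MM}, has only integral edge weights, realizes $d$, and satisfies $\omega(e)=1$ for every $e \in M^\omega$ (Eq.~\eqref{eq: matching edges}). The point I would stress is that the matching survives every modification: each edge of $M^\omega$ carries weight $1$, so it never lies in $\Ghalf$ and is therefore untouched by the even-cycle elimination of Step~\ref{step: 4} and by the cycle weight-shifts of Step~\ref{step: 5}; moreover, by \Cref{obs: right x and y for MM}, the connecting edge $(x,y)$ selected in Step~\ref{step: 5} is never a matching edge. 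Hence, in the output graph $G$ (with an edge $(i,j)$ exactly when $\omega(i,j)=1$), the set $M = M^\omega = \{(i,2\nu-i+1)\mid i\in[1,\nu]\}$ is a matching of size $\nu$, so $\match(d)\ge\nu$.

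For the reverse direction, suppose $\match(d)\ge\nu$. Then $d$ has a realization with a matching of size at least $\nu$, and restricting to any $\nu$ of its edges yields a realization with a matching of size exactly $\nu$. By the Inverted Prefix Lemma (\Cref{lem: MM prefix}), $d$ then has a realization $G'=(V,E')$ on $V=\{v_1,\ldots,v_n\}$ with $\deg_{G'}(v_i)=d_i$ and the inverted matching $M'=\{(v_i,v_{2\nu-i+1})\mid i\in[1,\nu]\}\subseteq E'$. Exactly as in the $(\Leftarrow)$ part of \Cref{lem: MDS equivalence}, I would ``double'' $G'$ into a bipartite graph $\hat{G}=(\hat V,\hat W,\hat E)$ with $\hat V=\{v_1,\ldots,v_n\}$ and $\hat W=\{w_1,\ldots,w_n\}$, putting $(v_i,w_j)\in\hat E$ if and only if $\{i,j\}\in E'$.

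It then remains to verify properties (M1)--(M3). Property (M1) follows since $\deg_{\hat G}(v_i)=|\{j:\{i,j\}\in E'\}|=d_i$ and symmetrically for each $w_j$, so $\hat G$ realizes $(d,d)$. Property (M2) holds because $G'$ is simple (no loops), hence $\{i,i\}\notin E'$ and $(v_i,w_i)\notin\hat E$. For property (M3), I would check that every edge of $\hat M=\{(v_i,w_{2\nu-i+1})\mid i\in[1,2\nu]\}$ lies in $\hat E$, i.e.\ that $\{i,2\nu-i+1\}\in E'$ for each $i\in[1,2\nu]$: for $i\in[1,\nu]$ this is a matching edge of $M'$, and for $i\in[\nu+1,2\nu]$, writing $i'=2\nu-i+1\in[1,\nu]$, the pair $\{i,2\nu-i+1\}=\{i',2\nu-i'+1\}$ is again a matching edge of $M'$. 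Thus $\hat M\subseteq\hat E$ and $\hat G$ is a $\nu$-matched realization of $(d,d)$.

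The main obstacle is the bookkeeping in the forward direction: one must confirm that none of the weight-modifying steps ever disturbs an edge of $M^\omega$. This is precisely what \Cref{obs: right x and y for MM} and \Cref{obs: G^omega realizes d for MM} are designed to guarantee, so the argument reduces to citing them together with the observation that weight-$1$ edges never enter $\Ghalf$. The reverse direction is then a routine degree count plus the small case split above verifying (M3); its only genuine content is the appeal to \Cref{lem: MM prefix}, which supplies the inverted matching that the doubling construction needs.
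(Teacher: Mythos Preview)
Your proposal is correct and follows essentially the same approach as the paper: the forward direction invokes the algorithm of Steps~\ref{step: 1}--\ref{step: 6} (with \Cref{obs: right x and y for MM} and \Cref{obs: G^omega realizes d for MM} ensuring the matching edges survive), and the reverse direction applies the Inverted Prefix Lemma (\Cref{lem: MM prefix}) followed by the doubling construction, verifying (M1)--(M3). Your write-up is in fact more detailed than the paper's own proof, particularly in spelling out why matching edges never enter $\Ghalf$ and in the explicit case split for (M3), but the underlying argument is identical.
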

\begin{proof}
$(\Rightarrow)$ If there $\nu$-matched realization of $(d,d)$, then the aforementioned algorithm produces a realization of $d$ with matching of size $\nu$.
    
$(\Leftarrow)$ Given a realization $G = (V, E)$ of $d$ with vertices $V = [1,n]$ and 
a matching $M \subseteq E$ of size $\nu$, there is a realization $G' = (V, E')$ with 
a matching $M' = \{(v_i, v_{2\nu-i+1}) \mid i \in [1,\nu] \}$ by the Inverted Prefix Lemma (\Cref{lem: MM prefix}). 
Consider the following graph $\hat{G} = (\hat{V}, \hat{W}, \hat{E})$ with $\hat{V} = \{v_1, \ldots, v_n\}$, $\hat{W} = \{w_1, \ldots, w_n\}$ and $(v_i, w_j) \in \hat{E}$ if and only if $(i, j) \in E'$. 
Clearly, $\hat{G}$ realizes $(d,d)$ and $(v_i,w_i) \notin \hat{E}$ for every $i \in [1,n]$, 
so it satisfies (M1) and (M2). Moreover, the set $\hat{M} = \{(v_i, w_j) \mid (i,j) \in M'\}$ 
is a matching of size $2 \nu$ in $\hat{G}$ satisfying $(M3)$. Indeed, since $M'$ is a matching, 
the edges in $\hat{M}$ form a matching too and each edge $(i,j) \in M'$ leads to two edges 
$(v_i, w_j), (v_j, w_i) \in \hat{M}$. The Lemma follows.
\end{proof}

%%%%%
\subsection{Reduction to Flow}

According to the previous section (in particular, \Cref{lem: MM equivalence}, given a degree sequence $d$ all we need is to find a bipartite $\nu$-matched realization of $(d,d)$ for maximum possible $\nu$. Next we describe how to construct a flow graph $G_{d,\nu}$ for any $\nu$, such that $\nu$-prefix realization of $(d,d)$ exists if and only if the maximum flow in $G_{d,\nu}$ is $\sum_{i=1}^n d_i - 2\nu$. Moreover, given an integer flow of this value we describe how to construct a desired realization. To find a solution to MM-DR one simply needs to iterate over $\nu \in [1,n]$ and find the maximum one with the aforementioned maximum flow.

First we describe construction of $G_{d, \nu}$.
The source $s$ is connected to nodes $x_i \in X$ (for $i \in [1,n]$) that correspond to the vertices of $V$. 
Similarly, the node set $Y$ contains nodes $y_j$ (for $j \in [1,n]$), 
corresponding to the vertices in $W$, and these are connected to the sink $t$.
The nodes are organized into the following sets:
\begin{compactitem}
\item 
Candidates for the matching: $X_M =\{x_i \mid i \in [1,2\nu]$ and $Y_M = \{y_j \mid j \in [1,2\nu]$,
\item 
Rest of the nodes: 
$X_R = \{x_i \mid i \in [2\nu+1,n]\}$ and $Y_R = \{y_j \mid j \in [2\nu+1,n]\}$,
\item
Source and sink: \{s,t\}.
\end{compactitem}
The edges are capacitated and directed from left to right, i.e., an edge $(\alpha, \beta, \gamma)$ leads from the node $\alpha$ to the node $\beta$ and can carry up to $\gamma$ units of flow. The source $s$ has edges leading to every node $x_i$, with capacity $d_i-1$ for $x_i \in X_M$ and $d_i$ for $x_i \in X_R$. Similarly, there are edges leading from every node $y_j$ to the sink $t$, with capacity $d_j-1$ for $y_j \in Y_M$ and $d_j$ for $y_j \in Y_R$. This enforces the degree constraint for the vertices $v_i$ and $w_i$ in $G$. Finally, there's a unit capacity edge $(x_i,y_j,1)$ for every $i, j \in [1,n]$, such that $i \neq j$ and $j \neq 2\nu - i + 1$. 
Overall, the edge set of the flow graph is defined as follows:
\begin{align*}
\tilde{E} ~=~
& \set{(s, x_i, d_i-1) \mid i \in [1,2\nu]} \cup \set{(s, x_i, d_i) \mid i \in [2\nu+1,n]} \\
& \cup \set{(x_i, y_j, 1) \mid i, j \in [1,n], ~ i\neq j, j \neq 2\nu - i + 1} \\
& \cup \set{(y_j, t, d_j-1) \mid j \in [1,2\nu]} 
  \cup \set{(y_j, t, d_j) \mid j \in [2\nu + 1,n]}
~.
\end{align*}

% Define styles
\tikzset{
  roundNode/.style={
    circle, draw, minimum size=1cm, align=center, font=\Large
  },
  dots/.style={
    draw=none, fill=none, yshift=0.1cm
  },
  edge/.style={
    font=\large
  }
}
% %%%%%%%%%%%%%%%%%%%%%%%%

The construction is justified by the following lemma.

\begin{lemma}
\label{lem: matching to flow}
There exists a $\nu$-matched bipartite graph $\hat{G} = (V, W, \hat{E})$ that realizes the $(d,d)$ 
if and only if the value of the maximum $s-t$ flow in $G_{d,\gamma}$ is equal to $\sum_{i=1}^n d_i-2\nu$.
\end{lemma}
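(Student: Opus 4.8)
The plan is to follow the template of the flow reduction for \domset in \Cref{lem:proof of flow reduction}, exploiting the structural parallel between the two constructions. The decisive difference is that here the $2\nu$ edges of $\hat M = \{(v_i, w_{2\nu-i+1}) \mid i \in [1,2\nu]\}$ are \emph{forced} into every $\nu$-matched realization by (M3), so they are deliberately \emph{omitted} from $G_{d,\nu}$ (this is exactly the exclusion $j \neq 2\nu-i+1$), while the capacities of the source and sink edges incident to the matched nodes $X_M, Y_M$ are lowered from $d_i$ to $d_i-1$. Thus the flow is asked to realize only the \emph{residual} degrees: $d_i-1$ on the $2\nu$ matched nodes and $d_i$ on the rest. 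The total source out-capacity is $\sum_{i=1}^{2\nu}(d_i-1)+\sum_{i=2\nu+1}^n d_i = \sum_{i=1}^n d_i - 2\nu$, which is why this value is the right target.

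For the $(\Leftarrow)$ direction I would take an integral maximum flow $\FLOW$ of value $\sum_{i=1}^n d_i - 2\nu$. Since this equals the total capacity leaving $s$, every source edge and (symmetrically) every sink edge is saturated. I would then set $\hat E = \hat M \cup \{(v_i,w_j) \mid \FLOW(x_i,y_j)=1\}$. Because every flow edge has unit capacity and $G_{d,\nu}$ contains at most one $(x_i,y_j)$ edge per pair, and because the excluded pairs $(i,i)$ and $(i,2\nu-i+1)$ are precisely the forbidden self-loops of (M2) and the forced edges of $\hat M$, the set $\hat E$ is simple and disjoint from $\hat M$. The degrees come out right by flow conservation, mirroring the case analysis of \Cref{lem:proof of flow reduction}: a matched node $x_i$ forwards its $d_i-1$ incoming units along $d_i-1$ distinct unit edges, which together with the one forced edge of $\hat M$ gives $\deg_{\hat G}(v_i)=d_i$; an unmatched node $x_i$ forwards $d_i$ units and gains no forced edge. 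The argument for the $y_j$ nodes is symmetric, so $\hat G$ realizes $(d,d)$ (establishing (M1)), while (M2) and (M3) hold by construction.

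For the $(\Rightarrow)$ direction I would start from a $\nu$-matched realization $\hat G$ and route one unit of flow on $(x_i,y_j)$ for each \emph{non-matching} edge $(v_i,w_j)\in\hat E$, saturating all source and sink edges. Conservation at $x_i$ holds because $v_i$ has $d_i-1$ non-matching incident edges when $i\in[1,2\nu]$ (one incident edge being spent on $\hat M$) and $d_i$ when $i>2\nu$; the $y_j$ side is symmetric. One must check that each such non-matching edge indeed has a home in $G_{d,\nu}$: the only missing pairs are the forbidden self-loops, ruled out by (M2), and the forced pairs $(i,2\nu-i+1)$ with $i\in[1,2\nu]$, which are exactly the edges of $\hat M$ and hence not among the non-matching edges being routed. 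The resulting flow has value $\sum_{i=1}^n d_i-2\nu$, which is maximal since it saturates the source cut.

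The main obstacle is the index bookkeeping that guarantees the construction is internally consistent: one must verify that the involution $i \mapsto 2\nu-i+1$ on $[1,2\nu]$ makes $\hat M$ a genuine (vertex-disjoint) matching, that no edge of $\hat M$ coincides with a forbidden self-loop $(v_i,w_i)$ (which would force the non-integer index $i=(2\nu+1)/2$), and that the forced edges and the flow-induced edges never collide, so that $\hat G$ remains simple and $\hat M$ remains a matching disjoint from the flow edges. Once this correspondence between the excluded flow-graph edges and $\hat M$ is pinned down, both directions reduce to the same conservation accounting already carried out for \domset in \Cref{lem:proof of flow reduction}, and combining with \Cref{lem: MM equivalence} then yields the maximum matching realization.
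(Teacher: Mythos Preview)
Your proposal is correct and follows essentially the same approach as the paper's own proof: both directions set up the natural bijection between non-matching edges of $\hat G$ and unit-flow edges of $G_{d,\nu}$, with the forced matching $\hat M$ accounting exactly for the missing capacity $2\nu$ on the source and sink edges, and verify (M1)--(M3) and flow conservation by the same case split on matched versus unmatched nodes. Your additional bookkeeping remarks (that the involution $i\mapsto 2\nu-i+1$ avoids the diagonal and that flow edges cannot collide with $\hat M$) make explicit some points the paper leaves implicit, but the argument is otherwise identical.
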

\begin{proof}
Suppose the value of the maximum $s-t$ flow in $G_{d, 2\nu}$ is $\sum_{i=1}^n d_i-2\nu$. 
Define the bipartite graph $\hat{G} = (V, W, \hat{E})$ as follows:
\begin{compactenum}
\item For each node $x_i \in X$, create a corresponding node $v_i \in V$.
\item For each node $y_j \in Y$, create a corresponding node $w_j \in W$.
\item For every $1\le i, j\le n$, if $\FLOW(x_i,y_j)=1$, then create an edge $(v_i, w_j)$ and add it to $\hat{E}$.
\item  For every $1\le i\le 2\nu$ create an edge $(v_i, w_{2\nu-i+1})$ and add it to $\hat{E}$.
\end{compactenum}

It remains to verify that $\hat{G}$ correctly realizes $(d,d)$ and is $\nu$-matched.
The degree of $v_i$ in $\hat{G}$ equals to the number of edges $(x_i,y_j)$ that carry flow if $2\nu \le i \le n$ and one more than the corresponding flow if $1 \le i \le 2 \nu$. Since $x_i$ conserves flow, and since all the edge $(s,x_i)$ are saturated, $\deg(v_i) = d_i$. Similarly, $\deg(w_j) = d_j$ for every $j \in [1,n]$, so $\hat{G}$ satisfies (M1). Since $G_{d,\gamma}$ does not have edges $(x_i,y_i)$ for any $i \in [1,n]$, it follows that $\hat{G}$ does not have edges $(v_i, w_i)$ satisfying (M2). Finally, the matching $\hat{M} = \{(v_i, w_{2\nu-i+1)} \mid 1 \le i \le 2\nu\}$ was added in step 4, so $\hat{G}$ satisfies (M3).

Conversely, suppose $\hat{G} = (V, W, \hat{E})$, where $V = \{v_1, \ldots, v_n\}$ and $W = \{w_1, \ldots, w_n\}$, 
is a $2\nu$-matched bipartite graph realizing $(d, d)$ with the matching $\hat{M} = \{(v_i, w_{2\nu-i+1}) \mid 1 \le i \le 2\nu\}$.
We need to define a flow function on $G_{d,\nu}$ and prove that its value 
equals $\sum_{i=1}^n d_i - 2\nu$. 
Obviously, maximum flow cannot exceed this sum, thus it is enough to construct flow with aforementioned value.

\begin{enumerate}
\item \textbf{Saturate Edges from Source and to Sink}:
\begin{align*}
\FLOW(s,x_i) & \gets 
\begin{cases}
d_i - 1 & i \in [1,2\nu], \\
d_i     & i \in [2\nu+1,n].
\end{cases}
& 
\FLOW(y_j,t) & \gets 
\begin{cases}
d_j - 1 & j \in [1,2\nu], \\
d_j     & j \in [2\nu+1,n].
\end{cases}
\end{align*}
\item 
\textbf{Flow Through Edges}:
\\
For every $i \in [1,n]$ and $j \in [1,n]$ such that $(v_i, w_j) \in \hat{E} \setminus \hat{M}$, set $\FLOW(x_i, y_j) \leftarrow 1$,
\item 
\textbf{Completing the flow function}: 
\\
Set the flows of all other edges to $0$.
\end{enumerate}

We need to argue that the defined flow is legal. 
Note that by construction, the total flow out of the source $s$ is $\sum_{i=1}^n d_i - 2\nu$ 
and the total flow into the sink $t$ is $\sum_{j=1}^n d_j - 2\nu$. It remains to verify that 
for every vertex except $s$ and $t$, the incoming and outgoing flows are equal.
\begin{itemize}
\item \textbf{Node $x_i$}: 
By construction, the incoming flow at $x_i$ is $\FLOW(s, x_i) = d_i-1$ if $x_i \in X_M$ and $\FLOW(s, x_i) = d_i$ if $x_i \in X_R$. 
As for the outgoing flow, recall that 
there are exactly $d_i-1$ indices $j$ for which $(v_i,w_j)\in \hat{E} \setminus \hat{M}$ if $i \in [1,2\nu]$ and $d_i$ 
such indices if $i \in [2\nu+1,n]$. 
Hence, $x_i$ conserves flow.
\item \textbf{Node $y_j$}:
By construction, the outgoing flow at $y_j$ is $\FLOW(y_j, t) = d_j-1$ if $x_j \in X_M$ and $\FLOW(y_j, t) = d_j$ if $y_j \in X_R$. 
As for the incoming flow, recall that 
there are exactly $d_j-1$ indices $i$ for which $(v_i,w_j)\in \hat{E} \setminus \hat{M}$ if $j \in [1,2\nu]$ 
and $d_j$ such indices if $j \in [2\nu+1,n]$. 
Hence, $y_j$ conserves flow.
\end{itemize}
The claim follows.
\end{proof}

The running time analysis of the realization algorithm for \matchDR 
is similar to the one for \domsetDR.
We conclude the section with the following result.

\begin{theorem}
\label{thm:real-MM-fast}
There exists an $O(n^3)$ time algorithm for constructing a realization of a 
given graphic sequence $d$ that also has a matching of maximum size 
(among all possible realizations of $d$).
\end{theorem}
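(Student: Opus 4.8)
The plan is to assemble the three reductions established in this section into a single pipeline and then bound its running time, mirroring the complexity analysis already carried out for \domsetDR. By \Cref{lem: MM equivalence}, $\match(d)$ equals the largest $\nu$ for which a $\nu$-matched realization of $(d,d)$ exists, and by \Cref{lem: matching to flow} this is in turn the largest $\nu$ for which the maximum $s$-$t$ flow in $G_{d,\nu}$ attains the value $\sum_{i=1}^n d_i - 2\nu$. Since the predicate ``$\match(d)\ge\nu$'' is monotone in $\nu$, one could in principle binary-search for $\nu^\ast=\match(d)$ using $O(\log n)$ flow computations; however, to reach the claimed $O(n^3)$ bound I would instead first determine $\nu^\ast=\match(d)$ directly, using the known Erd\H{o}s--Gallai type characterization of~\cite{EKMT24}, which consists of $O(n)$ inequalities and can therefore be evaluated in $O(n)$ time (exactly as in the analysis leading to \Cref{thm:real-MDS-fast}). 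This replaces the outer search over $\nu$ by a single invocation of the flow machinery.

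Given $\nu^\ast$, the algorithm proceeds as follows. First build the flow graph $G_{d,\nu^\ast}$ and compute an integral maximum $s$-$t$ flow; because $\match(d)=\nu^\ast$, this flow has value $\sum_{i=1}^n d_i - 2\nu^\ast$, so by the forward direction of \Cref{lem: matching to flow} it yields a $\nu^\ast$-matched bipartite realization $\hat G$ of $(d,d)$. Next, feed $\hat G$ into the reduction of Steps 1--6: form the half-integral weighted graph $G^\omega$ and the auxiliary graph $\Ghalf$, partition $\Ghalf$ into cycles, and eliminate the even cycles and the (paired) odd cycles while preserving the weighted degrees and the unit weights on the inverted matching $M^\omega$ guaranteed by Eq.~\eqref{eq: matching edges}. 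By \Cref{obs: G^omega realizes d for MM} the resulting integral $G^\omega$ realizes $d$, and rounding it to a simple graph $G$ with an edge $(i,j)$ whenever $\omega(i,j)=1$ produces a realization whose matching $M=M^\omega$ has size $\nu^\ast=\match(d)$, which is optimal by the definition of $\match(d)$.

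For the running time, the flow graph $G_{d,\nu^\ast}$ has $O(n)$ nodes and $O(n^2)$ edges with integral capacities, so a maximum integral flow is computable in $O(n^3)$ time via Orlin's algorithm~\cite{Orlin13}, and reading off the adjacency matrix of $\hat G$ from the flow costs $O(n^2)$. The conversion steps are bounded exactly as in the \domsetDR analysis: constructing the adjacency matrices of $G^\omega$ and $\Ghalf$ takes $O(n^2)$; the Eulerian cycle partition of the even graph $\Ghalf$ (\Cref{obs: G1/2 is even graph for MM}) runs in time linear in its $O(n^2)$ edges; eliminating the even cycles is $O(n^2)$; and processing the $O(n)$ pairs of odd cycles—each pair handled by selecting $x,y$ via \Cref{obs: right x and y for MM} in $O(n)$ time and relabelling edge weights in $O(|C|+|C'|)$ time—together with the final rounding is also $O(n^2)$. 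Hence the flow computation dominates and the total time is $O(n^3)$.

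The main obstacle here is not conceptual—correctness is immediate once the three equivalences (\Cref{lem: MM prefix}, \Cref{lem: MM equivalence}, \Cref{lem: matching to flow}) are chained together—but lies in the complexity bookkeeping: one must avoid a naive outer loop over all candidate values of $\nu$, since each candidate would trigger an $O(n^3)$ flow computation and yield only an $O(n^3\log n)$ bound. Folding the characterization of~\cite{EKMT24} into the pipeline removes this factor, and the remaining care is simply to verify that every transformation step for \matchDR is no more expensive than its \domsetDR counterpart. This holds because the \matchDR conversion is in fact structurally simpler: it omits the 2-dom-path elimination phase entirely, and its cycles are pairwise disjoint by construction, so only the disjoint (Case 2) odd-cycle handling arises.
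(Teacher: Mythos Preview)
Your proposal is correct and matches the paper's approach: the paper simply states that ``the running time analysis of the realization algorithm for \matchDR is similar to the one for \domsetDR,'' and your argument fills in exactly those details, using the characterization of~\cite{EKMT24} to precompute $\nu^\ast$ (as the paper also flags in the introduction) and then bounding each conversion step as in the \domsetDR analysis. One minor imprecision: determining $\nu^\ast$ via the characterization takes $O(n\log n)$ (binary search over $\nu$ with an $O(n)$ check per candidate), not $O(n)$ as you write, but this is immaterial to the $O(n^3)$ bound.
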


%%%%%%%%%%%%%%%%%%%%%%%%%%%%%
\clearpage

\bibliographystyle{abbrv}
\bibliography{realizations}

%%%%%%%%%%%%%%%%%%%%%%%%%%%%%
\end{document}